\documentclass[aps,prx,twocolumn,superscriptaddress,longbibliography,nofootinbib,floatfix]{revtex4-2}

\usepackage[T1]{fontenc}
\usepackage[utf8]{inputenc}
\usepackage{lmodern}
\usepackage{amssymb}

\usepackage{graphicx}
\usepackage{dcolumn}
\usepackage{bm}
\usepackage{xcolor}
\usepackage{amsthm}
\usepackage{physics}
\usepackage{mathtools}
\usepackage[disable=footnote]{microtype}
\usepackage{placeins}
\usepackage{tikz}
\usetikzlibrary{arrows.meta}
\allowdisplaybreaks

\theoremstyle{plain}
\newtheorem{lemma}{Lemma}
\newtheorem{proposition}{Proposition}
\newtheorem{theorem}{Theorem}
\newtheorem{corollary}{Corollary}
\newtheorem{result}{Result}
\begin{document}

\title{Semi-Device-Independent Quantum Key Distribution from Operational Assumptions}
\author{Anubhav Chaturvedi}
\email{anubhav.chaturvedi@ug.edu.pl}
\affiliation{Division of Quantum Optics and Information, Institute of Theoretical Physics and Astrophysics, Faculty of Mathematics, Physics and Informatics, University of Gda\'nsk, 80-308 Gda\'nsk, Poland}
\author{Giuseppe Viola}
\affiliation{Naturwissenschaftliche-Technische Fakultät, Universität Siegen, 57068 Siegen, Germany}
\author{Ekta Panwar}
\affiliation{Slovak Academy of Sciences, Institute of Physics, 84104, Bratislava, Slovakia}
\author{Tushita Prasad}
\affiliation{International Centre for Theory of Quantum Technologies, University of Gda\'nsk, 80-309 Gda\'nsk, Poland}
\author{Debashis Saha}
\affiliation{Department of Physics, School of Basic Sciences, Indian Institute of Technology Bhubaneswar, Odisha 752051, India}

\begin{abstract}

Semi-device-independent quantum key distribution leaves the measurement devices uncharacterized while placing a trusted assumption on Alice's source. We formulate this source assumption operationally on Alice's four-preparation ensemble as a scalar bound on one of four physically motivated operational source tasks: four-state discrimination, parity discrimination, or their normalized composites with state exclusion. For the two-bit random-access code, we derive the exact classical frontier for each of the four source assumptions. Numerically, the BB84 strategy attains the maximal quantum deviation from all four frontiers, while the preparation-depolarized BB84 family and the direct-sum leakage family trace complementary branches of the arbitrary-dimensional quantum boundary for the two exclusion-assisted assumptions. Because all four task values are monotone under input-independent quantum channels, the same scalar source bound constrains every Bob--Eve extension compatible with the complete observed behavior. Using a three-setting extension that separates RAC testing from key generation, we obtain two dimension-independent security certificates over this feasible set: lower bounds on the conditional min-entropy and conditional von Neumann entropy, obtained respectively by direct optimization of Eve's key-guessing probability and by prepare-and-measure semidefinite relaxations based on the Brown--Fawzi--Fawzi variational bound. The exclusion-assisted assumptions certify positive key rates down to nearly vanishing preparation visibility, far beyond four-state or parity discrimination alone. Under direct-sum leakage, all four independently optimized rate bounds remain positive at every sampled point with incomplete leakage and vanish only at complete leakage. These results show that robust semi-device-independent security depends not only on what Eve can identify, but also on what she can exclude.
\end{abstract}

\maketitle

\section{Introduction}

Quantum key distribution derives secrecy from physical law, but its guarantee remains conditional on the device model used in the proof. Device-independent quantum key distribution (DI-QKD) removes the need to trust internal device descriptions by certifying security directly from observed Bell-nonlocal correlations \cite{AcinBrunnerGisinMassarPironioScarani2007DIQKD,PironioAcinBrunnerGisinMassarScarani2009DIQKD,MasanesPironioAcin2011CausalDIQKD,VaziraniVidick2014FullyDIQKD,ArnonFriedmanDupuisFawziRennerVidick2018EAT,PollycenoChaturvediRajDieguezPawlowski2025DIQKDMonogamy}. This stronger guarantee comes at a steep experimental cost: practical implementations must certify loophole-free Bell nonlocality in a regime where every loss event, noise process, and finite-data fluctuation must be treated as potentially adversarial \cite{GiustinaEtAl2015LoopholeFreeBell,ShalmEtAl2015LoopholeFreeBell,HensenEtAl2015LoopholeFreeBell,MurtaVanDamRibeiroHansonWehner2019TowardsDIQKD,ZapateroEtAl2023AdvancesDIQKD,MiklinChaturvediBourennanePawlowskiCabello2022DetectionEfficiency,ChaturvediViolaPawlowski2024ArbitraryDistances,GigenaPanwarScalaAraujoFarkasChaturvedi2025TiltedStrategies}. Semi-device-independent quantum cryptography occupies the resulting middle ground: it leaves the measurements uncharacterized while replacing Bell certification with a trusted assumption on the source.

\begin{figure}[t]
\centering
\includegraphics[width=\columnwidth]{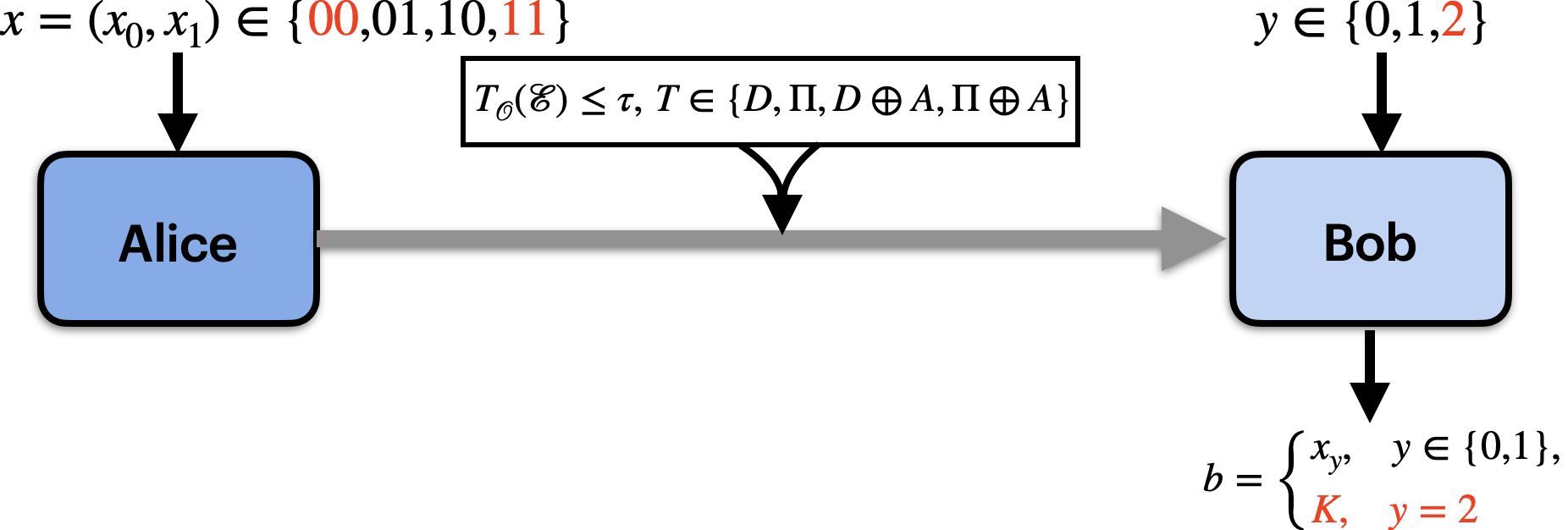}
\caption{Three-setting retained-key protocol under an operational source
assumption. Alice emits one of four preparations satisfying
\(T_{\mathcal O}(\mathcal E)\leq\tau\), with
\(T\in\{D,\Pi,D\oplus A,\Pi\oplus A\}\); in the security analysis
\(\mathcal O=\mathcal Q\). Bob uses \(y=0,1\) for RAC testing, where the
target is \(x_y\), and \(y=2\) for retained-key rounds on the branch
\(x_Z(0)=00\), \(x_Z(1)=11\), where the target is the raw key bit \(K\).}
\label{fig:protocol-schematic}
\end{figure}

The original SDI-QKD construction of Paw{\l}owski and Brunner assumes an upper bound on the Hilbert-space dimension of Alice's transmitted system while leaving Bob's measurements uncharacterized \cite{PawlowskiBrunner2011SDIQKD}. In its two-bit random-access-code protocol, Alice receives $x=(x_0,x_1)$, encodes it into a state supported on a Hilbert space of dimension at most $d$, and Bob chooses $y\in\{0,1\}$ to recover $x_y$. Apart from this dimension bound, neither Alice's preparations nor Bob's measurements are specified. The model has since supported further SDI security proofs, connections with nonlocal resources, and experimental certification protocols \cite{ChaturvediRayVeynarPawlowski2018SDIQKD,ChaturvediPawlowskiHorodecki2017RACNonlocalResources,RajPrasadChaturvediEtAl2026WaveParticleSDI,WoodheadPironio2015PMCHSHQubitBound,LiEtAl2011SDIRNGNoEntanglement,LiPawlowskiYinGuoHan2012SDIRACRandomness,LunghiEtAl2015SelfTestingQRNG,PollycenoFreudenheimNogueiraChaturvediRabeloPawlowski2026CommunicationConstrained}. For the symmetric BB84 RAC behavior $P_B(v)=\tfrac12+\tfrac{v}{2\sqrt2}$, where $v$ is the net visibility of the observed RAC behavior, the original criterion has critical visibility $v_{\rm c}^{\rm orig}=(\sqrt2+\sqrt6)/4\approx0.9659$, while its recent refinement lowers this to $v_{\rm c}^{\rm ref}=2\sqrt2/3\approx0.9428$ \cite{PawlowskiBrunner2011SDIQKD,RajPrasadChaturvediEtAl2026WaveParticleSDI}. Both criteria therefore remain confined to a narrow high-visibility regime. Overcoming these stringent thresholds requires addressing three distinct limitations of the existing framework: the two-setting RAC protocol, the entropy certificate used to bound Eve's information, and the dimension-only description of Alice's source. At the protocol level, Bob's two RAC settings are used both to test the observed correlations and to generate the raw key, so even the ideal BB84 realization retains the intrinsic RAC decoding error. At the security-proof level, the entropy certificate converts these correlations into a bound on Eve's uncertainty and thereby controls the certified noise tolerance. At the source level, a dimension bound constrains the size of the carrier Hilbert space but does not directly quantify the information operationally accessible from the preparation ensemble. Exact finite-dimensional support is also not a natural experimental control for optical sources, whose emitted states occupy an a priori unbounded Fock space and can instead be constrained through mean-energy or mean-photon-number bounds \cite{VanHimbeeckWoodheadCerfGarciaPatronPironio2017NaturalAssumptions,GallegoBrunnerHadleyAcin2010DimensionWitnesses,BrunnerNavascuesVertesi2013DimensionStateDiscrimination,BowlesQuintinoBrunner2014IndependentDevices,PauwelsPironioWoodheadTavakoli2022AlmostQudits,RochCarcellerPauwelsPironioTavakoli2025PhotonNumberConstraints}.

The relevant question is therefore not how large the carrier space is, but how much information Alice's preparation ensemble makes operationally accessible. Physical source models address this question through trusted bounds on physical properties of the emitted states. Beyond energy and photon number, assumptions based on preparation overlaps or fidelities to target states underpin state-discrimination-based randomness generation, measurement certification, receiver-device-independent QKD, and restricted-distrust models \cite{BraskEtAl2017SDIQRNGUSD,ShiCaiBraskZbindenBrunner2019MinimumOverlap,IoannouEtAl2022RDIQKD,IoannouSekatskiAbbottRossetBancalBrunner2022RDIProtocols,Tavakoli2021RestrictedDistrust,BraskBrunnerPauwelsRuscaTavakoli2026PMReview,RochCarcellerPauwelsPironioTavakoli2025PhotonNumberConstraints,TavakoliCruzeiroBraskGisinBrunner2020InformationallyRestricted,TavakoliCruzeiroWoodheadPironio2022InformationallyRestricted}. Despite their different physical content, a broad class of these assumptions has the same operational consequence: an upper bound on the optimal probability with which any receiver can identify the emitted state \cite{PauwelsPironioTavakoli2025InformationCapacity}; for Alice's four-preparation ensemble, we denote this four-state discrimination probability by $D$.

An operational source assumption may concern a function of Alice's string $x$ rather than the full string. For parity, let $\bar\rho_e:=\frac12\sum_{x:\,x_0\oplus x_1=e}\rho_x$ be the uniform mixture of the preparations in parity class $e\in\{0,1\}$, and let $\Pi$ be the optimal probability of distinguishing $\bar\rho_0$ from $\bar\rho_1$. The minimum value $\Pi=\tfrac12$ is exact parity obliviousness: under preparation noncontextuality it yields the classical parity-oblivious RAC bound violated by quantum theory \cite{SpekkensBuzacottKeehnTonerPryde2009POM,ChaillouxKerenidisKunduSikora2016ParityObliviousRAC,AmbainisBanikChaturvediKravchenkoRai2019ParityObliviousDLevel,ChaturvediFarkasWright2021ContextualityScenarios,SahaChaturvedi2019PreparationContextualityCommunication,VaisakhPatraJanpanditSenBanikChaturvedi2021MUBSRAC}, while under the standard Bell-to-RAC relabelling with uniform Alice marginals it is the prepare-and-measure form of no-signalling \cite{WrightFarkas2023BellContextualityMap}. More generally, $\Pi$ is the optimal one-shot probability of inferring the parity of Alice's string $x$, which under this relabelling corresponds to Alice's measurement setting in the Bell scenario.

Four-state and parity discrimination quantify what a receiver can identify. A measurement may fail to identify the emitted state and nevertheless rule out one candidate state. We denote the optimal probability of doing so by $A$, the optimized conclusive-exclusion probability of the preparation ensemble \cite{CavesFuchsSchack2002Compatibility,BandyopadhyayJainOppenheimPerry2014ConclusiveExclusion}. Antidistinguishability underpins the Pusey--Barrett--Rudolph theorem \cite{PuseyBarrettRudolph2012Reality} and has established roles in $\psi$-ontology, contextuality, state geometry, and operational communication \cite{Leifer2014PsiOntologyReview,LeiferDuarte2020Antidistinguishability,SrikumarBartlettKaranjai2026ContextualityAntidistinguishability,JohnstonRussoSikora2025AntidistinguishabilityBounds,PanditHazraMannaChaturvediSaha2026AntiDistConstraints,ChaturvediSaha2020OntologicallyOperationally,MannaChaturvediSaha2024DistinguishabilityComplexity,RayChaturvediSaha2025GamblingTwoQubits,MannaPanditChaturvediSaha2026EntanglementAssistedDistinguishability}. Every four-state discrimination measurement induces an exclusion strategy, yet $D$ does not determine $A$; exclusion therefore captures operational information not fixed by distinguishability alone. For phase-randomized optical encodings with a calibrated common vacuum contribution, the same $x$-independent component constrains identification and exclusion jointly, providing a direct physical route to the exclusion-assisted assumptions considered here \cite{CaoZhangLoMa2015DiscretePhaseWCS}.

We therefore formulate four operational source assumptions, each specified by a single scalar upper bound on $D$, $\Pi$, $D\oplus A=(D+A)/2$, or $\Pi\oplus A=(\Pi+A)/2$. For each assumption, we derive the exact classical RAC frontier without restricting the classical message alphabet and obtain certified arbitrary-dimensional pointwise upper bounds on the quantum deviation. Any frontier violation certifies that, in every compatible quantum realization, Alice's preparations cannot all commute and Bob's measurements must be incompatible \cite{ChaturvediPawlowskiSaha2026EpistemicIncompleteness}. At the reported numerical precision, the BB84 RAC strategy attains the maximal quantum deviation from all four classical frontiers; for each exclusion-assisted assumption, the preparation-depolarized BB84 family, parameterized by the preparation visibility $\nu\in[0,1]$, traces the quantum boundary below the corresponding BB84 task value, while the direct-sum leakage family, parameterized by the leakage fraction $\ell\in[0,1]$, traces it above. The two branches meet at ideal BB84, where $(\nu,\ell)=(1,0)$. Under receiver-side depolarization of the reference decoders used to generate the observed behavior, both exclusion-assisted deviations remain positive for every nonzero preparation visibility exactly when $\nu_{\rm M}>1/\sqrt2$. These complementary boundary families define the two source regimes examined in the security analysis.

These operational source assumptions become security constraints through the three-setting retained-key protocol summarized in Fig.~\ref{fig:protocol-schematic}. Bob uses \(y\in\{0,1\}\) for RAC testing, where the target is \(x_y\). In key rounds, Alice samples \(K\in\{0,1\}\), emits the preparation with \(x=x_Z(K)\), with \(x_Z(0)=00\) and \(x_Z(1)=11\), and Bob uses the additional uncharacterized setting \(y=2\), where the target is \(K\). The ideal BB84 realization has zero key-basis error on this retained branch. By data processing, \(D\), \(\Pi\), \(D\oplus A\), and \(\Pi\oplus A\) are nonincreasing under input-independent quantum channels, so the same scalar source bound applies to every Bob--Eve extension of the emitted ensemble. Security is then certified against all unrestricted-dimensional extensions compatible with the complete three-setting behavior, rather than with the RAC score alone. Over this feasible set, we obtain two dimension-independent security certificates. Direct optimization of Eve's key-guessing probability yields a lower bound on the conditional min-entropy of the raw key given Eve. The second certificate uses prepare-and-measure semidefinite relaxations for conditional entropy based on the Brown--Fawzi--Fawzi variational bound (PM-BFF) to lower-bound the von Neumann entropy term in the Devetak--Winter rate \cite{DevetakWinter2005SecretKeyDistillation,BrownFawziFawzi2021ConditionalEntropies,BrownFawziFawzi2024ConditionalVonNeumannEntropy}. At finite level, the PM-BFF certificate is tightened by enforcing the Eve-side Sylvester stationarity relations and constructing norm localizers from the sharp operator-norm bound on the auxiliary operators.

The numerical comparison isolates two distinct sources of robustness. For a fixed operational source assumption and observed behavior, the PM-BFF conditional-entropy certificate enlarges the positive-rate region beyond direct min-entropy certification. Already under the coarse dimension-implied bound $D\leq\tfrac12$, the three-setting PM-BFF analysis lowers the critical visibility below the earlier RAC-based thresholds \cite{PawlowskiBrunner2011SDIQKD,RajPrasadChaturvediEtAl2026WaveParticleSDI}; using the exact distinguishability $D^{(\nu)}$ of the emitted ensemble lowers it further. A much larger gain comes from the source assumption itself: under preparation depolarization, bounds on $D$ or $\Pi$ certify key only at high source visibility, whereas $D\oplus A$ and $\Pi\oplus A$ remain key-positive down to nearly vanishing $\nu$. This exclusion advantage survives receiver-side depolarization of the reference decoders used to generate the observed behavior. Along the complementary direct-sum leakage family, the retained key branch remains error free while the emitted ensemble increasingly reveals $x$; all four independently optimized rate bounds remain positive at every sampled point with incomplete leakage and vanish only at complete leakage. The hierarchy is therefore structural: PM-BFF strengthens the entropy certificate on a fixed Bob--Eve feasible set, while the exclusion-assisted assumptions remove extensions that identification-only bounds leave admissible.

\section{The random-access code under operational source assumptions}
\label{sec:rac-source}

In a two-bit random-access code, Alice receives two uniformly distributed bits
$x=(x_0,x_1)\in\mathcal X:=\{0,1\}^2$ and implements the preparation $P_x$. Bob chooses which bit to retrieve, $y\in\{0,1\}$, performs the corresponding measurement $M_y$, and outputs $b\in\{0,1\}$, succeeding when $b=x_y$. We write
$\mathcal E=\{P_x\}_{x\in\mathcal X}$ for Alice's four-preparation ensemble and
$\mathbf M=\{M_0,M_1\}$ for Bob's decoding strategy.

An operational theory specifies the allowed preparations and measurements together with the rule assigning outcome probabilities to each preparation--measurement pair. Thus, $P_x$ and $M_y$ determine the conditional probability
$p(b|x,y):=p(b|P_x,M_y)$. With uniformly distributed $x$ and $y$, the average RAC score is
\begin{equation}
\label{eq:RAC-tested}
R(\mathcal E,\mathbf M)
=
\frac{1}{8}
\sum_{x\in\mathcal X}
\sum_{y=0}^{1}
p(b=x_y|x,y).
\end{equation}

We first characterize the classical and quantum strategies that generate the RAC behavior. We then define the optimized information-retrieval tasks and formulate the four operational source assumptions as scalar upper bounds, before showing how distinct physical restrictions on Alice's source imply these bounds.

\subsection{Classical and quantum strategies}

The admissible operational strategies are specified by one of two candidate theories
$\mathcal O\in\{\mathcal C,\mathcal Q\}$: classical theory $\mathcal C$ or quantum theory $\mathcal Q$.

In classical theory $\mathcal C$, each preparation $P_x$ is described by a conditional distribution $q(\lambda|x)$ over a finite but otherwise unrestricted message alphabet $\Lambda$, and each decoding measurement $M_y$ by a response function $\xi(b|y,\lambda)$. The resulting behavior is
\[
p(b|x,y)
=
\sum_{\lambda\in\Lambda}
q(\lambda|x)\,\xi(b|y,\lambda).
\]

In quantum theory $\mathcal Q$, each preparation $P_x$ is represented by a density operator $\rho_x$ on an arbitrary Hilbert space, and each decoding measurement $M_y$ by a binary POVM
$M_y=\{M_{b|y}\}_{b\in\{0,1\}}$, with
$M_{b|y}\succeq0$ and $\sum_bM_{b|y}=\mathbb I$. The resulting behavior is
\[
p(b|x,y)
=
\operatorname{Tr}\!\left(\rho_xM_{b|y}\right).
\]

Equivalently, a classical strategy is a quantum strategy in which all preparation states and measurement effects are diagonal in a common basis. No a priori bound is imposed on either the classical message alphabet or the quantum Hilbert-space dimension.

\subsection{Optimized information-retrieval tasks and operational assumptions}

The RAC score $R(\mathcal E,\mathbf M)$ depends on the decoding measurements actually implemented by Bob. An operational source assumption instead concerns the preparation ensemble $\mathcal E$ alone: it bounds the optimum of a prescribed information-retrieval task over all measurements admitted by $\mathcal O$. Let
$\mathsf M_{\mathcal O}(\Omega)$ denote the set of measurements admitted by $\mathcal O$ with outcome alphabet $\Omega$.

We consider three information-retrieval tasks. Four-state discrimination asks which member of the four-preparation ensemble was emitted, parity discrimination asks for $x_0\oplus x_1$, and state exclusion asks for a value $z\neq x$, i.e. one ensemble member that was not emitted. Their optimal success probabilities are
\begin{align}
D_{\mathcal O}(\mathcal E)
&=
\frac{1}{4}
\sup_{N\in\mathsf M_{\mathcal O}(\mathcal X)}
\sum_{x\in\mathcal X}
p_N(x|P_x),
\label{eq:source-task-definitions}
\\[1mm]
\Pi_{\mathcal O}(\mathcal E)
&=
\frac{1}{4}
\sup_{F\in\mathsf M_{\mathcal O}(\{0,1\})}
\sum_{x\in\mathcal X}
p_F(x_0\oplus x_1|P_x),
\notag
\\[1mm]
A_{\mathcal O}(\mathcal E)
&=
\frac{1}{4}
\sup_{G\in\mathsf M_{\mathcal O}(\mathcal X)}
\sum_{x\in\mathcal X}
\sum_{\substack{z\in\mathcal X\\ z\neq x}}
p_G(z|P_x).
\notag
\end{align}
We call $D_{\mathcal O}$ the four-state discrimination probability,
$\Pi_{\mathcal O}$ the parity discrimination probability, and
$A_{\mathcal O}$ the state-exclusion probability. These optima are linked by admissible post-processings of the measurement outcomes. Coarse-graining a four-state discrimination measurement
$\{N_x\}_{x\in\mathcal X}$ according to parity,
$F_e=\sum_{x:\,x_0\oplus x_1=e}N_x$, gives
$\Pi_{\mathcal O}\geq D_{\mathcal O}$, while splitting each parity outcome uniformly between its two compatible values of $x$,
$N_x=F_{x_0\oplus x_1}/2$, gives
$D_{\mathcal O}\geq\Pi_{\mathcal O}/2$.

The same measurements also induce exclusion strategies. A four-state discrimination measurement gives
$G_x=(\mathbb I-N_x)/3$, while a parity measurement gives
$G_z=F_{1\oplus z_0\oplus z_1}/2$. In the latter construction, a correct parity outcome excludes the reported value of $x$ with certainty, whereas an incorrect parity outcome succeeds with probability $1/2$. Therefore,
\begin{equation}
\label{eq:task-relations}
\begin{aligned}
\frac{\Pi_{\mathcal O}(\mathcal E)}{2}
&\leq
D_{\mathcal O}(\mathcal E)
\leq
\Pi_{\mathcal O}(\mathcal E),
\\[1mm]
A_{\mathcal O}(\mathcal E)
&\geq
\max\!\left\{
\frac{2+D_{\mathcal O}(\mathcal E)}{3},
\frac{1+\Pi_{\mathcal O}(\mathcal E)}{2}
\right\}.
\end{aligned}
\end{equation}

All bounds in Eq.~\eqref{eq:task-relations} are tight. The no-information strategy gives
$(D_{\mathcal C},\Pi_{\mathcal C},A_{\mathcal C})=(1/4,1/2,3/4)$, saturating
$D_{\mathcal C}=\Pi_{\mathcal C}/2$ and both lower bounds on $A_{\mathcal C}$, while transmission of $x$ in full gives
$(1,1,1)$, saturating
$D_{\mathcal C}=\Pi_{\mathcal C}$ and both exclusion bounds. The state-exclusion task also carries information not fixed by the identification optima.

\begin{proposition}[Exclusion is not fixed by identification]
\label{prop:exclusion-not-fixed}
There exist classical ensembles \(\mathcal E\) and \(\mathcal E'\) with
identical values of \(D_{\mathcal C}\) and \(\Pi_{\mathcal C}\), but with
different values of \(A_{\mathcal C}\). Consequently, \(A\) is not
determined by the pair \((D,\Pi)\), even within the classical subset of
quantum theory.
\end{proposition}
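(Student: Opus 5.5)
The plan is to construct the two ensembles explicitly and to evaluate the three classical optima in closed form. For a classical ensemble $q(\lambda|x)$, every classical measurement is a response function of $\lambda$, so each optimum decouples message by message:
\[
D_{\mathcal C}=\tfrac14\sum_\lambda\max_x q(\lambda|x),\qquad
\Pi_{\mathcal C}=\tfrac14\sum_\lambda\max_{e}\sum_{x:\,x_0\oplus x_1=e}q(\lambda|x),
\]
\[
A_{\mathcal C}=\tfrac14\sum_\lambda\Bigl(\sum_x q(\lambda|x)-\min_x q(\lambda|x)\Bigr)=1-\tfrac14\sum_\lambda\min_x q(\lambda|x).
\]
For each $\lambda$, the optimal guess is the $x$ or parity with the largest weight, and the optimal exclusion is the $x$ with the smallest weight. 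I would justify these formulas first, since they are what make the comparison a direct computation.

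I then combine three building blocks. A constant message carries no information and has $\min_x q>0$. A parity-revealing message $\lambda_e$ satisfies $q(\lambda_e|x)\propto[x_0\oplus x_1=e]$. An exclusion message $\lambda_z$ satisfies $q(\lambda_z|x)=\tfrac13[z\neq x]$. Both informative blocks have $\min_x q=0$, so they achieve perfect exclusion, but they contribute differently to $D$ and $\Pi$.

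For $\mathcal E'$, I take the pure exclusion ensemble, in which Alice sends a uniformly random $z\neq x$. Each $\lambda_z$ contributes $\max_x q=\tfrac13$, giving $D_{\mathcal C}=\tfrac13$. Each $\lambda_z$ contributes parity-class weight $\tfrac23$, because the class not containing $z$ has two admissible values, giving $\Pi_{\mathcal C}=\tfrac23$. Finally, $A_{\mathcal C}=1$. For $\mathcal E$, I take the ensemble that reveals the parity with probability $\tfrac13$ and otherwise sends a constant symbol. The parity messages contribute $2\cdot\tfrac13$ to $\sum_\lambda\max_x q$, and the constant message contributes $\tfrac23$, so $D_{\mathcal C}=\tfrac13$. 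For parity, the parity messages contribute $2\cdot\tfrac23$ and the constant message contributes $\tfrac43$, so $\Pi_{\mathcal C}=\tfrac23$. The constant message has $\min_x q=\tfrac23$, so $A_{\mathcal C}=1-\tfrac16=\tfrac56\neq1$. This proves the proposition: $A$ is not a function of $(D,\Pi)$ on classical ensembles, and therefore not on quantum ensembles either, since classical ensembles are diagonal quantum ensembles.

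The main obstacle is finding a pair that actually matches both identification values. Naive mixtures of ``full transmission'' and ``noise'' fail, because for any mixture of full, parity, exclusion and noise components, $D-\Pi/2$ equals half the full-transmission weight. The pair must therefore share the same full-transmission weight, here zero. It must also share $D$, while trading exclusion weight against parity-plus-noise weight, which the construction above does. The only remaining care point is to verify the closed-form optima, in particular that randomized response functions cannot beat the per-$\lambda$ max or min. This follows from linearity: each optimum is a linear function of the response function, so it is attained at a deterministic assignment.
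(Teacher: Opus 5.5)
Your proof is correct and follows the paper's route: an explicit pair of flagged classical mixtures of elementary encodings, evaluated through the per-message optima $\max_x q$, the larger parity-class weight, and $1-\min_x q$. The only difference is the pair you chose: the paper matches $(D,\Pi)=(1/2,2/3)$ with an $x_0$-revealing/parity mixture ($A=1$) against a noise/full-transmission mixture ($A=5/6$), whereas you match $(1/3,2/3)$ with pure exclusion against a parity/noise mixture. Your side remark that full/noise mixtures ``fail'' holds only within your four building blocks, since the paper's own pair uses exactly such a mixture, matched against the $x_0$-revealing encoding, for which $D-\Pi/2=1/4\neq0$.
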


\begin{proof}
A flagged mixture that encodes \(x_0\) with probability \(2/3\) and the parity
\(x_0\oplus x_1\) with probability \(1/3\) gives
\((D,\Pi,A)=(1/2,2/3,1)\). A flagged mixture of the no-information strategy
with probability \(2/3\) and the complete-$x$ strategy with probability \(1/3\)
gives \((D,\Pi,A)=(1/2,2/3,5/6)\).
\end{proof}

Proposition~\ref{prop:exclusion-not-fixed} is the structural fact the composite assumptions exploit. A bound on $D\oplus A$ or $\Pi\oplus A$ removes adversarial extensions that no bound on $(D,\Pi)$ can remove, because the state-exclusion task distinguishes ensembles that identification renders indistinguishable. Figure~\ref{fig:task-lattice} summarizes the three tasks, their composites, and the relations of Eq.~\eqref{eq:task-relations}.

\begin{figure}[t]
\centering
\begin{tikzpicture}[>=Stealth,
  task/.style={draw, rounded corners=2pt, inner sep=4pt, minimum width=15mm, minimum height=8mm, align=center, font=\small},
  comp/.style={draw, rounded corners=2pt, inner sep=4pt, minimum width=17mm, minimum height=8mm, align=center, font=\small, fill=black!6},
  lab/.style={font=\scriptsize, fill=white, inner sep=1pt}]
\node[task] (D) at (0,0) {$D$\\[-2pt]\scriptsize 4-state discr.};
\node[task] (P) at (4.6,0) {$\Pi$\\[-2pt]\scriptsize parity};
\node[task] (A) at (2.3,1.7) {$A$\\[-2pt]\scriptsize exclusion};
\node[comp] (DA) at (0,3.4) {$D\oplus A$\\[-2pt]\scriptsize $(D+A)/2$};
\node[comp] (PA) at (4.6,3.4) {$\Pi\oplus A$\\[-2pt]\scriptsize $(\Pi+A)/2$};
\draw[<->] (D) -- node[lab,below=1pt]{$\Pi/2\leq D\leq\Pi$} (P);
\draw[->,dashed] (D) -- node[lab,sloped,above]{$A\geq\frac{2+D}{3}$} (A);
\draw[->,dashed] (P) -- node[lab,sloped,above]{$A\geq\frac{1+\Pi}{2}$} (A);
\draw[->] (D) -- (DA);
\draw[->] (A) -- (DA);
\draw[->] (P) -- (PA);
\draw[->] (A) -- (PA);
\node[lab] at (2.3,-0.95) {$F_D(\tau)=\min\{2\tau,\tfrac{1+\tau}{2}\}$,\quad $F_\Pi(\tau)=\tfrac{1+\tau}{2}$};
\node[lab] at (2.3,4.35) {$F_{D\oplus A}(\tau)=\tau$,\quad $F_{\Pi\oplus A}(\tau)=\min\{2\tau-\tfrac34,\tau\}$};
\end{tikzpicture}
\caption{Operational source tasks and their relations. Boxes are the three
optimized information-retrieval tasks on Alice's four-preparation ensemble;
shaded boxes are the two-setting composites used as exclusion-assisted
source assumptions. Solid double arrow: coarse-graining relations between
four-state and parity discrimination. Dashed arrows: the induced exclusion
strategies of Eq.~\eqref{eq:task-relations}; all displayed relations are
tight. By Proposition~\ref{prop:exclusion-not-fixed}, $A$ is not a function
of $(D,\Pi)$. The annotations give the exact classical RAC frontier of
Theorem~\ref{thm:classical-frontiers} for each source assumption on its
stated domain.}
\label{fig:task-lattice}
\end{figure}

We study four operational source assumptions, each specified by a single scalar upper bound,
\begin{equation}
\label{eq:source-assumptions}
T_{\mathcal O}(\mathcal E)\leq\tau,
\qquad
T\in\{D,\Pi,D\oplus A,\Pi\oplus A\}.
\end{equation}
The exclusion-assisted task values $D\oplus A$ and $\Pi\oplus A$ are themselves optimized two-setting information-retrieval tasks. In each case, a uniformly random setting is revealed before the measurement. The receiver is asked either to discriminate the emitted state or to exclude one candidate state for $D\oplus A$, and either to discriminate the parity or to exclude one candidate state for $\Pi\oplus A$. Because the two setting-dependent measurements are chosen independently, optimizing the average score gives the average of the two optimal scores:
\begin{align}
(D\oplus A)_{\mathcal O}(\mathcal E)
&=
\frac{
D_{\mathcal O}(\mathcal E)
+
A_{\mathcal O}(\mathcal E)
}{2},
\label{eq:composite-source-tasks}
\\
(\Pi\oplus A)_{\mathcal O}(\mathcal E)
&=
\frac{
\Pi_{\mathcal O}(\mathcal E)
+
A_{\mathcal O}(\mathcal E)
}{2}.
\notag
\end{align}
A bound on either composite therefore constrains the optimal average success probability in the corresponding two-setting task; it neither requires a common measurement nor imposes separate bounds on the component optima. 

\subsection{Physical source models underlying the operational assumptions}

A qubit source provides the most direct connection with the original SDI model. If the four preparations are supported on a two-dimensional Hilbert space, then
$0\preceq\rho_x\preceq\mathbb I_2$, and every four-state discrimination POVM satisfies
\begin{equation}
\label{eq:qubit-implies-D-main}
D_{\mathcal Q}(\mathcal E)
=
\frac{1}{4}
\max_{\{N_x\}}
\sum_{x\in\mathcal X}
\operatorname{Tr}(\rho_xN_x)
\leq
\frac{1}{4}
\sum_{x\in\mathcal X}
\operatorname{Tr}N_x
=
\frac{1}{2}.
\end{equation}
The same argument gives
$D_{\mathcal Q}(\mathcal E)\leq d/4$
for preparations supported on a $d$-dimensional Hilbert space. These implications are one-way: finite-dimensional support is a sufficient physical origin of the distinguishability bound, whereas the operational assumption itself places no restriction on the Hilbert-space dimension.

Exact parity obliviousness provides the corresponding route to the parity assumption. The condition
$\rho_{00}+\rho_{11}=\rho_{01}+\rho_{10}$
makes the two effective parity preparations identical and therefore gives
$\Pi_{\mathcal Q}(\mathcal E)=1/2$. Under the standard relabelling
$x=(a,a\oplus s)$, the same identity is the prepare-and-measure form of no-signalling for remotely prepared states with uniform Alice marginals, while
$b=x_y$ becomes the CHSH winning condition
$a\oplus b=sy$ \cite{WrightFarkas2023BellContextualityMap}.

The exclusion-assisted assumptions arise naturally when all four
preparations contain a common $x$-independent component. Let
\(\rho_x=q\sigma+(1-q)\widetilde\rho_x\), where \(\sigma\) is normalized
and independent of $x$. On the common component, the value of $x$ is
completely hidden: the optimal four-state discrimination, parity discrimination, and state-exclusion
success probabilities are \(1/4\), \(1/2\), and \(3/4\), respectively.
Allowing the residual preparations \(\widetilde\rho_x\) to carry
arbitrary information about $x$ gives
\begin{align}
(D\oplus A)_{\mathcal Q}(\mathcal E)
&\leq
1-\frac{q}{2},
\label{eq:common-component-composite-main}
\\
(\Pi\oplus A)_{\mathcal Q}(\mathcal E)
&\leq
1-\frac{3q}{8}.
\notag
\end{align}
These bounds are tight given only the common weight \(q\), since the
residual preparations may encode the four values of $x$ into mutually
orthogonal states.

Phase-randomized optical encodings realize this structure through the
vacuum sector. After trusted complete randomization of the global optical
phase, each preparation is block diagonal in total photon number and
contains the common state
\(\ket{\mathrm{vac}}\!\bra{\mathrm{vac}}\)
\cite{CaoZhangLoMa2015DiscretePhaseWCS}. If \(p_0(x)\) is the vacuum
probability, the common weight can be chosen as \(q=\min_x p_0(x)\).
Hence any lower bound on the vacuum probability is inserted directly
into Eq.~\eqref{eq:common-component-composite-main}. For example, a
uniform non-vacuum bound \(1-p_0(x)\leq\omega\) gives
\((D\oplus A)_{\mathcal Q}\leq(1+\omega)/2\) and
\((\Pi\oplus A)_{\mathcal Q}\leq5/8+3\omega/8\).

A mean-photon-number bound gives a directly measurable sufficient
condition because
\(1-p_0(x)\leq\operatorname{Tr}(\hat n\rho_x)\). Hence
\(\operatorname{Tr}(\hat n\rho_x)\leq\bar n_{\max}\) for every $x$
implies \(q\geq\max\{0,1-\bar n_{\max}\}\)
\cite{VanHimbeeckWoodheadCerfGarciaPatronPironio2017NaturalAssumptions}.
For phase-randomized weak coherent preparations with total emitted mean
photon numbers \(\mu_x\leq\mu_{\max}\), one has
\(p_0(x)=e^{-\mu_x}\), and therefore \(q\geq e^{-\mu_{\max}}\). The same
optical parameter thus constrains identification and exclusion jointly
through Eq.~\eqref{eq:common-component-composite-main}. The phase
randomization is trusted, neither Bob nor Eve has access to the phase or
its record, and the bounds apply to Alice's emitted ensemble before any
untrusted loss-conditioned postselection. Appendix~\ref{app:operational-origins}
gives the complete derivation under the stated source conditions.

Here \emph{microscopic} refers to a theory-specific description of the internal source model, such as dimension, photon number, overlaps, or a calibrated optical decomposition. In contrast, the operational source assumptions used here are theory-independent task-level bounds on the emitted ensemble. Once a microscopic model supplies such a bound
$T_{\mathcal Q}(\mathcal E)\leq\tau$, its remaining details do not enter the analysis. We now determine the largest classical RAC score compatible with that bound.

\section{Exact classical frontiers}
\label{sec:classical-frontiers}

For each
$T\in\{D,\Pi,D\oplus A,\Pi\oplus A\}$,
let $F_T(\tau)$ be the supremum of the RAC score over strategies in
$\mathcal C$ satisfying
$T_{\mathcal C}(\mathcal E)\leq\tau$.
The classical message alphabet is unrestricted.

\begin{theorem}[Exact classical RAC frontiers]
\label{thm:classical-frontiers}
The four frontiers are
\begin{equation}
\label{eq:classical-frontier-functions}
\begin{aligned}
F_D(\tau)
&=
\begin{cases}
2\tau,
& \dfrac14\leq\tau\leq\dfrac13,\\[1mm]
\dfrac{1+\tau}{2},
& \dfrac13\leq\tau\leq1,
\end{cases}
\\[2mm]
F_\Pi(\tau)
&=
\frac{1+\tau}{2},
\qquad
\frac12\leq\tau\leq1,
\\[2mm]
F_{D\oplus A}(\tau)
&=
\tau,
\qquad
\frac12\leq\tau\leq1,
\\[2mm]
F_{\Pi\oplus A}(\tau)
&=
\begin{cases}
2\tau-\dfrac34,
& \dfrac58\leq\tau\leq\dfrac34,\\[1mm]
\tau,
& \dfrac34\leq\tau\leq1.
\end{cases}
\end{aligned}
\end{equation}
Every value is attained throughout the stated domain.
\end{theorem}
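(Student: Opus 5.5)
The plan is to reduce every quantity to a sum over classical messages of positively homogeneous, piecewise-linear functions of the column vector \(w_\lambda:=(q(\lambda|x))_{x\in\mathcal X}\in\mathbb R_{\ge0}^4\). For a classical strategy the optimal post-processings act message by message, which gives
\[
D_{\mathcal C}=\tfrac14\sum_\lambda \max_x w_\lambda(x),\qquad
\Pi_{\mathcal C}=\tfrac14\sum_\lambda \max_{e}\!\!\sum_{x:\,x_0\oplus x_1=e}\!\! w_\lambda(x),\qquad
A_{\mathcal C}=1-\tfrac14\sum_\lambda \min_x w_\lambda(x).
\]
Since Bob's decoding may also be optimized per message, the best RAC score for a fixed ensemble is
\[
R=\tfrac18\sum_\lambda\sum_{y}\max_b\sum_{x:\,x_y=b}w_\lambda(x).
\]
The constraints on the columns are only \(w_\lambda\geq0\) and \(\sum_\lambda w_\lambda=(1,1,1,1)\). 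Each frontier is therefore the value of a linear program over a cone, with one extra scalar constraint. Because \(\tau\mapsto F_T(\tau)\) is then concave, it is the upper envelope of segments joining achievable points.

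For the upper bound I would prove each branch of Eq.~\eqref{eq:classical-frontier-functions} as a per-message inequality and then sum over \(\lambda\). Constant terms are made homogeneous through the mass \(m(w):=\tfrac14\sum_x w(x)\), which sums to \(1\). The inequalities required are the following:
\begin{align*}
\text{(a)}\quad & r(w)\le 2\,d(w), &\text{(b)}\quad & r(w)\le \tfrac12\bigl(m(w)+d(w)\bigr),\\
\text{(c)}\quad & r(w)\le \tfrac12\bigl(m(w)+\pi(w)\bigr), &\text{(d)}\quad & r(w)\le \tfrac12\bigl(d(w)+a(w)\bigr),\\
\text{(e)}\quad & r(w)\le 2\cdot\tfrac12\bigl(\pi(w)+a(w)\bigr)-\tfrac34 m(w), &\text{(f)}\quad & r(w)\le \tfrac12\bigl(\pi(w)+a(w)\bigr).
\end{align*}
Here \(r,d,\pi\) are the per-message summands above, and \(a(w)=m(w)-\tfrac14\min_x w(x)\). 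Item (b) follows from (c) together with \(d\ge\pi/2\)? No: (b) must be checked directly, which is fine.

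Each of these inequalities compares homogeneous piecewise-linear functions, so it suffices to check it on each linear region. Those regions are fixed by the ordering of the four entries of \(w\) and by which bit and parity majorities win. On each region both sides are linear, so the check reduces to the extreme rays, which are vectors with \(0/1\) entries, up to the symmetry group of the square \(\{0,1\}^2\) that preserves the RAC and parity. This gives a finite check over the supports \(S\subseteq\mathcal X\) with \(|S|\in\{1,2,3,4\}\).

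Summing the per-message inequality over \(\lambda\), and using \(\sum_\lambda m=1\), gives \(R\le\alpha+\beta\,T\). Taking the minimum over the valid linear bounds then yields the stated piecewise formulas.

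For attainment I would exhibit the kink points and mix them, since flagged mixtures realize convex combinations of \((T,R)\). The relevant points are as follows.
\begin{itemize}
\item No information: \((D,\Pi,A,R)=(\tfrac14,\tfrac12,\tfrac34,\tfrac12)\).
\item Complete transmission of \(x\): \((1,1,1,1)\).
\item Sending \(x_0\), or any single bit: \(D=\tfrac12\), \(\Pi=\tfrac12\), \(A=1\), \(R=\tfrac34\).
\item Sending \(x_0\oplus x_1\): \(\Pi=1\), \(R=\tfrac12\).
\item A three-element-support strategy realizing \(D=\tfrac13\) with \(R=\tfrac23\). For this I would try messages that each rule out one value of \(x\) with appropriate weights, for example announcing a value \(z\) chosen uniformly among the three values different from the true \(x\), arranged so that the RAC bits are biased.
\end{itemize}
I then check that each segment of each frontier lies on a chord between two of these points. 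For example, \(F_{\Pi\oplus A}\) at \(\tau=\tfrac58\) is the no-information point, and its kink at \(\tau=\tfrac34\) should be the single-bit strategy, for which \((\Pi+A)/2=\tfrac34\) and \(R=\tfrac34\).

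I expect the main obstacle to be twofold. The first part is the case analysis establishing inequalities such as (a) and (e) on all regions, especially for three-element supports, where exclusion and identification trade off nontrivially. The second part is locating the correct extremal strategy at the \(D\) kink \(\tau=\tfrac13\). If the direct region-by-region check becomes unwieldy, I would instead solve the dual linear program numerically for each branch, read off the dual multipliers, and then verify the resulting per-message certificates by hand.
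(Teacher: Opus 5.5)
Your overall architecture is the paper's. Your homogenized inequalities (a)--(f) are exactly the paper's six posterior bounds multiplied by the message weight, since \(w_\lambda=4\mu_\lambda q^\lambda\) and \(m(w_\lambda)=\mu_\lambda\). They are summed over messages in the same way, and attainment uses the same extremal sources with flagged mixtures. Your tentative kink strategy is also right: announcing a uniformly random \(z\neq x\) is the paper's source \(\mathsf E\), with \((R,D,\Pi,A)=(2/3,1/3,2/3,1)\), so \(\mathsf N\to\mathsf E\to\mathsf F\) covers \(F_D\).

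\textbf{The gap is in how you propose to prove (a)--(f).} You claim that the linear regions, cut out by the ordering of the entries together with the bit and parity majorities, have only \(0/1\) extreme rays. That is false. Write vectors in the order \(00,01,10,11\).
\begin{itemize}
\item The region \(w_{00}\ge w_{01}\ge w_{10}\ge w_{11}\ge 0\), \(w_{00}+w_{11}\ge w_{01}+w_{10}\) has extreme rays \((1,0,0,0)\), \((1,1,0,0)\), \((1,1,1,1)\) and \((2,1,1,0)\).
\item The region \(w_{00}\ge w_{11}\ge w_{01}\ge w_{10}\ge 0\), \(w_{00}+w_{10}\ge w_{01}+w_{11}\) has the extreme ray \((2,1,0,1)\).
\end{itemize}
The majority cuts thus create rays with entries in \(\{0,1,2\}\), and a check over supports \(S\subseteq\mathcal X\) does not establish the inequalities. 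The inequalities do hold at these rays; at \((2,1,1,0)\), five of the six are tight. So the method can be repaired by enumerating the true extreme rays of every region, but as written the finite check is incomplete.

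\textbf{How the paper avoids the enumeration.} It relabels the bit values so that \(0\) is the majority answer for both questions. A single flip exchanges the parity classes, so \(d,\pi,a\) are invariant. In that sector \(r=\tfrac12+\tfrac12(\alpha-\delta)\), where \(\alpha=q_{00}\), \(\delta=q_{11}\) and \(\alpha\ge\delta\). Write \(q_{\min}=\min_x q_x\). Each bound is then a short estimate of the corner contrast \(\alpha-\delta\):
\begin{itemize}
\item (a) follows from \(\alpha\le d\) together with \(\delta\ge 1-3d\).
\item (b) follows from \(\delta\ge0\).
\item (c) follows from \(\alpha+\delta\le\pi\).
\item (d) follows from \(\delta\ge q_{\min}\), and (f) from \(\delta\ge q_{\min}\) combined with \(\alpha+\delta\le\pi\).
\item (e) is the only one needing a case split. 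If \(\alpha-\delta+2q_{\min}\le\tfrac12\), then \(\pi\ge\tfrac12\) suffices. Otherwise, use \(\pi\ge\alpha+\delta\) and \(\alpha+3\delta-2q_{\min}=(\alpha-\delta+2q_{\min})+4(\delta-q_{\min})\ge\tfrac12\).
\end{itemize}
This normal form is what turns your expected ``main obstacle'' into a few lines.
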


\begin{proof}[Proof sketch]
Condition on the classical message $\lambda$ and write
$q_x=p(x|\lambda)$ for the posterior distribution over the four values of $x$.
After independently relabelling the two bit values so that $0$ is the more
likely value of each bit, the optimal conditional RAC score is
$r(q)=\tfrac12+\tfrac12(q_{00}-q_{11})$. For the same posterior, four-state discrimination, parity discrimination, and state exclusion have optimal values
$d(q)=\max_x q_x$, 
$\pi(q)=\max\{q_{00}+q_{11},q_{01}+q_{10}\}$, and
$a(q)=1-\min_x q_x$, respectively.

The opposite-corner contrast $q_{00}-q_{11}$ then gives the pointwise bounds
$r(q)\leq 2d(q)$,
$r(q)\leq[1+d(q)]/2$,
$r(q)\leq[1+\pi(q)]/2$,
$r(q)\leq[d(q)+a(q)]/2$,
$r(q)\leq\pi(q)+a(q)-3/4$, and
$r(q)\leq[\pi(q)+a(q)]/2$.
Because $\lambda$ is available to every decoder, the global RAC score and
source-task values are the corresponding averages of these posterior
quantities. Averaging the affine inequalities and imposing
$T_{\mathcal C}(\mathcal E)\leq\tau$ yields the four expressions in
Eq.~\eqref{eq:classical-frontier-functions}.

The endpoints are attained by elementary encodings carrying no information,
excluding one value of $x$, revealing one bit of $x$, or revealing $x$ in full. Flagged mixtures of the appropriate endpoint encodings attain every
affine segment. Appendix~\ref{app:classical-frontiers} gives the complete
derivation and the explicit attaining strategies.
\end{proof}

These frontiers fix the classical baseline at every admissible source
task value. We next determine the quantum deviation attainable by
arbitrary-dimensional strategies.
\section{Quantum deviation from the classical frontiers}
\label{sec:quantum-deviation}

For a quantum strategy $(\mathcal E,\mathbf M)$ and
$T\in\{D,\Pi,D\oplus A,\Pi\oplus A\}$, define the signed deviation from
the corresponding exact classical frontier in
Eq.~\eqref{eq:classical-frontier-functions} by
\begin{equation}
\label{eq:quantum-gap-functional}
\Delta_T(\mathcal E,\mathbf M)
=
R(\mathcal E,\mathbf M)
-
F_T\!\left(T_{\mathcal Q}(\mathcal E)\right).
\end{equation}
Here $R(\mathcal E,\mathbf M)$ is evaluated using Bob's implemented
decoding measurements, whereas $T_{\mathcal Q}(\mathcal E)$ is optimized
over all quantum measurements on Alice's ensemble. Thus $\Delta_T>0$
certifies a violation of the corresponding classical frontier and
simultaneously witnesses that Alice's preparations are not jointly
commuting and that Bob's implemented decoding measurements are
incompatible
\cite{ChaturvediPawlowskiSaha2026EpistemicIncompleteness}.

To distinguish the pointwise quantum boundary from its global extremum,
define
\begin{equation}
\label{eq:pointwise-maximum-deviation}
\Delta_T^{\max}(t)
:=
\sup_{\substack{\mathcal E,\mathbf M\\
T_{\mathcal Q}(\mathcal E)=t}}
\Delta_T(\mathcal E,\mathbf M),
\qquad
\Delta_T^\star
:=
\sup_t\Delta_T^{\max}(t).
\end{equation}
Here $\Delta_T^{\max}(t)$ is the pointwise quantum excess above the exact
classical frontier at source value $t$, while $\Delta_T^\star$ is the
largest separation between the arbitrary-dimensional quantum region and
that frontier.

\subsection{Explicit deviations and thresholds}
\label{subsec:explicit-deviations-thresholds}

We now compute the feasible quantum deviations for two deformations of
the BB84 ensemble. One removes state contrast by depolarizing Alice's
preparations; the other adds an orthogonal $x$-revealing sector. Let
\begin{equation}
\label{eq:bb84-source}
\rho_x^{\rm BB84}
=
\frac12
\left[
\mathbb I_2
+
\frac{(-1)^{x_0}X+(-1)^{x_1}Z}{\sqrt2}
\right]
\end{equation}
be the ideal BB84 RAC ensemble, and define
\begin{equation}
\label{eq:bb84-source-families}
\begin{aligned}
\rho_x^{(\nu)}
&=
\nu\rho_x^{\rm BB84}
+
(1-\nu)\frac{\mathbb I_2}{2},
\\
\rho_x^{(\ell)}
&=
(1-\ell)\rho_x^{\rm BB84}
\oplus
\ell\ket{x}\!\bra{x},
\end{aligned}
\qquad
\nu,\ell\in[0,1].
\end{equation}
In the second family, the states
$\{\ket{x}\}_{x\in\mathcal X}$ are mutually orthogonal and supported on
a sector orthogonal to the BB84 qubit subspace. The parameter $\nu$
runs from an $x$-independent ensemble to ideal BB84,
whereas $\ell$ runs from ideal BB84 to complete leakage through an orthogonal sector. The two families meet at
$(\nu,\ell)=(1,0)$.

\begin{proposition}[Exact thresholds for the BB84 deformations]
\label{prop:exact-bb84-deformation-thresholds}
For the preparation-depolarized family in
Eq.~\eqref{eq:bb84-source-families}, with Bob using the $X$ decoder for
$y=0$ and the $Z$ decoder for $y=1$, the critical source visibilities
for violating the corresponding classical frontiers are given in
Table~\ref{tab:exact-source-thresholds}. For the direct-sum
leakage family, with Bob using the BB84 decoders on the qubit
sector and reading the requested bit on the orthogonal $x$-revealing sector, all
four deviations are
\begin{equation}
\label{eq:exact-leakage-threshold-summary}
\Delta_T^{(\ell)}
=
(1-\ell)\Delta_0,
\qquad
T\in\{D,\Pi,D\oplus A,\Pi\oplus A\},
\end{equation}
where
\begin{equation}
\label{eq:bb84-common-deviation}
\Delta_0
=
\frac{\sqrt2-1}{4}.
\end{equation}
Thus every point with incomplete leakage $\ell<1$ violates the corresponding
classical frontier, and the violation vanishes only at complete value of $x$
revelation.
\end{proposition}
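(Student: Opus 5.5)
The plan is a direct computation. For each family I will evaluate the RAC score $R$ and the three optimized source tasks $D_{\mathcal Q},\Pi_{\mathcal Q},A_{\mathcal Q}$ in closed form. Then I will insert them into the frontiers of Theorem~\ref{thm:classical-frontiers} and read off the sign of $\Delta_T$ from Eq.~\eqref{eq:quantum-gap-functional}.

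\emph{Preparation-depolarized family.} With the $X$ and $Z$ decoders, $p(b=x_y|x,y)=\tfrac12+\tfrac{\nu}{2\sqrt2}$ for every $x,y$, so $R^{(\nu)}=\tfrac12+\tfrac{\nu}{2\sqrt2}$.

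For the source tasks I will exhibit a primal measurement together with a matching SDP dual certificate.
\begin{itemize}
\item \textbf{Discrimination.} The measurement $N_x=\tfrac12\rho_x^{\rm BB84}$ gives $D\geq(1+\nu)/4$. The dual operator $Y=\tfrac{1+\nu}{8}\mathbb I_2\succeq\rho_x^{(\nu)}/4$ shows $D_{\mathcal Q}\leq\operatorname{Tr}Y=(1+\nu)/4$.
\item \textbf{Parity.} The two parity mixtures both equal $\mathbb I_2/2$, since $\rho_{00}+\rho_{11}=\rho_{01}+\rho_{10}=\mathbb I_2$. Hence $\Pi_{\mathcal Q}=\tfrac12$.
\item \textbf{Exclusion.} The measurement $G_z=\tfrac12\rho^{\rm BB84}_{\bar z}$, where $\bar z$ is the bitwise complement of $z$ (the antipodal Bloch vector), gives error $\tfrac14\sum_x\operatorname{Tr}(\rho_xG_x)=(1-\nu)/4$. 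The lower dual $Y=\tfrac{1-\nu}{8}\mathbb I_2\preceq\rho_x^{(\nu)}/4$ shows this error is minimal. Hence $A_{\mathcal Q}=(3+\nu)/4$.
\end{itemize}

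This gives the task values $D=(1+\nu)/4$, $\Pi=1/2$, $D\oplus A=\tfrac12+\tfrac{\nu}{4}$ and $\Pi\oplus A=\tfrac58+\tfrac{\nu}{8}$. Substituting into Eq.~\eqref{eq:classical-frontier-functions}, with care about which affine branch is active, yields the thresholds.
\begin{itemize}
\item \textbf{$D$.} On the branch $\nu\leq1/3$ one would need $\nu/\sqrt2>\nu$, which is impossible. On the branch $\nu\geq1/3$ the condition $R>(5+\nu)/8$ gives $\nu>1/(2\sqrt2-1)$.
\item \textbf{$\Pi$.} The condition $R>3/4$ gives $\nu>1/\sqrt2$.
\item \textbf{$D\oplus A$.} The condition $R>\tfrac12+\tfrac\nu4$ reduces to $\tfrac{1}{2\sqrt2}>\tfrac14$, so every $\nu>0$ violates.
\item \textbf{$\Pi\oplus A$.} The task value lies in $[\tfrac58,\tfrac34]$, where the frontier is $2\tau-\tfrac34=\tfrac12+\tfrac\nu4$. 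Again every $\nu>0$ violates.
\end{itemize}
These are the entries to be recorded in Table~\ref{tab:exact-source-thresholds}.

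\emph{Direct-sum leakage family.} Every $\rho^{(\ell)}_x$ is block diagonal with respect to the qubit sector and the orthogonal leakage sector. Pinching any measurement onto these blocks preserves all statistics, so each optimized task value is the weighted sum of the block optima. The qubit block gives $D=\tfrac12$, $\Pi=\tfrac12$ and $A=\tfrac34$ (the $\nu=1$ case above). The $x$-revealing block gives $1$ for every task. Hence
\begin{itemize}
\item $D=\Pi=\tfrac12(1+\ell)$,
\item $A=\tfrac34+\tfrac\ell4$,
\item $D\oplus A=\Pi\oplus A=\tfrac34+\tfrac\ell4$,
\item $R=(1-\ell)(\tfrac12+\tfrac{1}{2\sqrt2})+\ell$.
\end{itemize}
All task values lie in the single affine branch of each frontier, namely $(1+\tau)/2$ for $D$ and $\Pi$, and $\tau$ for both composites. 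Each frontier equals $1$ at full revelation. Since $R$ and $T$ are both affine in $\ell$, $\Delta_T^{(\ell)}$ is also affine in $\ell$. It vanishes at $\ell=1$, and at $\ell=0$ it equals $\tfrac{1}{2\sqrt2}-\tfrac14=\Delta_0$ for all four $T$. This gives Eq.~\eqref{eq:exact-leakage-threshold-summary}.

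The main obstacle is proving exact optimality of the source-task values rather than mere achievability. I plan to handle it with the explicit $\mathbb I_2$-proportional dual certificates above, together with the block-pinching argument for the leakage sector. The most delicate bookkeeping is tracking which branch of $F_D$ and $F_{\Pi\oplus A}$ is active as $\nu$ varies.
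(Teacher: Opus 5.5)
Your route is the paper's: evaluate $R$ and the optimized task values on each family and substitute them into the frontiers of Theorem~\ref{thm:classical-frontiers}. The preparation-depolarized part is correct. It is in fact more complete than the paper's proof, which only lists $D^{(\nu)}$, $\Pi^{(\nu)}$, $A^{(\nu)}$; your primal measurements paired with $\mathbb I_2$-proportional dual certificates prove these values are optimal. Your branch bookkeeping for $F_D$ and $F_{\Pi\oplus A}$ also gives the right thresholds.

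You do omit the last row of Table~\ref{tab:exact-source-thresholds}, the assumption $D_{\mathcal Q}\le 1/2$. It follows because $F_D$ is nondecreasing, so the relevant classical value is $F_D(1/2)=3/4$, exactly as in the parity case, giving $\nu>1/\sqrt2$.

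The leakage part contains a wrong value. The qubit block is the ideal BB84 ensemble, whose exclusion value is $A=1$, not $3/4$. Your own measurement $G_z=\tfrac12\rho^{\rm BB84}_{\bar z}$ has zero error at $\nu=1$, since $\rho^{\rm BB84}_{\bar z}\perp\rho^{\rm BB84}_{z}$; this agrees with $A^{(\nu)}=(3+\nu)/4\to1$. Hence $A^{(\ell)}=(1-\ell)+\ell=1$, not $\tfrac34+\tfrac\ell4$.

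Taken literally, your values $D=\Pi=\tfrac{1+\ell}{2}$ and $A=\tfrac34+\tfrac\ell4$ would give $D\oplus A=\Pi\oplus A=\tfrac58+\tfrac{3\ell}{8}$. At $\ell=0$ the composite deviations would then be $\tfrac{\sqrt2}{4}-\tfrac18$ and $\tfrac{\sqrt2}{4}$ respectively, rather than $\Delta_0$. The composite values $\tfrac34+\tfrac\ell4$ that you actually state are the correct ones, and they require $A^{(\ell)}=1$.

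Once this is fixed, your argument goes through:
\begin{itemize}
\item the pinching step is valid because the sector weights $1-\ell$ and $\ell$ do not depend on $x$;
\item all task values then sit on a single affine branch of each frontier;
\item your affine-interpolation argument, with $\Delta_T^{(\ell)}=\Delta_0$ at $\ell=0$ and $0$ at $\ell=1$, is a clean substitute for the paper's direct substitution.
\end{itemize}
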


\begin{table}[t]
\caption{Exact source-visibility thresholds for classical-frontier violation
along the preparation-depolarized BB84 family. The visibility \(\nu\)
changes Alice's emitted ensemble; Bob uses the ideal RAC decoders.}
\label{tab:exact-source-thresholds}
\begin{ruledtabular}
\begin{tabular}{c c}
source assumption & violation condition\\
\hline
$D^{(\nu)}$ & $\nu>(2\sqrt2-1)^{-1}\approx0.5469$\\
$\Pi=1/2$ & $\nu>1/\sqrt2\approx0.7071$\\
$(D\oplus A)^{(\nu)}$ & $\nu>0$\\
$(\Pi\oplus A)^{(\nu)}$ & $\nu>0$\\
$D_{\mathcal Q}\leq1/2$ & $\nu>1/\sqrt2\approx0.7071$\\
\end{tabular}
\end{ruledtabular}
\end{table}

\begin{proof}
At the BB84 endpoint, Bob's $X$ and $Z$ decoders give
\begin{equation}
\label{eq:bb84-task-values}
\begin{aligned}
R_{\rm BB84}
&=
\frac{2+\sqrt2}{4},
&
D_{\mathcal Q}
=
\Pi_{\mathcal Q}
&=
\frac12,
\\
A_{\mathcal Q}
&=
1,
&
(D\oplus A)_{\mathcal Q}
=
(\Pi\oplus A)_{\mathcal Q}
&=
\frac34 .
\end{aligned}
\end{equation}
Each exact classical frontier equals $3/4$ at the corresponding source
task value. Thus all four deviations coincide at
\begin{equation}
\Delta_0
=
R_{\rm BB84}-\frac34
=
\frac{\sqrt2-1}{4}.
\end{equation}

For the preparation-depolarized family, the same decoders give
\begin{equation}
\label{eq:bb84-visibility-task-values}
\begin{aligned}
R^{(\nu)}
&=
\frac12+\frac{\nu}{2\sqrt2},
&
D^{(\nu)}
&=
\frac{1+\nu}{4},
\\
\Pi^{(\nu)}
&=
\frac12,
&
A^{(\nu)}
&=
\frac{3+\nu}{4},
\\
(D\oplus A)^{(\nu)}
&=
\frac12+\frac{\nu}{4},
&
(\Pi\oplus A)^{(\nu)}
&=
\frac58+\frac{\nu}{8}.
\end{aligned}
\end{equation}
Substitution into the exact classical frontiers gives
\begin{equation}
\label{eq:visibility-nonclassicality-deviations}
\begin{aligned}
\Delta_D^{(\nu)}
&=
\begin{cases}
\displaystyle
\frac{\nu}{2}
\left(
\frac{1}{\sqrt2}-1
\right),
&
0\leq\nu\leq\frac13,
\\[2mm]
\displaystyle
\frac{\nu(2\sqrt2-1)-1}{8},
&
\frac13\leq\nu\leq1,
\end{cases}
\\[2mm]
\Delta_\Pi^{(\nu)}
&=
\frac{\sqrt2\,\nu-1}{4},
\\
\Delta_{D\oplus A}^{(\nu)}
&=
\Delta_{\Pi\oplus A}^{(\nu)}
=
\nu\Delta_0.
\end{aligned}
\end{equation}
The thresholds in Table~\ref{tab:exact-source-thresholds} follow
immediately. The weaker qubit implication fixes the relevant classical
value at the same point as the parity case, giving the threshold
$1/\sqrt2$.

For the direct-sum leakage family, Bob obtains the BB84 score on
the qubit sector and perfect decoding on the orthogonal $x$-revealing sector.
The task values are
\begin{equation}
\label{eq:bb84-leakage-task-values}
\begin{aligned}
R^{(\ell)}
&=
R_{\rm BB84}
+
\ell(1-R_{\rm BB84}),
\\
D^{(\ell)}
=
\Pi^{(\ell)}
&=
\frac12+\frac{\ell}{2},
\\
A^{(\ell)}
&=
1,
\\
(D\oplus A)^{(\ell)}
=
(\Pi\oplus A)^{(\ell)}
&=
\frac34+\frac{\ell}{4}.
\end{aligned}
\end{equation}
Substitution into the exact frontiers gives
Eq.~\eqref{eq:exact-leakage-threshold-summary}.
\end{proof}

\subsection{Receiver-side robustness}
\label{subsec:receiver-side-robustness}

The parameter $\nu$ changes Alice's emitted ensemble and hence the
operational task values. Receiver-side depolarization changes
only the reference decoders used to generate the observed RAC behavior.
To separate the two effects, let
\begin{equation}
\label{eq:noisy-bb84-decoders}
M_{b|y}^{(\nu_M)}
=
\nu_M M_{b|y}^{\rm BB84}
+
(1-\nu_M)\frac{\mathbb I_2}{2},
\qquad
y\in\{0,1\},
\end{equation}
where $\nu_M\in[0,1]$ is the receiver-side visibility parameter.

\begin{proposition}[Receiver-visibility thresholds]
\label{prop:receiver-visibility-thresholds}
For the preparation-depolarized source
$\{\rho_x^{(\nu)}\}$ and the depolarized reference decoders in
Eq.~\eqref{eq:noisy-bb84-decoders}, no point on the family violates any
of the four classical frontiers when $\nu_M\leq1/\sqrt2$. When
$\nu_M>1/\sqrt2$, the exact conditions for positive frontier violation
are those in Table~\ref{tab:receiver-visibility-thresholds}.
\end{proposition}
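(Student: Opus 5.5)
The plan is a direct substitution, mirroring the proof of Proposition~\ref{prop:exact-bb84-deformation-thresholds}. The decoders enter only through the RAC score $R$. The task values $D,\Pi,A$ and their composites are optimized over all measurements on Alice's ensemble, so they do not depend on $\nu_M$. Hence the task values in Eq.~\eqref{eq:bb84-visibility-task-values} carry over unchanged: $D^{(\nu)}=(1+\nu)/4$, $\Pi^{(\nu)}=1/2$, $(D\oplus A)^{(\nu)}=1/2+\nu/4$, and $(\Pi\oplus A)^{(\nu)}=5/8+\nu/8$.

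First I compute the score with the noisy decoders. Since $\operatorname{Tr}(\rho_x\mathbb I_2/2)=1/2$, each success probability becomes $\nu_M p^{\rm BB84\text{-}dec}+(1-\nu_M)/2$. This gives
\begin{equation*}
R^{(\nu,\nu_M)}=\frac12+\frac{\nu\nu_M}{2\sqrt2}.
\end{equation*}

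Next I substitute this score and the task values into the frontiers of Theorem~\ref{thm:classical-frontiers}, treating the piecewise cases separately.

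For $T=D$ with $\nu\le 1/3$, the frontier is $F_D=2\tau=(1+\nu)/2$. The deviation is $\tfrac{\nu}{2}(\nu_M/\sqrt2-1)$, which is never positive.

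For $T=D$ with $\nu\ge1/3$, the frontier is $F_D=(5+\nu)/8$. The deviation is $[\nu(2\sqrt2\nu_M-1)-1]/8$. It is positive iff $\nu>(2\sqrt2\nu_M-1)^{-1}$. A solution with $\nu\le1$ exists iff $2\sqrt2\nu_M-1>1$, that is, iff $\nu_M>1/\sqrt2$.

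For $T=\Pi$, and also for the qubit implication $D_{\mathcal Q}\le1/2$, the frontier value is $3/4$ in both cases. The deviation is $\nu\nu_M/(2\sqrt2)-1/4$, which is positive iff $\nu>(\sqrt2\nu_M)^{-1}$. This again requires $\nu_M>1/\sqrt2$.

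For the composites, $\tau=1/2+\nu/4\in[1/2,3/4]$ gives $F_{D\oplus A}=1/2+\nu/4$. Likewise $\tau=5/8+\nu/8\in[5/8,3/4]$ lies on the first branch, so $F_{\Pi\oplus A}=2\tau-3/4=1/2+\nu/4$. Both deviations therefore equal
\begin{equation*}
\frac{\nu(\sqrt2\nu_M-1)}{4},
\end{equation*}
which is positive iff $\nu>0$ and $\nu_M>1/\sqrt2$.

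Collecting the cases gives both claims. For $\nu_M\le 1/\sqrt2$, every deviation is nonpositive for all $\nu\in[0,1]$. For $\nu_M>1/\sqrt2$, the thresholds are $\nu>(2\sqrt2\nu_M-1)^{-1}$ for $D^{(\nu)}$ and $\nu>(\sqrt2\nu_M)^{-1}$ for $\Pi$ and for the qubit bound; for both exclusion-assisted assumptions the condition is $\nu>0$. These fill Table~\ref{tab:receiver-visibility-thresholds}, and at $\nu_M=1$ they reduce to Table~\ref{tab:exact-source-thresholds}.

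The only delicate step is the $D$ case. There I must check that the low-$\nu$ branch $2\tau$ never yields a violation. I must also check that the threshold on the high branch is at least $1/3$ and exceeds $1$ exactly when $\nu_M\le1/\sqrt2$. Both follow from $2\sqrt2\nu_M-1\le 2\sqrt2-1<3$, which gives $(2\sqrt2\nu_M-1)^{-1}>1/3$.
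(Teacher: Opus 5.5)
Your proposal is correct and follows essentially the same route as the paper. Both arguments keep Alice's task values from Eq.~\eqref{eq:bb84-visibility-task-values}, since only the decoders are depolarized, use $R^{(\nu,\nu_M)}=\tfrac12+\nu\nu_M/(2\sqrt2)$, and substitute branchwise into the exact frontiers to obtain the same deviation expressions; these are then read off for $\nu_M\leq1/\sqrt2$ and $\nu_M>1/\sqrt2$. One small nit: at $\nu_M=1/\sqrt2$ the high-branch $D$ threshold $(2\sqrt2\nu_M-1)^{-1}$ equals $1$ rather than exceeding it, which still yields no violation because the inequality $\nu>1$ is strict.
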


\begin{table}[t]
\caption{Exact frontier-violation conditions for the preparation-depolarized
BB84 source when the reference RAC decoders have receiver-side visibility
\(\nu_M\). The source-task values remain those of Alice's emitted ensemble;
only the observed RAC score is depolarized.}
\label{tab:receiver-visibility-thresholds}
\begin{ruledtabular}
\begin{tabular}{c c}
source assumption & violation condition\\
\hline
$D^{(\nu)}$
&
$\nu_M>1/\sqrt2,\quad
\nu>(2\sqrt2\,\nu_M-1)^{-1}$
\\[1mm]
$\Pi=1/2$
&
$\nu_M>1/\sqrt2,\quad
\nu>(\sqrt2\,\nu_M)^{-1}$
\\[1mm]
$(D\oplus A)^{(\nu)}$
&
$\nu_M>1/\sqrt2,\quad \nu>0$
\\[1mm]
$(\Pi\oplus A)^{(\nu)}$
&
$\nu_M>1/\sqrt2,\quad \nu>0$
\\[1mm]
$D_{\mathcal Q}\leq1/2$
&
$\nu_M>1/\sqrt2,\quad
\nu>(\sqrt2\,\nu_M)^{-1}$
\end{tabular}
\end{ruledtabular}
\end{table}

\begin{proof}
Receiver-side depolarization leaves Alice's ensemble unchanged. Hence
the operational task values remain those in
Eq.~\eqref{eq:bb84-visibility-task-values}. Only the RAC score changes,
and for the depolarized reference decoders in
Eq.~\eqref{eq:noisy-bb84-decoders} it is
\begin{equation}
\label{eq:joint-source-receiver-rac-score}
R^{(\nu,\nu_M)}
=
\frac12+\frac{\nu\nu_M}{2\sqrt2}.
\end{equation}
Substitution into the exact classical frontiers gives
\begin{equation}
\label{eq:joint-source-receiver-deviations}
\begin{aligned}
\Delta_D^{(\nu,\nu_M)}
&=
\begin{cases}
\displaystyle
\frac{\nu}{2}
\left(
\frac{\nu_M}{\sqrt2}-1
\right),
&
0\leq\nu\leq\frac13,
\\[2mm]
\displaystyle
\frac{\nu(2\sqrt2\,\nu_M-1)-1}{8},
&
\frac13\leq\nu\leq1,
\end{cases}
\\[2mm]
\Delta_\Pi^{(\nu,\nu_M)}
&=
\frac{\sqrt2\,\nu\nu_M-1}{4},
\\
\Delta_{D\oplus A}^{(\nu,\nu_M)}
&=
\Delta_{\Pi\oplus A}^{(\nu,\nu_M)}
=
\frac{\nu(\sqrt2\,\nu_M-1)}{4}.
\end{aligned}
\end{equation}
If $\nu_M\leq1/\sqrt2$, all three displayed expressions are
nonpositive for every $\nu\in[0,1]$. If $\nu_M>1/\sqrt2$, solving
Eq.~\eqref{eq:joint-source-receiver-deviations} for positivity gives
the conditions in Table~\ref{tab:receiver-visibility-thresholds}. The
last row follows by replacing the exact task value $D^{(\nu)}$ with the
weaker qubit implication $D_{\mathcal Q}\leq1/2$, which fixes the same
classical value as in the parity case.
\end{proof}

The common threshold $\nu_M=1/\sqrt2$ is the joint-measurability
threshold of the depolarized $X$ and $Z$ decoders
\cite{Busch1986JointMeasurability,HeinosaariMiyaderaZiman2016Incompatibility}.
At or below this value, the implemented decoders are compatible, and no
classical-frontier violation is possible for any source visibility.
Thus Eq.~\eqref{eq:joint-source-receiver-deviations} separates the two
resources required for frontier violation: source contrast through
$\nu$ and measurement incompatibility through $\nu_M$.

\subsection{Bounding the quantum deviation}
\label{subsec:bounding-quantum-deviation}

The constructions above give feasible frontier violations. We now bound
the largest possible deviation over arbitrary prepare-and-measure
realizations. Since $\Delta_T$ contains the optimized source-task value
$T_{\mathcal Q}(\mathcal E)$, we replace each source task by its exact
semidefinite-program dual. The resulting dual constraints restrict the
otherwise unrestricted preparations, and the dual objective supplies the
source-task value in the frontier term.

Writing
$\bar\rho=\frac14\sum_{x\in\mathcal X}\rho_x$, strong duality yields
\begin{equation}
\label{eq:source-tasks-main}
\begin{aligned}
D_{\mathcal Q}(\mathcal E)
&=
\min_{\sigma_D}
\left\{
\Tr\sigma_D:
\sigma_D\succeq\frac14\rho_x
\quad\forall x
\right\},
\\[1mm]
\Pi_{\mathcal Q}(\mathcal E)
&=
\min_{\sigma_\Pi}
\left\{
\Tr\sigma_\Pi:
\begin{array}{l}
\sigma_\Pi\succeq
\frac14(\rho_{00}+\rho_{11}),
\\[-1mm]
\sigma_\Pi\succeq
\frac14(\rho_{01}+\rho_{10})
\end{array}
\right\},
\\[1mm]
A_{\mathcal Q}(\mathcal E)
&=
\min_{\sigma_A}
\left\{
\Tr\sigma_A:
\sigma_A\succeq
\bar\rho-\frac14\rho_z
\quad\forall z
\right\}.
\end{aligned}
\end{equation}
These dual programs give a uniform representation of the four source
task values. We write
$\boldsymbol{\sigma}_D=\sigma_D$,
$\boldsymbol{\sigma}_\Pi=\sigma_\Pi$,
$\boldsymbol{\sigma}_{D\oplus A}=(\sigma_D,\sigma_A)$, and
$\boldsymbol{\sigma}_{\Pi\oplus A}=(\sigma_\Pi,\sigma_A)$.
The associated scalar objectives are
\begin{equation}
\label{eq:source-dual-objectives}
\begin{aligned}
s_D
&=
\Tr\sigma_D,
\\
s_\Pi
&=
\Tr\sigma_\Pi,
\\
s_{D\oplus A}
&=
\frac12\Tr(\sigma_D+\sigma_A),
\\
s_{\Pi\oplus A}
&=
\frac12\Tr(\sigma_\Pi+\sigma_A).
\end{aligned}
\end{equation}
Let $\mathfrak D_T(\mathcal E)$ denote the feasible set defined by the
dual inequalities for the source assumption $T$, with the Cartesian
product understood for the composite assumptions. Then
\begin{equation}
\label{eq:unified-source-dual}
T_{\mathcal Q}(\mathcal E)
=
\min_{\boldsymbol{\sigma}_T\in\mathfrak D_T(\mathcal E)}
s_T.
\end{equation}

Because Bob's measurement device and Hilbert-space dimension are
unrestricted, a common Naimark dilation allows both binary decoders to
be taken projective,
\begin{equation}
\label{eq:rac-projective-relations}
M_{b|y}M_{b'|y}
=
\delta_{bb'}M_{b|y},
\qquad
\sum_{b=0}^{1}M_{b|y}
=
\mathbb I.
\end{equation}

Let $F_T(s)=c_{T,j}s+d_{T,j}$ on the affine branch $I_{T,j}$.
Substituting Eq.~\eqref{eq:unified-source-dual} into the global
optimization, we maximize
$R-c_{T,j}s_T-d_{T,j}$ over arbitrary-dimensional preparations,
projective decoders, and dual-feasible variables, subject to
$s_T\in I_{T,j}$. This reformulation is exact. Dual feasibility implies
$s_T\geq T_{\mathcal Q}(\mathcal E)$, an optimal dual tuple attains
equality, and the monotonicity of $F_T$ ensures that a nonoptimal tuple
cannot increase the objective. Maximizing over the affine branches
therefore recovers $\Delta_T^\star$.

At fixed source value $t$, imposing $s_T=t$ instead gives an outer
optimization for $\Delta_T^{\max}(t)$. Every ensemble with
$T_{\mathcal Q}(\mathcal E)=t$ is included through an optimal dual
tuple, while a nonoptimal tuple may additionally admit an ensemble with
$T_{\mathcal Q}(\mathcal E)<t$. The resulting optimum is therefore an
upper bound rather than an exact reformulation.

We relax both operator optimizations with the degree-three
operator-family moment relaxation $Q_3$
\cite{TavakoliPozasKerstjensBrownAraujo2024SDPReview}. The relaxation keeps
reduced words through degree three, imposes positivity of the moment
matrix and of the localizing matrices associated with the dual
inequalities. Since every quantum realization defines a feasible
truncated moment sequence,
\begin{equation}
\label{eq:q3-deviation-upper-bounds}
\begin{aligned}
\Delta_T^\star
&\leq
\Delta_{T,Q_3}^{\rm ub},
\\
\Delta_T^{\max}(t)
&\leq
\Delta_{T,Q_3}^{\rm ub}(t).
\end{aligned}
\end{equation}
Finite-dimensional see-saw optimization supplies feasible lower bounds,
and every reported source-task value is recomputed by its exact
semidefinite program. Appendix~\ref{app:quantum-relaxation} gives the
operator programs, moment construction, and numerical validation.
The ideal BB84 deviation from
Proposition~\ref{prop:exact-bb84-deformation-thresholds} can now be
compared with the global arbitrary-dimensional bound.

\begin{result}[Maximum quantum deviation]
\label{res:maximum-quantum-deviation}
For every
$T\in\{D,\Pi,D\oplus A,\Pi\oplus A\}$,
\begin{equation}
\label{eq:maximum-quantum-deviation}
0
\leq
\Delta_T^\star-\Delta_0
\leq
\Delta_{T,Q_3}^{\rm ub}-\Delta_0
<
5\times10^{-6}.
\end{equation}
The BB84 deviation is therefore within $5\times10^{-6}$ of the
arbitrary-dimensional optimum for all four source assumptions.
\end{result}

Figure~\ref{fig:quantum-deviation} compares the feasible deviations from
Proposition~\ref{prop:exact-bb84-deformation-thresholds} with the
pointwise arbitrary-dimensional upper bounds.

\begin{figure*}[t]
\centering
\includegraphics[width=\textwidth]{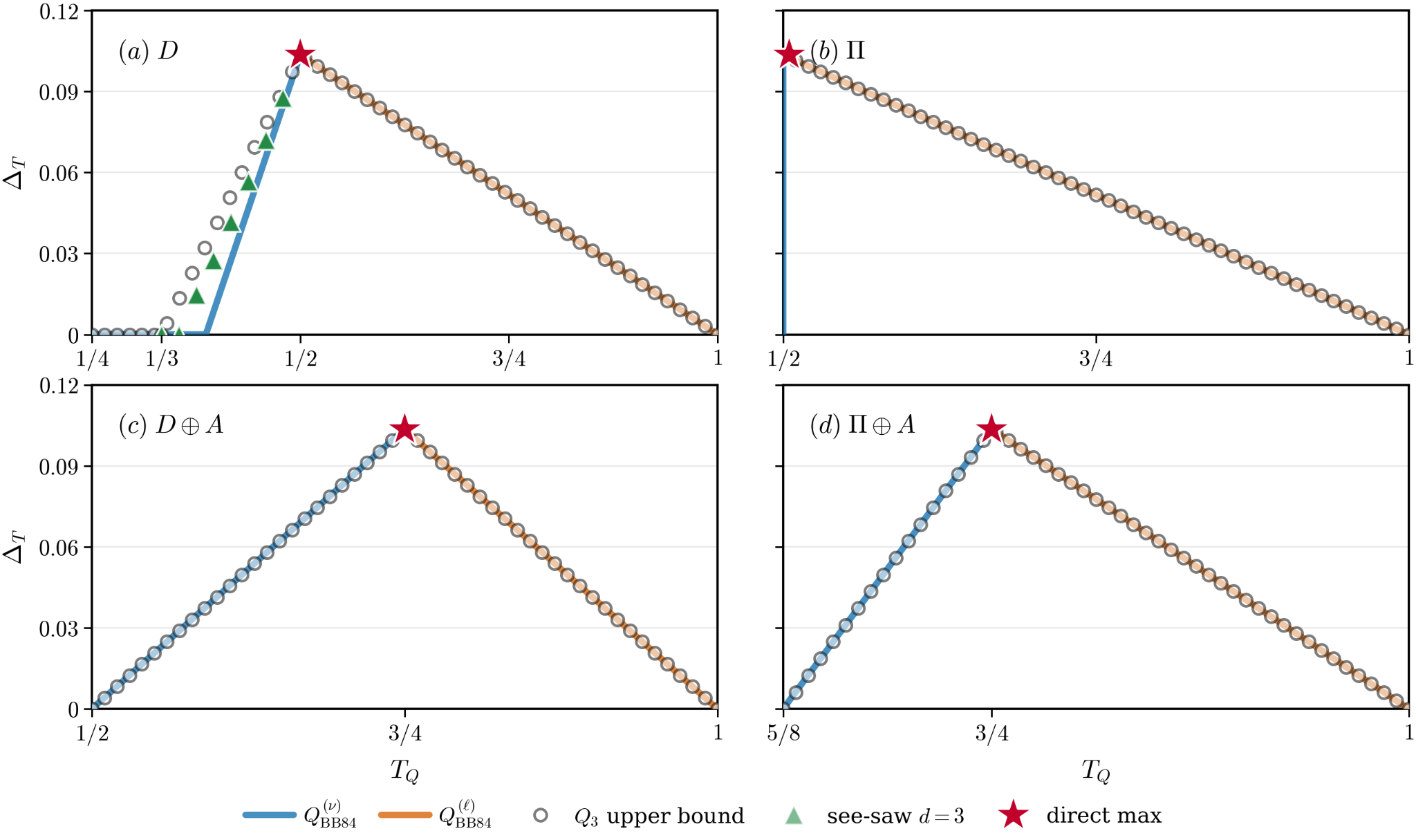}
\caption{Quantum deviation from the exact classical frontiers. Gray circles are
level-\(Q_3\) upper bounds on \(\Delta_T^{\max}(t)\). Blue and orange
curves are feasible preparation-depolarized and direct-sum leakage
families; red stars mark ideal BB84. Green triangles in panel~(a) are
feasible \(d=3\) see-saw points, with the source-task value recomputed by
its exact semidefinite program for the final ensemble.}
\label{fig:quantum-deviation}
\end{figure*}

\begin{result}[Exclusion-assisted quantum boundary]
\label{res:exclusion-assisted-quantum-boundary}
For
$T\in\{D\oplus A,\Pi\oplus A\}$, the preparation-depolarized family
matches the pointwise $Q_3$ upper bound at the sampled source values
below the BB84 task value $t=3/4$, while the direct-sum leakage
family matches it at the sampled source values above $t=3/4$, in both
cases within numerical precision. The two families therefore determine
the sampled arbitrary-dimensional quantum boundary for the
exclusion-assisted source assumptions.
\end{result}

For $\Pi$, preparation depolarization fixes the source-task value at
$\Pi=1/2$, and hence fixes the classical frontier at
$F_\Pi(1/2)=3/4$, while varying the RAC score. This produces the
vertical feasible segment in Fig.~\ref{fig:quantum-deviation};
direct-sum leakage spans the remaining source interval. For $D$,
feasible $d=3$ strategies exceed the qubit depolarization curve over
part of the intermediate range. The remaining gap to the $Q_3$ upper
bound leaves the exact arbitrary-dimensional $D$ boundary unresolved
there.

The preparation-depolarized and direct-sum leakage constructions
therefore provide the two source regimes carried into the security
analysis: loss of source contrast and explicit leakage. The receiver-side depolarization calculation separates these source
effects from measurement incompatibility, while the $Q_3$ bounds show that the
exclusion-assisted regimes lie on the sampled arbitrary-dimensional
quantum boundary. The security analysis now asks the operationally
different question: for the complete three-setting behavior and the same
source constraints, how much information can Eve have about the key?
\section{SDI-QKD under operational source assumptions}
\label{sec:sdiqkd-operational-source}

The previous section used operational task values to compare RAC
scores with exact classical frontiers. We now use the same
four-preparation ensemble for key distribution. Bob keeps the two RAC
settings for testing and adds a third binary setting for key generation.
This separates the RAC test from the key basis while leaving Bob's
measurements uncharacterized. Security is obtained by bounding Eve's
information about the retained raw key from the complete three-setting
behavior and the operational source assumption.

\subsection{Three-setting protocol and key state}
\label{subsec:three-setting-key-state}

The original RAC-based SDI-QKD construction uses one of the two decoding
measurements for key generation \cite{PawlowskiBrunner2011SDIQKD}. Even for the
ideal BB84 RAC strategy, this choice retains the intrinsic RAC error
$1-R_{\rm BB84}=(2-\sqrt2)/4$. We separate RAC testing from key
generation without changing Alice's four-preparation ensemble. Bob uses
$y\in\{0,1\}$ for the RAC test and adds a third binary setting, $y=2$, matched
to the branch
\begin{equation}
\label{eq:key-branch}
x_Z(0)=00,
\qquad
x_Z(1)=11 .
\end{equation}
In the ideal BB84 realization, the two preparations on this branch are
orthogonal and admit an error-free key measurement. This ideal
description is not a trusted-device assumption: all three of Bob's
settings remain uncharacterized. The source assumption applies to the
full emitted state $\rho_x$, including any $x$-dependent side system
or classical flag emitted with the signal.

For each $x\in\mathcal X$, let $\varrho_x^{BE}\succeq0$ be a normalized
joint state, with system $B$ delivered to Bob and system $E$ retained by
Eve. The Hilbert spaces of $B$ and $E$ are unrestricted. Bob's
outcome-zero effect for setting $y\in\{0,1,2\}$ is denoted by $B_y$,
with
\[
B_{0|y}=B_y,
\qquad
B_{1|y}=I_B-B_y .
\]
Since Bob's Hilbert space is unrestricted in the security optimization,
a common Naimark dilation allows the three binary measurements to be
represented projectively,
\begin{equation}
\label{eq:security-projective-bob}
B_y=B_y^\dagger=B_y^2,
\qquad
y\in\{0,1,2\},
\end{equation}
without imposing any commutation relation between distinct settings.
We record the outcome-zero probabilities as the complete response table
\begin{equation}
\label{eq:security-behavior}
\mathbf p
=
\left\{
p_{xy}
:=
\Tr\!\left[
\varrho_x^{BE}
(B_y\otimes I_E)
\right]
\right\}_{x\in\mathcal X,\;y\in\{0,1,2\}} .
\end{equation}
The first eight entries determine the RAC value used in the
classical-frontier test. The security analysis retains all twelve
entries, including the responses of the two non-key preparations to the
key setting. Thus $R$ is the test statistic, whereas $\mathbf p$ is the
data constraining the adversarial realization.

In a retained key round, Alice samples $K\in\{0,1\}$ uniformly and emits
$\varrho_{x_Z(K)}^{BE}$. Bob applies setting $y=2$ and records the
classical outcome $K_B=b$. With the projective representation in
Eq.~\eqref{eq:security-projective-bob}, define
\begin{equation}
\label{eq:eve-subnormalized-key-states}
\tau_{b|k}^{E}
:=
\operatorname{Tr}_{B}\!\left[
(B_{b|2}\otimes I_E)
\varrho_{x_Z(k)}^{BE}
\right],
\qquad
b,k\in\{0,1\}.
\end{equation}
The retained round is described by the classical-classical-quantum state
\begin{equation}
\label{eq:security-ccq-state}
\omega_{K K_B E}
=
\frac{1}{2}
\sum_{k,b=0}^{1}
\lvert k\rangle\!\langle k\rvert_{K}
\otimes
\lvert b\rangle\!\langle b\rvert_{K_B}
\otimes
\tau_{b|k}^{E}.
\end{equation}

\begin{proposition}[Retained key state and rate]
\label{prop:retained-key-state-rate}
Every Bob--Eve realization reproducing
Eq.~\eqref{eq:security-behavior} induces a normalized retained-round
state $\omega_{K K_B E}$ as in Eq.~\eqref{eq:security-ccq-state}. Its
classical marginal gives
\begin{equation}
\label{eq:key-qber}
Q_Z
=
\Pr[K_B\neq K]
=
\frac{1}{2}
\left(
1-p_{00,2}+p_{11,2}
\right),
\end{equation}
and its $K E$ marginal has the form
\begin{equation}
\label{eq:pm-key-reduction}
\omega_{K E}
=
\operatorname{Tr}_{K_B}
\omega_{K K_B E}
=
\frac{1}{2}
\sum_{k=0}^{1}
\lvert k\rangle\!\langle k\rvert_{K}
\otimes
\rho_k^E,
\end{equation}
where
\begin{equation}
\label{eq:eve-key-branch-states}
\rho_k^E
=
\operatorname{Tr}_{B}
\varrho_{x_Z(k)}^{BE},
\qquad
k\in\{0,1\}.
\end{equation}
Consequently, direct reconciliation satisfies
\begin{equation}
\label{eq:dw-key-rate}
r
\ge
H(K|E)_\omega
-
H(K|K_B)_\omega
\ge
H(K|E)_\omega
-
h_2(Q_Z).
\end{equation}
\end{proposition}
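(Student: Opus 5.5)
The plan is to verify each claim directly from the definitions, treating the rate bound as an invocation of the Devetak--Winter theorem.

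\emph{Normalization and positivity.} For each $k$ and $b$, the operator $\tau_{b|k}^E$ is positive semidefinite. It is a partial trace of $(B_{b|2}^{1/2}\otimes I_E)\varrho_{x_Z(k)}^{BE}(B_{b|2}^{1/2}\otimes I_E)$, and cyclicity in $B$ lets us write it this way because $B_{b|2}$ is a projector. Summing over $b$ and using $B_{0|2}+B_{1|2}=I_B$ gives $\sum_b\tau_{b|k}^E=\Tr_B\varrho_{x_Z(k)}^{BE}=\rho_k^E$, which is a normalized state. Hence $\omega_{KK_BE}$ in Eq.~\eqref{eq:security-ccq-state} is a normalized ccq state, and tracing out $K_B$ yields Eq.~\eqref{eq:pm-key-reduction} with $\rho_k^E$ as in Eq.~\eqref{eq:eve-key-branch-states}. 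This marginal does not depend on Bob's key measurement. That is the prepare-and-measure reduction used later: Eve's side information about $K$ is fixed by the emitted branch states alone.

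\emph{Error rate.} Next I compute $\Pr[K_B=b,K=k]=\tfrac12\Tr\tau_{b|k}^E=\tfrac12\Tr[\varrho_{x_Z(k)}^{BE}(B_{b|2}\otimes I_E)]$. Writing $x_Z(0)=00$ and $x_Z(1)=11$, this gives $\Pr[K_B=1,K=0]=\tfrac12(1-p_{00,2})$ and $\Pr[K_B=0,K=1]=\tfrac12 p_{11,2}$. Adding the two yields Eq.~\eqref{eq:key-qber}. Since these are entries of $\mathbf p$, every realization reproducing Eq.~\eqref{eq:security-behavior} has the same $Q_Z$.

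\emph{Rate.} I would invoke the Devetak--Winter bound for one-way (direct) reconciliation from Alice's classical $K$. Under the standard i.i.d.\ collective-attack setting, it gives $r\ge H(K|E)_\omega-H(K|K_B)_\omega$. For the second inequality, I bound $H(K|K_B)\le H(K\oplus K_B|K_B)+H(K|K\oplus K_B,K_B)=H(K\oplus K_B|K_B)\le H(K\oplus K_B)=h_2(Q_Z)$. The middle equality holds because $K$ is a function of $K_B$ and $K\oplus K_B$, and the final step holds because conditioning does not increase entropy. Equivalently, this is Fano's inequality for a binary alphabet.

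The main obstacle is not computational. It is making precise the scope of the Devetak--Winter step, namely that the rate applies to i.i.d.\ (collective) realizations of the ccq state. I would state that assumption explicitly, and defer extension to general attacks to standard de Finetti or entropy-accumulation arguments. Everything else is routine linear algebra.
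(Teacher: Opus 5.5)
Your proposal is correct and follows essentially the same route as the paper. You sum $\tau_{b|k}^E$ over $b$ using $B_{0|2}+B_{1|2}=I_B$ to get the $KE$ marginal and normalization, read $Q_Z$ off $\tfrac12\operatorname{Tr}\tau_{b|k}^E$, and invoke Devetak--Winter together with the binary Fano bound; your explicit positivity check for $\tau_{b|k}^E$, your chain-rule derivation of $H(K|K_B)\le h_2(Q_Z)$, and your note on the i.i.d.\ scope fill in details the paper leaves implicit.
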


\begin{proof}
Taking the \(K E\) marginal of Eq.~\eqref{eq:security-ccq-state} gives
\[
\operatorname{Tr}_{K_B}\omega_{K K_B E}
=
\frac{1}{2}
\sum_{k=0}^{1}
\lvert k\rangle\!\langle k\rvert_K
\otimes
\sum_{b=0}^{1}\tau_{b|k}^{E}.
\]
By the definition of \(\tau_{b|k}^{E}\) and
\(\sum_b B_{b|2}=I_B\),
\[
\sum_{b=0}^{1}\tau_{b|k}^{E}
=
\operatorname{Tr}_{B}\varrho_{x_Z(k)}^{BE}
=
\rho_k^E .
\]
Hence the \(K E\) marginal is Eq.~\eqref{eq:pm-key-reduction}. Since
each \(\varrho_{x_Z(k)}^{BE}\) is normalized, the same identity also
shows that \(\omega_{K K_B E}\) is normalized.

The classical marginal of Eq.~\eqref{eq:security-ccq-state} satisfies
\[
\Pr[K=k,K_B=b]
=
\frac{1}{2}\operatorname{Tr}\tau_{b|k}^{E}.
\]
Therefore
\[
\Pr[K_B\neq K]
=
\frac{1}{2}\operatorname{Tr}\tau_{1|0}^{E}
+
\frac{1}{2}\operatorname{Tr}\tau_{0|1}^{E}
=
\frac{1}{2}
\left(
1-p_{00,2}+p_{11,2}
\right),
\]
where \(x_Z(0)=00\), \(x_Z(1)=11\), \(B_{0|2}=B_2\), and
\(B_{1|2}=I_B-B_2\). This proves Eq.~\eqref{eq:key-qber}. The first
inequality in Eq.~\eqref{eq:dw-key-rate} is the Devetak--Winter
direct-reconciliation bound~\cite{DevetakWinter2005SecretKeyDistillation}; the second is
Fano's bound for a binary raw key.
\end{proof}

Thus the security optimization is a constrained minimization of
$H(K|E)_\omega$: the states $\rho_0^E$ and $\rho_1^E$ are the retained-key
marginals of the same Bob--Eve realization that reproduces the full
response table $\mathbf p$ and satisfies the operational source
assumption.
\subsection{Lifting source assumptions to Bob--Eve extensions}
\label{subsec:source-lift-bobeve}

The source assumption is a constraint on the emitted ensemble
\(\mathcal E=\{\rho_x^A\}_{x\in\mathcal X}\). An admissible
Bob--Eve realization is obtained from these emitted states by a single
quantum post-processing map, independent of the classical value $x$:
\begin{equation}
\label{eq:channel-lift-security}
\varrho_x^{BE}
=
\Lambda_{A\to BE}(\rho_x^A),
\qquad
x\in\mathcal X .
\end{equation}
The map \(\Lambda_{A\to BE}\) is otherwise arbitrary; in particular,
the dimensions of \(B\) and \(E\) are unrestricted. This is precisely
the assumption that any information about $x$ available to Bob or Eve must
come through the emitted state itself, not through an additional
$x$-dependent post-processing.

\begin{proposition}[Data processing for operational source bounds]
\label{prop:channel-lift-source-assumptions}
Let \(T\in\{D,\Pi,D\oplus A,\Pi\oplus A\}\). If
\(T_{\mathcal Q}(\mathcal E)\leq\tau\), then every ensemble
\(\{\varrho_x^{BE}\}_{x\in\mathcal X}\) of the form
Eq.~\eqref{eq:channel-lift-security} satisfies
\begin{equation}
\label{eq:source-monotonicity-security}
T_{\mathcal Q}\!\left(\{\varrho_x^{BE}\}_{x\in\mathcal X}\right)
\leq
\tau .
\end{equation}
Consequently, every physical Bob--Eve realization is feasible for the
lifted security optimization defined by
Eq.~\eqref{eq:source-monotonicity-security}.
\end{proposition}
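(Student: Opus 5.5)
The argument is a direct data-processing step: any measurement on the lifted ensemble can be pulled back through the channel to a measurement on the emitted ensemble. Fix an admissible channel \(\Lambda_{A\to BE}\) as in Eq.~\eqref{eq:channel-lift-security}, and let \(\Lambda^\dagger\) be its Heisenberg-picture adjoint. Since \(\Lambda\) is completely positive and trace preserving, \(\Lambda^\dagger\) is completely positive and unital. Take any POVM \(\{N_\omega\}_{\omega\in\Omega}\) on \(BE\) with outcome alphabet \(\Omega\in\{\mathcal X,\{0,1\}\}\). Then \(\{\Lambda^\dagger(N_\omega)\}_{\omega}\) is a POVM on \(A\) with the same alphabet, and
\[
\operatorname{Tr}\!\left[\Lambda(\rho_x^A)N_\omega\right]
=
\operatorname{Tr}\!\left[\rho_x^A\Lambda^\dagger(N_\omega)\right]
\]
holds for every \(x\) and every \(\omega\).

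Consequently, every success probability appearing in Eq.~\eqref{eq:source-task-definitions} for the lifted ensemble \(\{\varrho_x^{BE}\}\) coincides with the same score for the pulled-back measurement on \(\{\rho_x^A\}\). This applies to four-state discrimination with outcome \(x\), to parity discrimination with outcome \(x_0\oplus x_1\), and to exclusion with outcome \(z\neq x\). The score functional itself is unchanged, because it depends only on the behavior \(p(\omega|x)\). Taking suprema therefore gives
\[
D_{\mathcal Q}(\{\varrho_x^{BE}\})\le D_{\mathcal Q}(\mathcal E),\qquad
\Pi_{\mathcal Q}(\{\varrho_x^{BE}\})\le \Pi_{\mathcal Q}(\mathcal E),\qquad
A_{\mathcal Q}(\{\varrho_x^{BE}\})\le A_{\mathcal Q}(\mathcal E).
\]

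The composite assumptions follow from the same bounds. By Eq.~\eqref{eq:composite-source-tasks}, \(D\oplus A\) and \(\Pi\oplus A\) are averages of two component optima, and each component is nonincreasing, so each average is nonincreasing as well. The same conclusion holds if one reads the composites as two-setting tasks: the pullback applies to each setting-dependent measurement separately.

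Putting these together, \(T_{\mathcal Q}(\{\varrho^{BE}_x\})\le T_{\mathcal Q}(\mathcal E)\le\tau\) for every \(T\in\{D,\Pi,D\oplus A,\Pi\oplus A\}\), which is Eq.~\eqref{eq:source-monotonicity-security}. The final claim is then immediate: every physical Bob--Eve realization has the form of Eq.~\eqref{eq:channel-lift-security} for some \(\Lambda\), so it satisfies the lifted constraint and is feasible.

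The argument has no substantive obstacle; only two points need care. The first concerns unrestricted, possibly infinite-dimensional \(B\) and \(E\): there the adjoint should be taken as the normal unital CP map on bounded operators, and the suprema range over POVMs. The second is that the bound relies on \(\Lambda\) being independent of \(x\); otherwise the pullback measurement would depend on \(x\) and the argument would fail.
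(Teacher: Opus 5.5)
Your proof is correct and takes the same route as the paper. Both pull each task POVM back through the unital positive adjoint $\Lambda^\dagger$, note that success probabilities are preserved so no optimum can increase, and then handle the composites through their definition as averages; your remarks on infinite-dimensional $B,E$ and on the need for $\Lambda$ to be independent of $x$ are appropriate refinements of the same argument.
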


\begin{proof}
Let \(M\) be any POVM used in one of the guessing or exclusion tasks on
\(\{\varrho_x^{BE}\}_x\). Pulling it back through the adjoint channel
\(\Lambda_{A\to BE}^{\dagger}\) gives a POVM on Alice's emitted system,
because \(\Lambda_{A\to BE}^{\dagger}\) is unital and positive. The
success probability of \(M\) on \(\{\varrho_x^{BE}\}_x\) is exactly the
success probability of the pulled-back POVM on \(\mathcal E\). Thus no
four-state discrimination, parity discrimination, or state-exclusion value can
increase under Eq.~\eqref{eq:channel-lift-security}. Applying this to
the tasks defining \(D\), \(\Pi\), \(D\oplus A\), and \(\Pi\oplus A\)
gives
\[
T_{\mathcal Q}\!\left(\{\varrho_x^{BE}\}_x\right)
\leq
T_{\mathcal Q}(\mathcal E)
\leq
\tau .
\]
For a worst-case lower bound on privacy, enlarging the feasible set can
only decrease the infimum of \(H(K|E)_\omega\), so the resulting key-rate
bound remains valid.
\end{proof}

We use the lifted bound as the source constraint in the adversarial
optimization. Let \(\mathcal F_T(\mathbf p,\tau)\) be the set of all
unrestricted-dimensional Bob--Eve ensembles
\(\{\varrho_x^{BE}\}_{x\in\mathcal X}\) and projective Bob effects
\(\{B_y\}_{y=0}^{2}\) that reproduce
Eq.~\eqref{eq:security-behavior} and satisfy
\[
T_{\mathcal Q}\!\left(\{\varrho_x^{BE}\}_{x\in\mathcal X}\right)
\leq
\tau .
\]
For \(T=D\oplus A\) and \(T=\Pi\oplus A\), this is one scalar constraint
on the composite quantity, not separate constraints on its components.
Define
\begin{equation}
\label{eq:worst-case-privacy}
\mathsf H_T(\mathbf p,\tau)
:=
\inf_{\mathcal F_T(\mathbf p,\tau)}
H(K|E)_\omega .
\end{equation}

\begin{theorem}[SDI-QKD security under an operational source assumption]
\label{thm:operational-key-rate}
In the asymptotic i.i.d. regime, every observed behavior \(\mathbf p\)
and every emitted ensemble satisfying \(T_{\mathcal Q}(\mathcal E)\leq\tau\)
obey
\begin{equation}
\label{eq:operational-key-rate}
r
\geq
\mathsf H_T(\mathbf p,\tau)
-
h_2(Q_Z).
\end{equation}
\end{theorem}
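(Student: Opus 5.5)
The proof chains three earlier results: Proposition~\ref{prop:retained-key-state-rate}, Proposition~\ref{prop:channel-lift-source-assumptions}, and the definition of \(\mathsf H_T(\mathbf p,\tau)\) in Eq.~\eqref{eq:worst-case-privacy}.

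First, fix the actual physical realization. This consists of Alice's emitted ensemble \(\mathcal E\) with \(T_{\mathcal Q}(\mathcal E)\leq\tau\), an unknown channel \(\Lambda_{A\to BE}\) chosen by Eve, and Bob's three uncharacterized binary measurements. Since the Hilbert spaces are unrestricted, we Naimark-dilate Bob's measurements into a common projective form, as in Eq.~\eqref{eq:security-projective-bob}. This enlarges \(B\) without changing the statistics or Eve's marginals. In the asymptotic i.i.d.\ regime, the observed frequencies converge to the response table \(\mathbf p\) of Eq.~\eqref{eq:security-behavior}. Hence this realization reproduces \(\mathbf p\).

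Second, apply Proposition~\ref{prop:channel-lift-source-assumptions}. The lifted ensemble \(\{\varrho_x^{BE}\}\) satisfies \(T_{\mathcal Q}(\{\varrho_x^{BE}\})\leq\tau\). Together with the projective effects, it is therefore a member of \(\mathcal F_T(\mathbf p,\tau)\). Consequently the actual retained-round state \(\omega_{KK_BE}\) obeys \(H(K|E)_\omega\geq\mathsf H_T(\mathbf p,\tau)\), because \(\mathsf H_T\) is an infimum over a set containing this realization.

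Third, invoke Proposition~\ref{prop:retained-key-state-rate} for the actual state. It gives \(r\geq H(K|E)_\omega-h_2(Q_Z)\) with \(Q_Z\) fixed by \(\mathbf p\) through Eq.~\eqref{eq:key-qber}. Note that \(Q_Z\) is determined by the data alone and is independent of the realization. Combining this with the second step yields Eq.~\eqref{eq:operational-key-rate}.

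The only delicate point is the justification of the Devetak--Winter rate itself. It requires that collective attacks be i.i.d., meaning the same \(\Lambda\) and Bob measurements act in every round, and that the rate be evaluated on \(\omega\) in the asymptotic limit. I would state this as the operating assumption encoded in ``asymptotic i.i.d.\ regime.'' I would also remark that the infimum is well defined even if \(\mathcal F_T\) is empty: in that case \(\mathbf p\) is inconsistent with the assumption, the protocol aborts, and the bound holds vacuously. A secondary check is that the Naimark dilation preserves the source bound. It does, because the bound concerns only \(\varrho_x^{BE}\), and the dilation ancilla is appended independently of \(x\), i.e.\ it is a further input-independent channel.
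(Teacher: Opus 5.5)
Your proposal is correct and follows the paper's proof: the data-processing lift of Proposition~\ref{prop:channel-lift-source-assumptions} places the actual realization in \(\mathcal F_T(\mathbf p,\tau)\), so \(H(K|E)_\omega\geq\mathsf H_T(\mathbf p,\tau)\), and the Devetak--Winter/Fano bound of Eq.~\eqref{eq:dw-key-rate} then gives Eq.~\eqref{eq:operational-key-rate}. Your extra remarks make explicit points the paper leaves implicit: the Naimark dilation is itself an input-independent channel, and the i.i.d.\ collective-attack setting is what justifies the Devetak--Winter rate.
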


\begin{proof}
By Proposition~\ref{prop:channel-lift-source-assumptions}, every
physical Bob--Eve realization compatible with the source assumption and
the observed behavior belongs to \(\mathcal F_T(\mathbf p,\tau)\). Hence
\(H(K|E)_\omega\geq\mathsf H_T(\mathbf p,\tau)\) for every such
realization. Combining this with Eq.~\eqref{eq:dw-key-rate} gives
Eq.~\eqref{eq:operational-key-rate}.
\end{proof}

The theorem reduces security to a dimension-independent entropy
minimization over \(\mathcal F_T(\mathbf p,\tau)\). The next two
subsections give two certificates for this quantity.

\subsection{Min-entropy certificate}
\label{subsec:minentropy-certificate}

The min-entropy certificate bounds Eve's optimal probability of guessing
the retained key bit over the same feasible set
\(\mathcal F_T(\mathbf p,\tau)\). For the cq state in
Eq.~\eqref{eq:pm-key-reduction},
\begin{equation}
\label{eq:minentropy-definition}
H_{\min}(K|E)_\omega
=
-\log_2 p_{\mathrm{guess}}(K|E)_\omega .
\end{equation}
The corresponding worst-case guessing probability is
\begin{equation}
\label{eq:key-guessing-certificate}
\begin{aligned}
p_{\mathrm{guess},T}^{\star}(\mathbf p,\tau)
:=
\sup_{\substack{
\mathcal F_T(\mathbf p,\tau)\\
0\preceq E_0\preceq I_E
}}
\frac{1}{2}
\Bigl(
&
\operatorname{Tr}[\rho_0^E E_0]
\\[-1mm]
&+
\operatorname{Tr}[\rho_1^E(I_E-E_0)]
\Bigr),
\end{aligned}
\end{equation}
where \(\rho_k^E\) is the marginal defined in
Eq.~\eqref{eq:eve-key-branch-states} for the feasible Bob--Eve ensemble
being optimized.

A finite-level noncommutative moment relaxation gives a validated upper
bound
\begin{equation}
\label{eq:pguess-upper-bound}
p_{\mathrm{guess},T}^{\star}(\mathbf p,\tau)
\leq
\overline p_{\mathrm{guess},T}(\mathbf p,\tau).
\end{equation}
Since \(H(K|E)_\omega\geq H_{\min}(K|E)_\omega\), the certified
direct-reconciliation rate is
\begin{equation}
\label{eq:minentropy-rate}
r
\geq
r_T^{\min}(\mathbf p,\tau)
:=
-\log_2
\overline p_{\mathrm{guess},T}(\mathbf p,\tau)
-
h_2(Q_Z).
\end{equation}

This certificate has a direct operational meaning: it bounds Eve's
single-bit guessing probability. The moment relaxation and validation
conditions are given in Appendix~\ref{app:minentropy-sdp}.

\subsection{PM-BFF conditional-entropy certificate}
\label{subsec:pmbff-certificate}

The PM-BFF certificate lower-bounds the worst-case conditional entropy
\(\mathsf H_T(\mathbf p,\tau)\) in Eq.~\eqref{eq:worst-case-privacy}
without passing through Eve's guessing probability. We use a
prepare-and-measure version of the Brown--Fawzi--Fawzi variational
method for conditional entropy
\cite{BrownFawziFawzi2021ConditionalEntropies,BrownFawziFawzi2024ConditionalVonNeumannEntropy}. All entropies below are
in bits. For the retained-key cq state in
Eq.~\eqref{eq:pm-key-reduction},
\begin{equation}
\label{eq:bff-relative-entropy}
H(K|E)_\omega
=
-
D_{\rm rel}
\!\left(
\omega_{KE}
\,\middle\|\,
I_K\otimes\omega_E
\right),
\qquad
\omega_E
=
\frac{\rho_0^E+\rho_1^E}{2}.
\end{equation}
Here \(D_{\rm rel}\) is the Umegaki relative entropy. Write
\begin{equation}
\label{eq:bff-omega}
\Omega^E
:=
\rho_0^E+\rho_1^E
=
2\omega_E .
\end{equation}

Since \(K\) is classical, the BFF auxiliary operator on \(K\otimes E\)
may be chosen block diagonal in the key register. At a node
\(t\in(0,1)\), write it as
\[
Z_t
=
\sum_{k=0}^{1}
\lvert k\rangle\!\langle k\rvert_K
\otimes
Z_{k,t}.
\]
Substituting this form into the BFF node functional for
\(\omega_{KE}\) leaves one Eve-side auxiliary operator \(Z_{k,t}\) for
each key value. With \(\Omega^E=\rho_0^E+\rho_1^E\), the node
functional is
\begin{equation}
\label{eq:bff-node-functional}
\begin{aligned}
\beta_t(\rho_0^E,\rho_1^E)
:=
\inf_{Z_{0,t},Z_{1,t}}
\frac{1}{2}
\sum_{k=0}^{1}
\Bigl\{
&
\operatorname{Tr}\!\left[
\rho_k^E
\left(
Z_{k,t}+Z_{k,t}^{\dagger}
\right)
\right]
\\[-1mm]
&+
(1-t)
\operatorname{Tr}\!\left[
\rho_k^E Z_{k,t}^{\dagger}Z_{k,t}
\right]
\\[-1mm]
&+
t
\operatorname{Tr}\!\left[
\Omega^E Z_{k,t}Z_{k,t}^{\dagger}
\right]
\Bigr\}.
\end{aligned}
\end{equation}
The first two terms are branchwise. The last term contains the common
Eve marginal and is the only coupling between the two key branches.

For the finite node set \(\{t_i\}_{i=1}^{m}\subset(0,1)\), let
\(\alpha_i=w_i/(t_i\ln2)\) and define
\begin{equation}
\label{eq:bff-node-worst-case}
\beta_{T,i}^{\star}(\mathbf p,\tau)
:=
\inf_{\mathcal F_T(\mathbf p,\tau)}
\beta_{t_i}(\rho_0^E,\rho_1^E),
\end{equation}
where the states \(\rho_k^E\) are the retained-key marginals induced by
the feasible Bob--Eve ensemble.

\begin{proposition}[PM-BFF conditional-entropy certificate]
\label{prop:pmbff-entropy-certificate}
For every \(T\in\{D,\Pi,D\oplus A,\Pi\oplus A\}\),
\begin{equation}
\label{eq:bff-entropy-certificate}
\mathsf H_{T,m}^{\rm BFF}(\mathbf p,\tau)
:=
\sum_{i=1}^{m}
\alpha_i
\left[
1+
\beta_{T,i}^{\star}(\mathbf p,\tau)
\right]
\leq
\mathsf H_T(\mathbf p,\tau).
\end{equation}
Consequently,
\begin{equation}
\label{eq:bff-key-rate}
r
\geq
r_T^{\rm BFF}(\mathbf p,\tau)
:=
\mathsf H_{T,m}^{\rm BFF}(\mathbf p,\tau)
-
h_2(Q_Z).
\end{equation}
\end{proposition}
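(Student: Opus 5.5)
The plan is to invoke the Brown--Fawzi--Fawzi variational lower bound for the conditional von Neumann entropy of a cq state, specialize it to the retained-key state $\omega_{KE}$ of Proposition~\ref{prop:retained-key-state-rate}, and then pass to the infimum over $\mathcal F_T(\mathbf p,\tau)$. The BFF theorem states that, for a Gauss--Radau quadrature with nodes $t_i\in(0,1)$ and positive weights $w_i$ (with the endpoint node $t_m=1$ handled by the standard convention or excluded so that $\alpha_i=w_i/(t_i\ln2)$ is as defined), and for every fixed state $\omega_{KE}$ with classical $K$,
\[
H(K|E)_\omega\geq \sum_{i=1}^m \alpha_i\Bigl[1+\inf_{Z_{t_i}}\sum_{k}\tfrac12\bigl(\Tr[\rho_k^E(Z_{k}+Z_k^\dagger)]+(1-t_i)\Tr[\rho_k^E Z_k^\dagger Z_k]+t_i\Tr[\Omega^E Z_kZ_k^\dagger]\bigr)\Bigr],
\]
up to the constant normalizing term. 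I will check this form carefully: BFF write the bound with $\Tr[\rho_{KE}(Z+Z^\dagger+(1-t)Z^\dagger Z)]+t\Tr[(I_K\otimes\rho_E)ZZ^\dagger]$. Inserting $\rho_{KE}=\omega_{KE}=\tfrac12\sum_k|k\rangle\langle k|\otimes\rho_k^E$ and $\rho_E=\omega_E=\Omega^E/2$ gives exactly the node functional of Eq.~\eqref{eq:bff-node-functional}. For a general, not necessarily block-diagonal, $Z_t$, the off-diagonal blocks drop out of the first two traces. They contribute only nonnegatively to the third term, since $Z Z^\dagger$ restricted to the diagonal blocks dominates the block-diagonal part. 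Hence restricting to block-diagonal $Z_t$ does not change the infimum, and the bound is stated directly with $\beta_t$.

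Next comes the minimization over realizations. For each feasible element of $\mathcal F_T(\mathbf p,\tau)$ the induced marginals $\rho_k^E$ satisfy the pointwise bound above. Because every $\alpha_i>0$, we may replace each $\beta_{t_i}(\rho_0^E,\rho_1^E)$ by its infimum $\beta_{T,i}^\star(\mathbf p,\tau)$ over the same feasible set, which only decreases the right-hand side. This gives $H(K|E)_\omega\geq \mathsf H^{\rm BFF}_{T,m}(\mathbf p,\tau)$ for every feasible realization. Taking the infimum on the left yields Eq.~\eqref{eq:bff-entropy-certificate}.

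The rate statement then follows from Theorem~\ref{thm:operational-key-rate}, since $\mathsf H_T\geq\mathsf H^{\rm BFF}_{T,m}$. Proposition~\ref{prop:channel-lift-source-assumptions} ensures that all physical realizations lie in the feasible set, so the worst case is correctly covered.

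The main obstacle is the exact bookkeeping of the BFF bound: the sign and constant conventions, the handling of the final quadrature node, and the factor $1/2$ from uniform $K$. I would first verify these on a trivial case, $\rho_0^E=\rho_1^E$, where $H(K|E)=1$, to confirm the constant term "$1+$" and the normalization of the $\alpha_i$. A secondary point is that Eve's Hilbert space is infinite-dimensional. I would handle this by noting that the BFF inequality holds in finite dimension for every $Z$ and extends by approximation; alternatively, the infimum over bounded $Z$ suffices for a lower bound.
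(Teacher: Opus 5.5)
Your proposal is correct and follows essentially the same route as the paper. Both apply the Brown--Fawzi--Fawzi Gauss--Radau bound pointwise to the retained-key cq state of each feasible realization, replace each node functional by its infimum over $\mathcal F_T(\mathbf p,\tau)$ using $\alpha_i>0$, take the infimum of $H(K|E)_\omega$, and obtain the rate from Theorem~\ref{thm:operational-key-rate}. One small correction to your block-diagonal justification, which the paper merely asserts: the off-diagonal blocks of $Z_t=\sum_{j,k}\lvert j\rangle\!\langle k\rvert_K\otimes Z_{jk}$ drop out only of the linear term, and they also enter the $(1-t)Z_t^\dagger Z_t$ term, but again only nonnegatively since $\sum_j Z_{jk}^\dagger Z_{jk}\succeq Z_{kk}^\dagger Z_{kk}$, so your conclusion stands.
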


\begin{proof}
For each fixed feasible Bob--Eve realization, the
Brown--Fawzi--Fawzi variational formula and the chosen Gauss--Radau
finite-node approximation give
\[
H(K|E)_\omega
\geq
\sum_{i=1}^{m}
\alpha_i
\left[
1+
\beta_{t_i}(\rho_0^E,\rho_1^E)
\right].
\]
Taking the infimum over \(\mathcal F_T(\mathbf p,\tau)\) and using
\(\alpha_i>0\) gives Eq.~\eqref{eq:bff-entropy-certificate}. The rate
bound follows from Theorem~\ref{thm:operational-key-rate}.
\end{proof}

The node minimizers satisfy an Eve-side Sylvester equation. Varying
Eq.~\eqref{eq:bff-node-functional} with respect to
\(Z_{k,t}^{\dagger}\) gives
\begin{equation}
\label{eq:bff-sylvester-equation}
\rho_k^E
+
(1-t)Z_{k,t}\rho_k^E
+
t\Omega^E Z_{k,t}
=
0 .
\end{equation}
For the regularized interior-node problem used in the finite relaxation,
the same equation gives the dimension-independent auxiliary bound
\begin{equation}
\label{eq:bff-sylvester-bound}
\|Z_{k,t}\|
\leq
\frac{1}{2\sqrt{t(1-t)}} .
\end{equation}
In the noncommutative moment relaxation,
Eq.~\eqref{eq:bff-sylvester-equation} becomes a set of linear moment
constraints, while Eq.~\eqref{eq:bff-sylvester-bound} gives the
corresponding norm localizers. These constraints tighten the relaxation
without imposing any dimension bound on Eve. The construction and
finite-level validation are given in Appendix~\ref{app:pm-bff}.

For fixed \(T\), \(\mathbf p\), and \(\tau\), the min-entropy and
PM-BFF certificates are evaluated over the same feasible set
\(\mathcal F_T(\mathbf p,\tau)\). Their difference is the entropy
functional being certified: the min-entropy program bounds Eve's
single-bit guessing probability, while PM-BFF lower-bounds the
conditional von Neumann entropy entering the Devetak--Winter rate.
Changing \(D\) to \(D\oplus A\), or \(\Pi\) to \(\Pi\oplus A\), changes
the feasible set itself. The next section evaluates these certificates
for the source regimes identified above.

\section{Finite-level key-rate bounds}
\label{sec:keyrate-products}

The previous section leaves a concrete object to certify:
\(\mathsf H_T(\mathbf p,\tau)\), the worst-case conditional entropy over
the lifted Bob--Eve feasible set. The finite-level calculations below
separate three ingredients: the observed behavior, the operational
source assumption, and the entropy certificate. The behavior is always
the complete three-setting table \(\mathbf p\) of
Eq.~\eqref{eq:security-behavior}, the source assumption is always a
scalar operational bound \(T_{\mathcal Q}\leq\tau\), and Bob's
measurements remain uncharacterized.

The min-entropy relaxation certifies the optimized retained-key guessing
probability \(p_{\mathrm{guess},T}^{\star}(\mathbf p,\tau)\). The PM-BFF
relaxation certifies the conditional von Neumann entropy entering the
Devetak--Winter rate. All PM-BFF curves use one joint Bob--Eve
relaxation, with the selected source task entering through its cover
constraints and trace bound. All min-entropy curves use one degree-three
guessing-probability relaxation. Every finite-level PM-BFF or
min-entropy point plotted below is an accepted certificate satisfying the
verification criteria in Appendix~\ref{app:keyrate-numerics}.

We first freeze the operational source assumption at \(D\leq1/2\), the
operational consequence of a qubit source. This is weaker than imposing
dimension two. Any improvement in this benchmark therefore comes from
the protocol or from the entropy certificate, not from stronger source
information. We then replace this fixed implication by the actual
operational task values of the emitted ensemble, introduce receiver-side
depolarization of the reference decoders used to generate the observed
behavior, and finally test direct-sum leakage.
Table~\ref{tab:critical-visibility-summary} collects the critical source
visibilities used in the comparisons below.

\begin{table}[t]
\footnotesize
\begin{ruledtabular}
\begin{tabular}{l c c}
source assumption and protocol & min-entropy & PM-BFF\\
\hline
$D\leq1/2$, RAC key, linear \cite{PawlowskiBrunner2011SDIQKD} & $0.9659$ & ---\\
$D\leq1/2$, RAC key, squared \cite{RajPrasadChaturvediEtAl2026WaveParticleSDI} & $0.9428$ & ---\\
$D\leq1/2$, retained key & $0.8895$ & $0.8741$\\
$\Pi=1/2$ & $0.8895$ & $0.8546$\\
$D^{(\nu)}$ & $0.8162$ & $0.7202$\\
$(D\oplus A)^{(\nu)}$ & $0.0372$ & $0.0106$\\
$(\Pi\oplus A)^{(\nu)}$ & $0.0526$ & $0.0112$\\
\end{tabular}
\end{ruledtabular}
\caption{Critical source visibilities for positive certified key under
preparation depolarization. The first two rows are the earlier RAC-key
protocols, where the RAC output itself is the raw key. All remaining rows
use the three-setting retained-key protocol and the complete observed
three-setting behavior. Each finite-level value is the linear root estimate
bracketed by accepted samples of opposite sign, as described in
Appendix~\ref{app:keyrate-numerics}.}
\label{tab:critical-visibility-summary}
\end{table}

\subsection{Dimension-implied distinguishability benchmark}
\label{subsec:dimension-implied-D}

By Eq.~\eqref{eq:qubit-implies-D-main}, a qubit source implies
\(D\leq1/2\), but the converse is false. We impose only this operational
consequence. The feasible set therefore contains every emitted ensemble,
in arbitrary Hilbert-space dimension, satisfying the same
distinguishability bound.

The visibility \(\nu\) in this benchmark parametrizes depolarization of
the observed behavior, not the source constraint. For every point on the
orange and blue curves, the operational source assumption remains the fixed bound
\(D\leq1/2\). The comparison therefore isolates two effects before any
additional source information is used. Earlier qubit-source SDI
protocols use the RAC output itself as the raw key, so their
reconciliation cost is fixed by Bob's RAC-output guessing probability.
Here the RAC settings test the source, while the additional setting
\(y=2\) generates the retained key on the branch defined in
Eq.~\eqref{eq:key-branch}; for ideal BB84 this key branch has \(Q_Z=0\). The second effect is the entropy certificate: the
min-entropy curve uses the retained-key optimizer
\(p_{\mathrm{guess},T}^{\star}(\mathbf p,\tau)\), while PM-BFF certifies
the conditional von Neumann entropy directly.

For the same depolarized RAC statistics, Bob's RAC-output guessing
probability is
\(p_{\mathrm{guess}}^{B,\mathrm{RAC}}(\nu)=1/2+\nu/(2\sqrt2)\). The
linear RAC-key proof gives the Eve-side bound
\(p_{\mathrm{guess}}^{E,\mathrm{RAC},\mathrm{lin}}(\nu)
=(3+\sqrt3)/4-\nu/(2\sqrt2)\), and the squared refinement gives
\(p_{\mathrm{guess}}^{E,\mathrm{RAC},\mathrm{sq}}(\nu)
=(3+\sqrt{1-\nu^2})/4\). The superscript \(\mathrm{RAC}\) records the
random variable being guessed: the RAC output used as the raw key in the
earlier protocols. Applying the Csiszár--Körner expression to these two
Eve-side bounds gives the gray and green curves in
Fig.~\ref{fig:visibility-Dlehalf-comparison}. At \(\nu=1\), the two
rates are \(0.0581\) and \(0.2104\), respectively; their penalty term is
the RAC-output reconciliation cost of the older protocols.

\begin{figure}[t]
\centering
\includegraphics[width=\columnwidth]{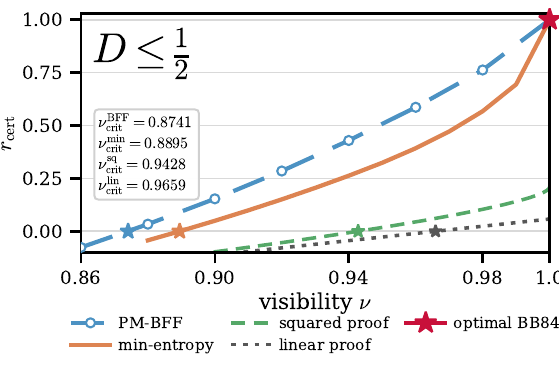}
\caption{Fixed distinguishability benchmark \(D\leq1/2\). Gray and green curves
are the earlier RAC-key Csisz\'ar--K\"orner rates obtained from the
RAC-output guessing bounds stated in the text. Orange and blue curves use
the present retained-key protocol under the same operational source
assumption: orange is the min-entropy certificate and blue is PM-BFF. The
critical visibilities are \(\nu_{\rm crit}^{\rm lin}=0.9659\),
\(\nu_{\rm crit}^{\rm sq}=0.9428\), \(\nu_{\rm crit}^{\min}=0.8895\),
and \(\nu_{\rm crit}^{\rm BFF}=0.8741\); the red star marks ideal BB84.}
\label{fig:visibility-Dlehalf-comparison}
\end{figure}

Figure~\ref{fig:visibility-Dlehalf-comparison} shows the ordering under
the sole operational source assumption \(D\leq1/2\). The retained-key
protocol moves the min-entropy threshold below the two earlier RAC-key
curves. PM-BFF lowers it further, from \(0.8895\) to \(0.8741\). The next
subsection keeps the security model fixed and changes only the
operational source assumption: the fixed implication \(D\leq1/2\) is
replaced by the actual operational task values of the emitted ensemble.
\subsection{Source depolarization under operational assumptions}
\label{subsec:source-depolarization}

The fixed-bound benchmark used only the one-way implication
\(D\leq1/2\). Here the security optimization uses the
preparation-depolarized BB84 family \(\rho_x^{(\nu)}\) from
Eq.~\eqref{eq:bb84-source-families}. Its RAC score and operational
task values are those of Eq.~\eqref{eq:bb84-visibility-task-values}. The
three-setting behavior \(\mathbf p^{(\nu)}\) is obtained by adding the
key setting of Eq.~\eqref{eq:key-branch} to the same four-preparation
ensemble, with the retained-branch error evaluated by
Eq.~\eqref{eq:key-qber}.

For each
\(T\in\{D,\Pi,D\oplus A,\Pi\oplus A\}\), the corresponding run imposes
the single scalar bound \(T\leq T^{(\nu)}\), where \(T^{(\nu)}\) is the
operational task value listed in
Eq.~\eqref{eq:bb84-visibility-task-values}. Thus the distinguishability
task value varies with the contraction of the four preparations, the
parity value stays fixed because the even and odd parity mixtures
coincide, and the composite assumptions inherit their visibility
dependence from the state-exclusion value \(A^{(\nu)}\).

\begin{figure*}[t]
\centering
\includegraphics[width=\textwidth]{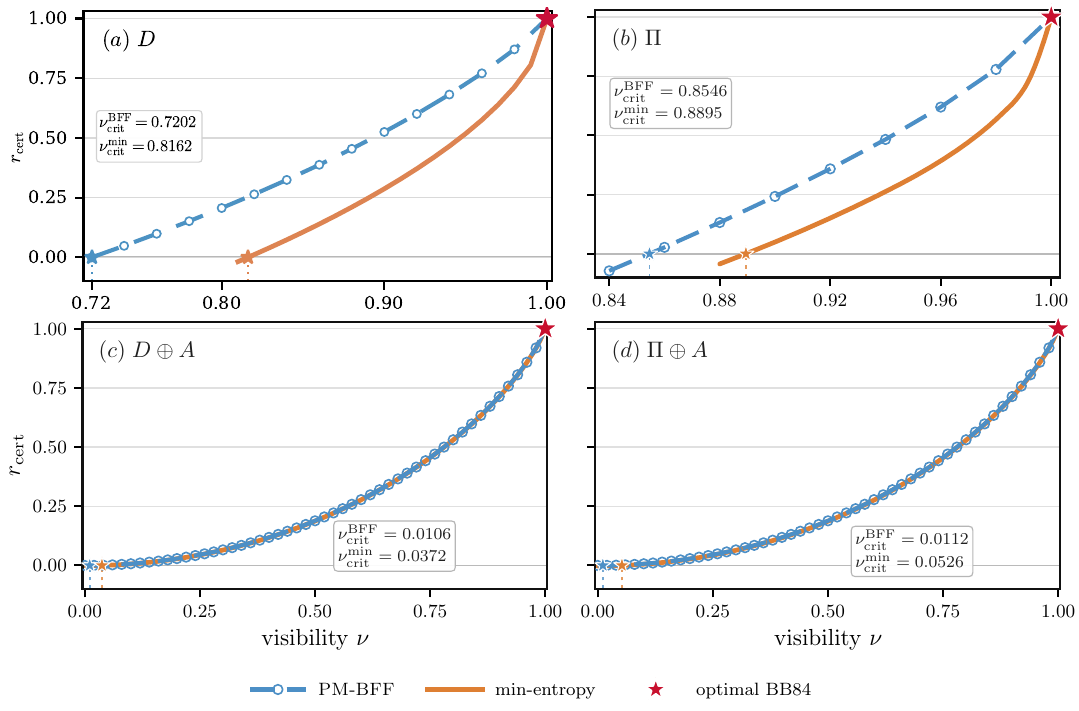}
\caption{Source-depolarized BB84 key-rate lower bounds under four operational
source assumptions. Dashed blue curves with open circles are PM-BFF
bounds; solid orange curves are min-entropy bounds. The blue and orange
axis stars are linear root estimates from the nearest validated
samples of opposite sign and are not additional SDP evaluations. Lines
connect accepted certificates; the red star marks ideal BB84.}
\label{fig:keyrate-visibility}
\end{figure*}

The top-left panel of Fig.~\ref{fig:keyrate-visibility} gives the direct
comparison with the fixed-bound benchmark. For \(D^{(\nu)}\), the PM-BFF
critical visibility drops from \(0.8741\) to \(0.7202\), while the
min-entropy threshold is \(0.8162\). The security model has not changed:
the improvement comes from replacing the dimension-implied ceiling
\(D\leq1/2\) by the visibility-dependent bound \(D\leq D^{(\nu)}\).

The parity panel uses a different coarse-graining of the same emitted
ensemble. Since the family is parity-oblivious at the source-task level,
\(\Pi^{(\nu)}=1/2\) for all \(\nu\). Parity discrimination therefore does not
track the contraction of the four individual preparations. The
thresholds, \(0.8546\) for PM-BFF and \(0.8895\) for min-entropy, differ
from those for \(D^{(\nu)}\) because four-state and parity discrimination constrain different coarse-grainings of the source ensemble.

The lower panels add state exclusion. The exclusion task bounds how well
a measurement can rule out a value of $x$, in addition to how well
it can identify one. Since \(A^{(\nu)}\) is visibility-dependent in
Eq.~\eqref{eq:bb84-visibility-task-values}, both
\((D\oplus A)^{(\nu)}\) and \((\Pi\oplus A)^{(\nu)}\) vary with
visibility; in particular, \(\Pi\oplus A\) changes although
\(\Pi^{(\nu)}\) itself is constant. The resulting critical visibilities
are near zero: \(0.0106\) and \(0.0112\) for PM-BFF, with min-entropy
thresholds \(0.0372\) and \(0.0526\).

All four panels use the same security model and the same three-setting
behavior \(\mathbf p^{(\nu)}\). Only the operational source assumption
changes. The figure separates operational task values from
entropy certification: \(D^{(\nu)}\) improves over the fixed
dimension-implied bound, state exclusion gives the largest robustness
gain, and PM-BFF enlarges the positive-rate region for every source
assumption shown.

The source-depolarized family also has a direct optical reading in the
single-photon limit. Consider a phase-randomized source whose common
vacuum component has weight \(q\), whose non-vacuum component realizes
the BB84 qubit ensemble, and whose no-click outcomes are assigned a
uniform random bit. The vacuum contributes the measurement-independent
values \((1/4,1/2,3/4)\) to four-state discrimination, parity discrimination, and
state exclusion, so the operational task values and honest behavior
coincide with those of \(\rho_x^{(\nu)}\) at \(\nu=1-q\). At the saturated
Poisson value \(q=e^{-\mu}=1/2\), equivalently \(\mu=\ln2\), this gives
\(\nu=1/2\): the exclusion-assisted PM-BFF thresholds \(0.0106\) and
\(0.0112\) are far below this value, while the identification-only
thresholds \(0.7202\) and \(0.8546\) are not. For multiphoton weak
coherent implementations, the non-vacuum sector is no longer a qubit
BB84 ensemble; the universal common-component bound of
Eq.~\eqref{eq:common-component-composite-main} is then the applicable
source statement.

\subsection{Receiver-side depolarization}
\label{subsec:receiver-side-depolarization}

We next keep Alice's emitted ensemble fixed as \(\rho_x^{(\nu)}\) from
Eq.~\eqref{eq:bb84-source-families}. Hence the operational source
task values remain those of Eq.~\eqref{eq:bb84-visibility-task-values}.
Only the reference decoders used to generate the observed behavior are
depolarized, as in Eq.~\eqref{eq:noisy-bb84-decoders}; the resulting RAC
score is Eq.~\eqref{eq:joint-source-receiver-rac-score}. The retained
branch is still the one in Eq.~\eqref{eq:key-branch}, and its error is
evaluated by Eq.~\eqref{eq:key-qber}. Thus \(\nu_{\rm M}\) parametrizes
the observed table, not a trusted property of Bob's device in the
security optimization.

For each certificate and operational source assumption, we extract the
critical source visibility at fixed receiver-side visibility
\(\nu_{\rm M}\). Each point is obtained from the nearest accepted samples
with opposite key-rate sign and is not an additional SDP evaluation.

\begin{figure}[t]
\centering
\includegraphics[width=\columnwidth]{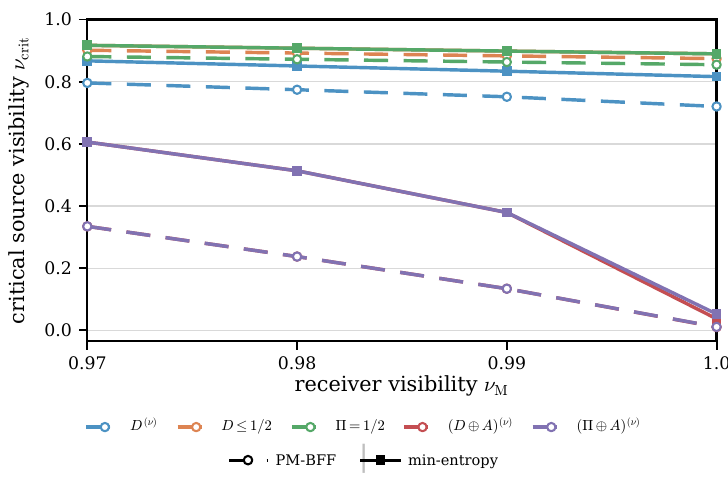}
\caption{Receiver-side depolarization. For each receiver-side visibility
\(\nu_{\rm M}\), the vertical axis gives the critical source visibility at
which the certified key rate becomes positive. Colors identify the
operational source assumption; dashed open-circle curves are PM-BFF
bounds and solid square-marker curves are min-entropy bounds. The
\(\nu_{\rm M}=1\) endpoints reproduce the source-only thresholds in
Fig.~\ref{fig:keyrate-visibility}.}
\label{fig:joint-source-receiver}
\end{figure}

Figure~\ref{fig:joint-source-receiver} shows that the ordering found in
the source-only curves persists under receiver-side depolarization. The
fixed assumptions \(D\leq1/2\) and \(\Pi=1/2\) remain close to unit
source visibility. The exact distinguishability value
\(D^{(\nu)}\) gives lower thresholds because the source assumption
follows the emitted ensemble instead of using only the dimension-implied
ceiling. The exclusion-assisted assumptions remain the most robust
throughout the displayed range. At \(\nu_{\rm M}=0.99\), their PM-BFF
critical source visibility is \(0.1341\), while the corresponding
min-entropy values are \(0.3792\) and \(0.3793\).

Receiver-side depolarization changes only the observed behavior. The
operational source assumption remains a constraint on Alice's emitted
ensemble, and Bob's device is still uncharacterized in the security
optimization. Thus this benchmark separates degradation of the observed
evidence from degradation of the source-task values.
\subsection{Direct-sum leakage}
\label{subsec:direct-sum-leakage}

The last benchmark keeps the retained key branch perfect while exposing
the value of $x$ through the emitted ensemble. The emitted ensemble
is the direct-sum family \(\rho_x^{(\ell)}\) from
Eq.~\eqref{eq:bb84-source-families}, in which an orthogonal sector
carries the value of $x$ with weight \(\ell\). Its RAC score and
operational task values are those of Eq.~\eqref{eq:bb84-leakage-task-values}.
The same retained branch of Eq.~\eqref{eq:key-branch} gives
\(Q_Z^{(\ell)}=0\) through Eq.~\eqref{eq:key-qber}.

The four operational source assumptions are optimized independently. The
coincidence of their PM-BFF bounds, and separately of their min-entropy
bounds, is therefore a numerical outcome of four distinct feasible-set
optimizations; no equality between the source constraints is imposed.

\begin{figure}[t]
\centering
\includegraphics[width=\columnwidth]{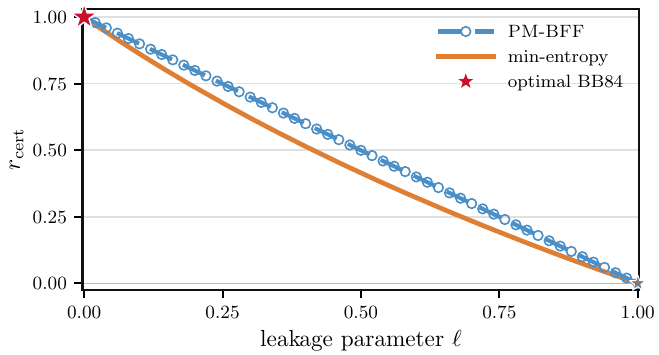}
\caption{Direct-sum leakage. The four operational source assumptions are
optimized independently; their PM-BFF bounds coincide with one another,
and their min-entropy bounds coincide with one another, within the
verification tolerance of Appendix~\ref{app:keyrate-numerics}. The
retained key branch has \(Q_Z^{(\ell)}=0\), while the orthogonal sector
reveals an increasing fraction of the value of $x$. At \(\ell=1\),
Eve can copy $x$ in full without changing Bob's statistics, and the
exact worst-case rate is zero.}
\label{fig:keyrate-leakage}
\end{figure}

Figure~\ref{fig:keyrate-leakage} shows that zero key-basis error is not
sufficient when the emitted ensemble carries explicit information about $x$.
For every sampled \(\ell<1\), the finite-level certificates remain
positive. At \(\ell=1\), the orthogonal sector contains the complete
value of $x$. Eve can copy that sector, infer the retained key, and
leave Bob's statistics unchanged; the exact worst-case rate is therefore
zero. The finite-level bounds reproduce this endpoint within the
verification tolerance.

The benchmark sequence separates three effects. Source depolarization
changes the operational task values of Alice's emitted ensemble.
Receiver-side depolarization weakens the observed behavior while leaving
those task values fixed. Direct-sum leakage keeps
\(Q_Z=0\) but increases source-side information about $x$. Across all three
families, the operational source assumption determines the admissible
Bob--Eve set, and the certificate determines which entropy quantity is
lower-bounded on that set. The min-entropy relaxation bounds
\(p_{\mathrm{guess},T}^{\star}(\mathbf p,\tau)\); PM-BFF lower-bounds
the conditional von Neumann entropy entering the Devetak--Winter rate.
\section{Discussion}
\label{sec:discussion}

The trusted object in this work is Alice's emitted ensemble. The source
assumption is not a statement about the size of the Hilbert space, nor a
specification of the devices that prepare or measure the states. It is a
scalar operational bound on what the four emitted preparations make
possible. The task values \(D\), \(\Pi\), and \(A\) ask three different
questions of the same ensemble: how well the value of $x$ can be identified,
how well its parity can be identified, and how well one state can be
excluded. Proposition~\ref{prop:exclusion-not-fixed} shows that the
state-exclusion task is independent of the identification task values:
two ensembles can have the same four-state and parity optima and still
differ in what a measurement can rule out. The normalized composites
\(D\oplus A\) and \(\Pi\oplus A\) are therefore single operational source
assumptions, not separate component constraints.

This viewpoint fixes the role of the parity-oblivious endpoint. Under
the standard relabelling \(x=(a,a\oplus s)\), the parity
\(x_0\oplus x_1\) is Alice's Bell setting \(s\). The condition
\(\Pi=1/2\) is exactly the statement that the two effective states
corresponding to \(s=0\) and \(s=1\) are identical. Equivalently,
\(\rho_{00}+\rho_{11}=\rho_{01}+\rho_{10}\), which is the
no-signalling identity for the remotely prepared ensemble. Conversely,
this identity makes the two pairs \(\{\rho_{00},\rho_{11}\}\) and
\(\{\rho_{01},\rho_{10}\}\) two decompositions of the same average state;
a purification of that average state realizes them by two measurements on
Alice's side. Thus the \(\Pi=1/2\) restriction is the prepare-and-measure
representation of the corresponding no-signalling Bell slice
\cite{WrightFarkas2023BellContextualityMap}. At this endpoint, the RAC
score under the same relabelling is the CHSH winning probability, as
shown in Appendix~\ref{app:operational-origins}. In the same relabelling,
\(\Pi>1/2\) quantifies how well the emitted ensemble reveals Alice's Bell
setting. A bound on \(\Pi\) is therefore a bound on signalling from
Alice's setting to the receiver, expressed as an operational
one-shot discrimination probability.

This also clarifies the relation with receiver-device-independent QKD.
RDI protocols based on overlap assumptions treat Bob as a black box and
replace a dimension bound by a bound on the distinguishability of
Alice's preparations
\cite{IoannouEtAl2022RDIQKD,IoannouSekatskiAbbottRossetBancalBrunner2022RDIProtocols}.
For mixed preparations, the overlap formulation must refer to
purifications satisfying the assumed Gram constraints, with the purifying
system assigned to Alice's laboratory; possible side-channel information
is then absorbed by modifying the overlaps. The present formulation
removes that purification choice from the assumption. The bound is placed
directly on the full emitted density operators, including every
$x$-dependent flag, side system, or optical mode that leaves Alice. If
such degrees of freedom reveal the value of $x$, the optimized source task sees
that revelation. The data-processing lift in
Proposition~\ref{prop:channel-lift-source-assumptions} then carries the
same scalar bound to every Bob--Eve extension. In this precise sense, the
mixed-state issue is handled natively: the proof does not require a pure
source model, a selected purification, or a dimension bound.

The classical-frontier analysis gives these source-task values an
operational baseline. Theorem~\ref{thm:classical-frontiers} describes the
best classical, or commuting, RAC behavior compatible with each task
bound. A positive frontier deviation certifies that, in every compatible
quantum realization, Alice's preparations cannot all commute and Bob's
RAC measurements cannot be jointly measurable. The security proof then
uses the same task value again, now lifted to the Bob--Eve feasible set.
Thus the source tasks are not auxiliary diagnostics: they are the common
operational language connecting prepare-and-measure nonclassicality, physical
source restrictions, and adversarial entropy certification.

The numerical section separates three effects that are often conflated.
First, the three-setting protocol removes the intrinsic RAC-key
reconciliation penalty of earlier two-setting SDI-QKD constructions:
\(y=0,1\) test the RAC, while \(y=2\) generates the retained key on the
\(00/11\) branch. Second, PM-BFF strengthens the entropy certificate on a
fixed Bob--Eve feasible set by lower-bounding the conditional von Neumann
entropy directly, whereas the min-entropy program bounds Eve's optimized
single-bit guess. Third, changing from \(D\) or \(\Pi\) to
\(D\oplus A\) or \(\Pi\oplus A\) changes the feasible set itself. The
large robustness gain comes primarily from this third effect.

The cryptographic role of exclusion is therefore structural. Four-state
and parity discrimination constrain what an adversary can identify. State
exclusion constrains what an adversary can rule out. These capabilities
are related, but one does not determine the other. An adversary who
cannot identify the emitted state may still exclude enough alternatives
to reduce the uncertainty relevant for the retained key. Conversely,
bounding exclusion removes Bob--Eve extensions that remain admissible
under identification-only assumptions. This is why the
exclusion-assisted source assumptions give a much larger improvement
than the change from one entropy certificate to another.

This use of exclusion is consistent with wider roles of exclusion tasks,
but the security proof does not rely on those frameworks. Exclusion tasks
operationally characterize weight-based resource quantifiers in convex
resource theories, antidistinguishability gives one-way communication
separations, and state discrimination admits contextual advantages
\cite{DucuaraSkrzypczyk2020WeightExclusion,UolaBullockKraftPellonpaaBrunner2020AllResourcesExclusion,HavlicekBarrett2020AntidistinguishabilityCC,SchmidSpekkens2018ContextualDiscrimination}.
Here these results serve only to locate the operational assumption: the proof
uses the scalar task bound on the emitted ensemble.

The benchmarks make the separation explicit. Under preparation
depolarization, \(D\) and \(\Pi\) give positive key only above sizable
source visibilities, while \(D\oplus A\) and \(\Pi\oplus A\) remain
positive down to nearly vanishing source visibility. Receiver-side
depolarization weakens only the complete observed table, not Alice's
source-task value, and the same ordering persists. Direct-sum leakage gives the complementary limit: the retained key branch remains
error free, but the emitted ensemble increasingly reveals $x$. The
finite-level bounds stay positive at every sampled point with incomplete leakage and vanish only at complete leakage. Zero key-basis error
is therefore not the security criterion; the relevant object is the
emitted ensemble constrained together with the complete three-setting
behavior.

Physical source assumptions provide the experimentally meaningful entry
point to the framework. They are not imposed in the security proof as
microscopic constraints; each is used to derive an operational bound on
the emitted ensemble. Finite-dimensional support, overlap or fidelity
information, and mean-energy or photon-number restrictions can all bound
four-state discrimination. Exact parity obliviousness gives the
parity-discrimination endpoint \(\Pi=1/2\), equivalently the
no-signalling Bell slice discussed above. Trusted global phase
randomization together with a common vacuum component gives the
exclusion-assisted optical bounds by substitution into
Eq.~\eqref{eq:common-component-composite-main}, with the detailed
derivation in Appendix~\ref{app:operational-origins}. These are
sufficient physical origins of the operational assumptions, not
additional device assumptions inside the adversarial optimization. Once
the scalar operational bound is fixed, the security proof uses only that
bound, the complete three-setting behavior, and the data-processing lift
to unrestricted Bob--Eve extensions. Bob's device remains
uncharacterized, and Eve's Hilbert space remains unrestricted.

The rates reported here are asymptotic i.i.d. rates against collective
attacks, per retained key round and with ideal one-way reconciliation.
The accepted PM-BFF dual certificates nevertheless already have the form
required for a finite-key reduction: they give affine one-round lower
bounds on the conditional entropy as functions of the complete observed
behavior and the source bound. Appendix~\ref{app:finite-key-geat} proves
that such certificates are valid min-tradeoff inputs for generalized
entropy accumulation
\cite{MetgerFawziSutterRenner2024GEAT,MetgerRenner2023GEATQKDSecurity}.
Explicit finite-\(n\) rates for the present protocol still require a
chosen parameter-estimation test, error-correction leakage, smoothing
parameters, and privacy-amplification constants. These finite-key
ingredients are separate from the conceptual point: the trusted part of
an SDI protocol should be stated in the operational language of the
emitted ensemble.

A natural continuation is the self-testing of ensembles. Operational
source assumptions should not remain only hypotheses inserted into a
security proof; they should become certifiable features of the emitted
ensemble. In Bell scenarios, self-testing identifies states and
measurements from correlations up to the relevant equivalences. In
prepare-and-measure scenarios, the object to be certified is the
ensemble geometry relevant to the optimized tasks. The direction opened
by the present work is therefore to combine operational
nonclassicality of ensembles, self-testing of ensembles, and entropy
certification in a single SDI framework. Robust SDI security is governed by what Eve can identify and by what
Eve can exclude.

\section*{Data availability}
The numerical data, figure data, source code, and saved certificate summaries used to generate every reported curve are included in the reproducibility archive accompanying this manuscript.
\section*{Acknowledgements}
A.C. acknowledges support from the KLAR Grant No. BNI/PST/2023/1/00013/U/00001, funded by NAWA. DS acknowledges support from STARS (STARS/STARS-2/2023-0809), Govt. of India. G.V. acknowledges support from the Deutsche Forschungsgemeinschaft
(DFG, German Research Foundation, project number 563437167), the
Sino-German Center for Research Promotion (Project M-0294),
and the German Federal Ministry of Research, Technology and Space
(Project QuKuK, Grant No. 16KIS1618K and Project BeRyQC, Grant No.
13N17292). E.P. acknowledges support by the project skQCI (Slovak Quantum Communication Infrastructure), co-funded by the European Union under the Digital Europe Programme, grant agreement No. 101091548, as part of the EuroQCI initiative.
T.P acknowledges support from the project 'International Center for Theory of Quantum Technologies 2.0: R\&D Industrial and Experimental Phase' project (contract number FENG.02.01-IP.05-0006/23). The project is implemented as part of the International Research Agendas Programme of the Foundation for Polish Science, co-financed by the European Funds for a Smart Economy 2021-
2027 (FENG), Priority FENG.02 Innovation-friendly environment, Measure FENG.02.01.
\clearpage
\appendix

\section{Classical posterior geometry and tightness}
\label{app:classical-frontiers}

This appendix proves Theorem~\ref{thm:classical-frontiers}. The proof uses only the posterior distribution induced by a classical message and places no bound on the finite message alphabet. The argument has three steps. First, the classical source is decomposed into one-message posteriors. Second, each posterior gives a finite optimization on the simplex over four values of $x$. Third, four elementary sources attain all exposed boundary segments.

\subsection{From messages to posteriors}

A classical source is specified by conditional probabilities \(p(\lambda|x)\), where \(x\in\mathcal X\) is Alice's string $x$ and \(\lambda\) is the message received by Bob. Since the value of $x$s are sampled uniformly, the probability of the message \(\lambda\) is
\begin{equation}
\label{eq:app-mu}
\mu_\lambda
=
\frac14\sum_{x\in\mathcal X}p(\lambda|x).
\end{equation}
Equation~\eqref{eq:app-mu} is just the law of total probability with prior \(p(x)=1/4\). It is the weight with which the message \(\lambda\) contributes to every observed or optimized classical quantity.

For every message with \(\mu_\lambda>0\), Bayes' rule gives the posterior distribution over values of \(x\),
\begin{equation}
\label{eq:app-posterior}
q_x^\lambda
=
p(x|\lambda)
=
\frac{p(\lambda|x)}{4\mu_\lambda}.
\end{equation}
The denominator is precisely the normalization from Eq.~\eqref{eq:app-mu}. Messages with \(\mu_\lambda=0\) are never observed and can be removed. Thus an arbitrary classical implementation is a convex mixture of one-message posteriors \(q^\lambda\).

For one such posterior we write
\begin{equation}
\label{eq:app-q}
q=(q_{00},q_{01},q_{10},q_{11}),
\qquad
q_x\ge0,
\qquad
\sum_{x\in\mathcal X}q_x=1.
\end{equation}
This equation fixes the simplex on which the rest of the proof takes place. Each entry is the conditional probability of one of Alice's four values of $x$ after the classical message is known.

The optimal one-message RAC score is obtained by answering each question with the more likely value of the requested bit:
\begin{equation}
\label{eq:app-rq}
\begin{aligned}
r(q)=\frac12\Big(&
\max\{q_{00}+q_{01},\,q_{10}+q_{11}\}
\\
&+
\max\{q_{00}+q_{10},\,q_{01}+q_{11}\}
\Big).
\end{aligned}
\end{equation}
The first maximum is the best success probability when Bob is asked for \(x_0\); the second is the best success probability when he is asked for \(x_1\). The prefactor \(1/2\) is the uniform choice of Bob's question.

The three one-message source task values are
\begin{equation}
\label{eq:app-dpa}
\begin{aligned}
d(q)&=\max_{x\in\mathcal X}q_x,\\
\pi(q)&=\max\{q_{00}+q_{11},\,q_{01}+q_{10}\},\\
a(q)&=1-\min_{x\in\mathcal X}q_x .
\end{aligned}
\end{equation}
The first line is optimal four-state discrimination. The second line is optimal parity discrimination, because the two sets \(\{00,11\}\) and \(\{01,10\}\) are the two parity classes. The third line is optimal exclusion: to rule out the true value of $x$, one names the least likely value of \(x\).

Since Bob knows \(\lambda\), the global operational quantities are affine averages of the one-message quantities:
\begin{equation}
\label{eq:app-affine}
\begin{aligned}
R&=\sum_\lambda\mu_\lambda r(q^\lambda),&
D&=\sum_\lambda\mu_\lambda d(q^\lambda),\\
\Pi&=\sum_\lambda\mu_\lambda \pi(q^\lambda),&
A&=\sum_\lambda\mu_\lambda a(q^\lambda).
\end{aligned}
\end{equation}
Equation~\eqref{eq:app-affine} is the reduction of the classical problem to a one-posterior problem. Once an inequality is proved for each posterior \(q\), the same inequality holds after averaging over \(\lambda\).

\subsection{A normal form for one posterior}

The RAC score can be written as a chance value plus the two optimal bit biases:
\begin{equation}
\label{eq:app-rq-absolute}
\begin{aligned}
r(q)=\frac12+\frac14\Big(&
\left|q_{00}+q_{01}-q_{10}-q_{11}\right|
\\
&+
\left|q_{00}+q_{10}-q_{01}-q_{11}\right|
\Big).
\end{aligned}
\end{equation}
The absolute values encode Bob's freedom to decide which bit value is more likely for each question. Flipping either bit value only relabels the posterior entries; it does not change \(d(q)\), \(a(q)\), or \(\pi(q)\), because parity classes are merely exchanged under a single bit flip.

We may therefore work in a sector where the optimal answer is \(0\) for both questions. Write
\begin{equation}
\label{eq:app-abgd}
\begin{aligned}
q_{00}&=\alpha, & q_{01}&=\beta,
& q_{10}&=\gamma, & q_{11}&=\delta,\\
\alpha&+\beta+\gamma+\delta=1.
\end{aligned}
\end{equation}
In this sector the probability weight on values of \(x\) with \(x_0=0\) is at least the weight on values of \(x\) with \(x_0=1\), and similarly for \(x_1\). Hence
\begin{equation}
\label{eq:app-sector}
\alpha+\beta\ge\gamma+\delta,
\qquad
\alpha+\gamma\ge\beta+\delta.
\end{equation}
Substituting these two sector conditions into Eq.~\eqref{eq:app-rq} gives
\begin{equation}
\label{eq:app-r-alpha-delta}
r(q)=\frac12+\frac{\alpha-\delta}{2}.
\end{equation}
Indeed, the two selected bit marginals are \(\alpha+\beta\) and \(\alpha+\gamma\), whose sum is \(1+\alpha-\delta\). Thus the RAC advantage over chance is exactly the separation between the posterior corner \(00\) and the opposite corner \(11\).

Adding the two sector inequalities in Eq.~\eqref{eq:app-sector} gives
\begin{equation}
\label{eq:app-alpha-ge-delta}
\alpha\ge\delta.
\end{equation}
Hence the favoured corner is never lighter than the opposite corner in the chosen sector.

\subsection{The four-state-discrimination frontier}

Let
\begin{equation}
\label{eq:app-d-def}
d=\max\{\alpha,\beta,\gamma,\delta\}.
\end{equation}
This is the one-message value of four-state discrimination. Since \(\alpha,\beta,\gamma\le d\), normalization implies
\begin{equation}
\label{eq:app-delta-lower-d}
\delta=1-\alpha-\beta-\gamma\ge1-3d.
\end{equation}
Since no posterior atom exceeds \(d\), normalization forces the fourth atom to satisfy \(\delta\geq1-3d\).

Using \(\alpha\le d\) and Eq.~\eqref{eq:app-delta-lower-d} in Eq.~\eqref{eq:app-r-alpha-delta} gives
\begin{equation}
\label{eq:app-r-le-2d}
r(q)
\le
\frac12+\frac{d-(1-3d)}2
=2d.
\end{equation}
This is the small-\(D\) branch. It is strongest when the posterior is still close to uniform, where the lower bound on \(\delta\) is active.

A second bound follows from \(\alpha\le d\) and \(\delta\ge0\):
\begin{equation}
\label{eq:app-r-le-half-d}
r(q)
\le
\frac12+\frac d2
=
\frac{1+d}{2}.
\end{equation}
This is the large-\(D\) branch. Averaging Eqs.~\eqref{eq:app-r-le-2d} and \eqref{eq:app-r-le-half-d} over \(\lambda\) gives \(R\le2D\) and \(R\le(1+D)/2\). Their lower envelope is \(F_D\), and the two lines cross at \(D=1/3\).

\subsection{The parity frontier}

Let
\begin{equation}
\label{eq:app-pi-def}
\pi=\max\{\alpha+\delta,\,\beta+\gamma\}.
\end{equation}
This is the one-message value of parity discrimination: the two entries in the maximum are the posterior weights of the even and odd parity classes. Since \(\pi\ge\alpha+\delta\), and \(\delta\ge0\),
\begin{equation}
\label{eq:app-alpha-delta-pi}
\alpha-\delta\le\alpha+\delta\le\pi.
\end{equation}
Substituting this into the normal form Eq.~\eqref{eq:app-r-alpha-delta} yields
\begin{equation}
\label{eq:app-r-pi}
r(q)\le\frac12+\frac\pi2=\frac{1+\pi}{2}.
\end{equation}
Averaging over messages gives \(R\le(1+\Pi)/2\), which is the parity frontier.

\subsection{The four-state-discrimination-with-exclusion frontier}

Let
\begin{equation}
\label{eq:app-m-a-def}
m=\min\{\alpha,\beta,\gamma,\delta\},
\qquad
a=1-m.
\end{equation}
The least posterior weight \(m\) is the value of $x$ one should exclude, so \(a\) is the one-message exclusion probability. Combining \(\alpha\le d\) with \(\delta\ge m\) gives
\begin{equation}
\label{eq:app-alpha-delta-da}
\alpha-\delta\le d-m.
\end{equation}
Using Eq.~\eqref{eq:app-r-alpha-delta}, this gives
\begin{equation}
\label{eq:app-r-da}
r(q)
\le
\frac12+\frac{d-m}{2}
=
\frac{d+a}{2}.
\end{equation}
Writing \(t=D\oplus A=(D+A)/2\), averaging over messages gives
\begin{equation}
\label{eq:app-r-doplusA}
R\le t= D\oplus A.
\end{equation}
The exclusion term enters because it controls how small the opposite corner \(\delta\) can be.

\subsection{The parity-with-exclusion frontier}

The parity-with-exclusion frontier is the lower envelope of two affine constraints. The first one-message constraint is
\begin{equation}
\label{eq:app-r-pia-half}
r(q)\le\frac{\pi+a}{2}.
\end{equation}
To prove it, Eq.~\eqref{eq:app-r-alpha-delta} shows that it is enough to bound \(\alpha-\delta\) by \(\pi-m\). Since \(\pi\ge\alpha+\delta\) and \(\delta\ge m\),
\begin{equation}
\label{eq:app-pi-minus-m}
\pi-m
\ge
\alpha+\delta-m
\ge
\alpha-\delta.
\end{equation}
This proves Eq.~\eqref{eq:app-r-pia-half}.

The second one-message constraint is
\begin{equation}
\label{eq:app-r-pia-linear}
r(q)\le\pi+a-\frac34.
\end{equation}
Using \(a=1-m\) and Eq.~\eqref{eq:app-r-alpha-delta}, Eq.~\eqref{eq:app-r-pia-linear} is equivalent to
\begin{equation}
\label{eq:app-pi-target}
\pi
\ge
\frac14+\frac{\alpha-\delta}{2}+m.
\end{equation}
We prove Eq.~\eqref{eq:app-pi-target} by splitting according to the size of the ordered corner separation.

If
\begin{equation}
\label{eq:app-case-one}
\alpha-\delta+2m\le\frac12,
\end{equation}
then the right-hand side of Eq.~\eqref{eq:app-pi-target} is at most \(1/2\). Since \(\pi\) is the larger of two parity weights whose sum is one, \(\pi\ge1/2\), and Eq.~\eqref{eq:app-pi-target} follows.

If instead
\begin{equation}
\label{eq:app-case-two}
\alpha-\delta+2m\ge\frac12,
\end{equation}
we use \(\pi\ge\alpha+\delta\). It suffices to show
\begin{equation}
\label{eq:app-alpha-delta-target}
\alpha+\delta
\ge
\frac14+\frac{\alpha-\delta}{2}+m,
\end{equation}
which is equivalent to
\begin{equation}
\label{eq:app-alpha-3delta}
\alpha+3\delta-2m\ge\frac12.
\end{equation}
The last inequality follows from the decomposition
\begin{equation}
\label{eq:app-alpha-3delta-decomp}
\alpha+3\delta-2m
=
(\alpha-\delta+2m)+4(\delta-m).
\end{equation}
The first term is at least \(1/2\) by Eq.~\eqref{eq:app-case-two}; the second is nonnegative because \(\delta\ge m\). This completes the proof of Eq.~\eqref{eq:app-r-pia-linear}.

Writing \(t=\Pi\oplus A=(\Pi+A)/2\), averaging Eqs.~\eqref{eq:app-r-pia-half} and \eqref{eq:app-r-pia-linear} gives
\begin{equation}
\label{eq:app-pioplusA-averaged}
R\le t,
\qquad
R\le 2t-\frac34.
\end{equation}
The lower envelope of these two constraints is \(F_{\Pi\oplus A}\), with the crossing at \(t=3/4\).

\subsection{Tightness and the four extremal sources}

The preceding subsections prove upper bounds. To see that they are exact frontiers, it remains to exhibit classical sources that attain every exposed segment. Four sources suffice:
\begin{equation}
\label{eq:app-tightness-values}
\begin{aligned}
\mathsf N &: (R,D,\Pi,A)=\left(\frac12,\frac14,\frac12,\frac34\right),\\
\mathsf E &: (R,D,\Pi,A)=\left(\frac23,\frac13,\frac23,1\right),\\
\mathsf B &: (R,D,\Pi,A)=\left(\frac34,\frac12,\frac12,1\right),\\
\mathsf F &: (R,D,\Pi,A)=\left(1,1,1,1\right).
\end{aligned}
\end{equation}
Here \(\mathsf N\) sends no information, so the posterior is uniform. The source \(\mathsf E\) sends a value known not to be the true value of \(x\), leaving a uniform posterior over the other three values. The source \(\mathsf B\) sends one bit of \(x\), for instance \(x_0\). The source \(\mathsf F\) sends \(x\) in full.

Flagged mixtures of classical sources preserve \(R,D,\Pi,A\) affinely, because the flag is part of the message. Therefore line segments between the points in Eq.~\eqref{eq:app-tightness-values} are themselves classically achievable.

For the four-state-discrimination frontier,
\begin{equation}
\label{eq:app-tight-D}
\mathsf N\to\mathsf E:
\ R=2D,
\qquad
\mathsf E\to\mathsf F:
\ R=\frac{1+D}{2}.
\end{equation}
These two mixtures cover the ranges \(1/4\le D\le1/3\) and \(1/3\le D\le1\), respectively.

For the parity frontier,
\begin{equation}
\label{eq:app-tight-pi}
\mathsf B\to\mathsf F:
\ R=\frac{1+\Pi}{2}.
\end{equation}
This covers the full range \(1/2\le\Pi\le1\).

For the four-state-discrimination-with-exclusion frontier, for the normalized task value \(D\oplus A=(D+A)/2\),
\begin{equation}
\label{eq:app-tight-DoplusA}
\mathsf N\to\mathsf F:
\ R=D\oplus A .
\end{equation}
This covers the range \(1/2\le D\oplus A\le1\).

For the parity-with-exclusion frontier, for the normalized task value \(\Pi\oplus A=(\Pi+A)/2\),
\begin{equation}
\label{eq:app-tight-PioplusA}
\begin{aligned}
\mathsf N\to\mathsf B &: R=2(\Pi\oplus A)-\frac34,\\
\mathsf B\to\mathsf F &: R=\Pi\oplus A .
\end{aligned}
\end{equation}
The first mixture covers \(5/8\le\Pi\oplus A\le3/4\), and the second covers \(3/4\le\Pi\oplus A\le1\). Every segment in Theorem~\ref{thm:classical-frontiers} is therefore attained by a classical source, completing the proof of exactness.

\subsection{Simultaneous affine bounds}
\label{app:master-affine-bounds}

The proof above gives more than the four single-assumption frontiers. The
six posterior inequalities survive averaging independently, and therefore
hold simultaneously for every classical strategy.

\begin{corollary}[Simultaneous affine bounds]
\label{cor:master-affine}
Every classical strategy satisfies
\begin{equation}
\label{eq:master-affine}
\begin{aligned}
R\leq\min\Bigl\{
&2D,\;
\frac{1+D}{2},\;
\frac{1+\Pi}{2},
\\[-1mm]
&\frac{D+A}{2},\;
\Pi+A-\frac34,\;
\frac{\Pi+A}{2}
\Bigr\},
\end{aligned}
\end{equation}
where all task values are evaluated in \(\mathcal C\). Each frontier in
Theorem~\ref{thm:classical-frontiers} is the tight lower envelope of the
corresponding entries after fixing the relevant source-task value.
\end{corollary}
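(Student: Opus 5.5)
The plan is to reuse the six one-message inequalities already established in this appendix and average them jointly rather than one assumption at a time. For every posterior \(q\) reached by a message \(\lambda\), the normal-form analysis gives simultaneously
\(r(q)\le 2d(q)\), \(r(q)\le[1+d(q)]/2\), \(r(q)\le[1+\pi(q)]/2\), \(r(q)\le[d(q)+a(q)]/2\), \(r(q)\le\pi(q)+a(q)-3/4\), and \(r(q)\le[\pi(q)+a(q)]/2\). The derivations are Eqs.~\eqref{eq:app-r-le-2d}, \eqref{eq:app-r-le-half-d}, \eqref{eq:app-r-pi}, \eqref{eq:app-r-da}, \eqref{eq:app-r-pia-linear}, and \eqref{eq:app-r-pia-half}. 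None of these derivations depended on which source assumption was imposed. Each used only the sector normalization of Eq.~\eqref{eq:app-sector}, which is fixed once for the posterior by the bit relabelling. That relabelling leaves \(d\), \(\pi\), \(a\), and \(r\) invariant, so all six hold at once for the same \(q\).

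The first step is to multiply each of the six inequalities by \(\mu_\lambda\ge0\) and sum over \(\lambda\). The right-hand sides are affine in \((d,\pi,a)\) with constant terms, and \(\sum_\lambda\mu_\lambda=1\). Therefore Eq.~\eqref{eq:app-affine} converts each averaged inequality into the corresponding bound on \(R\) in terms of \(D_{\mathcal C},\Pi_{\mathcal C},A_{\mathcal C}\). For example, the constant \(-3/4\) survives averaging unchanged, giving \(R\le\Pi+A-3/4\).

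One point needs checking. Eq.~\eqref{eq:app-affine} identifies the optimized global task values with the averages of the posterior optima. This holds because any classical measurement may depend on \(\lambda\), so the optimum decomposes messagewise, and conversely no measurement can exceed the messagewise optimum. Since all six bounds hold for the single strategy, \(R\) is at most their minimum, which is Eq.~\eqref{eq:master-affine}.

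For the tightness clause, fix one assumption \(T\) and its value \(t\). Among the six entries, take those expressible through \(T\) alone: \(2D\) and \((1+D)/2\) for \(D\); \((1+\Pi)/2\) for \(\Pi\); \((D+A)/2=t\) for \(D\oplus A\); and \(\Pi+A-3/4=2t-3/4\) together with \((\Pi+A)/2=t\) for \(\Pi\oplus A\). The lower envelope of these entries reproduces \(F_T(t)\) exactly. Attainment then follows from the flagged mixtures of \(\mathsf N,\mathsf E,\mathsf B,\mathsf F\) in Eqs.~\eqref{eq:app-tight-D}--\eqref{eq:app-tight-PioplusA}. The only mild obstacle is this bookkeeping: confirming that the relabelling argument is simultaneous for all six bounds, and that the global task values are exactly the message averages. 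Both follow directly from the posterior reduction already given.
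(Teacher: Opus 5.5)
Your proposal is correct and follows the paper's proof. The six one-message inequalities hold simultaneously for each posterior, averaging them with the weights $\mu_\lambda$ via Eq.~\eqref{eq:app-affine} gives Eq.~\eqref{eq:master-affine}, and tightness follows from the flagged mixtures of $\mathsf N,\mathsf E,\mathsf B,\mathsf F$; your added checks (the relabelling is fixed once per posterior, and the optimized task values decompose messagewise) only make explicit what the paper uses implicitly, and the one gloss you share with the paper is harmless: for non-optimal decoders one has $R\le\sum_\lambda\mu_\lambda r(q^\lambda)$ rather than equality, which is exactly the direction the upper bound needs.
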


\begin{proof}
For each posterior \(q\), the proof of
Theorem~\ref{thm:classical-frontiers} established the six inequalities
\[
r(q)\leq 2d(q),\quad
r(q)\leq \frac{1+d(q)}{2},\quad
r(q)\leq \frac{1+\pi(q)}{2},
\]
\[
r(q)\leq \frac{d(q)+a(q)}{2},\quad
r(q)\leq \pi(q)+a(q)-\frac34,\quad
r(q)\leq \frac{\pi(q)+a(q)}{2}.
\]
A classical source is an affine mixture of posteriors, as in
Eq.~\eqref{eq:app-affine}. Averaging the six inequalities over the
message \(\lambda\) gives Eq.~\eqref{eq:master-affine}. The tightness of
each single-assumption frontier follows from the endpoint mixtures in
Eqs.~\eqref{eq:app-tight-D}--\eqref{eq:app-tight-PioplusA}.
\end{proof}

Equation~\eqref{eq:master-affine} is useful when several source-task values are available simultaneously. The present work uses one
scalar source assumption at a time, so Theorem~\ref{thm:classical-frontiers}
uses only the corresponding face of this affine system.

\section{Noncommutative relaxation for the quantum region}
\label{app:quantum-relaxation}

This appendix specifies the outer approximation used in Fig.~\ref{fig:quantum-deviation}, following the noncommutative semidefinite-relaxation framework reviewed in Ref.~\cite{TavakoliPozasKerstjensBrownAraujo2024SDPReview}. It is an operator-family moment relaxation of the arbitrary-dimensional prepare-and-measure problem. State-weighted functionals are not assumed tracial.

\subsection{Operator families and moment matrices}

Let $\mathcal W_k$ be the reduced words of degree at most $k$ in the two binary RAC projectors $M_0,M_1$, with $M_y=M_y^\dagger=M_y^2$. For every positive operator family $F$ appearing in the optimization, define
\begin{equation}
\label{eq:app-qregion-functional}
L_F(w):=\Tr(Fw),
\end{equation}
and the moment matrix
\begin{equation}
\label{eq:app-qregion-moment}
\Gamma^F_{u,v}=L_F(u^\dagger v),
\qquad u,v\in\mathcal W_k.
\end{equation}
Every quantum realization gives $\Gamma^F\succeq0$ because $L_F(q^\dagger q)\ge0$ for all retained polynomials $q$. Hermiticity, projector reduction, and equality of shared reduced words are imposed entrywise. No cyclic relation $L_F(uv)=L_F(vu)$ is imposed on a state-weighted family.

The base positive families are the four preparations $\rho_x$ and their average $\bar\rho=\frac14\sum_x\rho_x$. They obey
\begin{equation}
\label{eq:app-qregion-normalization}
L_{\rho_x}(1)=1,
\qquad
L_{\bar\rho}(w)=\frac14\sum_xL_{\rho_x}(w)
\end{equation}
for every retained word.

\subsection{Source-task covers}

The task certificates are represented through positive cover slacks. For distinguishability,
\begin{equation}
\label{eq:app-qregion-D-cover}
\Gamma^{\sigma_D-\rho_x/4}\succeq0
\qquad\forall x.
\end{equation}
For parity,
\begin{equation}
\label{eq:app-qregion-P-covers}
\Gamma^{\sigma_\Pi-(\rho_{00}+\rho_{11})/4}\succeq0,
\qquad
\Gamma^{\sigma_\Pi-(\rho_{01}+\rho_{10})/4}\succeq0.
\end{equation}
For exclusion,
\begin{equation}
\label{eq:app-qregion-A-covers}
\Gamma^{\sigma_A-\frac14\sum_{x\ne z}\rho_x}\succeq0
\qquad\forall z.
\end{equation}
At the operator level, the minimum certificate traces equal the source-task values in Eq.~\eqref{eq:source-tasks-main}; the finite moment covers provide the outer relaxation used below. At $\Pi=1/2$, the parity family is eliminated and the equality $\rho_{00}+\rho_{11}=\rho_{01}+\rho_{10}$ is imposed on every retained moment.

\subsection{RAC objective and finite-level upper bounds}

The RAC objective is the linear functional
\begin{equation}
\label{eq:app-qregion-rac}
R=\frac18\sum_{x,y}
\begin{cases}
L_{\rho_x}(M_y),&x_y=0,\\
1-L_{\rho_x}(M_y),&x_y=1.
\end{cases}
\end{equation}
On each affine branch $F_T(t)=ct+b$, the relaxation maximizes $R-cT-b$ with the branch interval imposed on the corresponding source-task value: $\Tr\sigma_D$, $\Tr\sigma_\Pi$, $(\Tr\sigma_D+\Tr\sigma_A)/2$, or $(\Tr\sigma_\Pi+\Tr\sigma_A)/2$. Since every arbitrary-dimensional quantum realization induces feasible moment matrices satisfying Eqs.~\eqref{eq:app-qregion-moment}--\eqref{eq:app-qregion-rac}, the exact finite-level optimum is an upper bound on the quantum value.

The scan uses degree three, denoted $Q_3$, and is solved with CVXPY 1.8.2 and MOSEK 11. Points are accepted only when the primal--dual objective difference and all equality and positive-semidefinite residuals are at most $5\times10^{-6}$. For all four source assumptions, no gap between the BB84 feasible value and the corresponding branchwise upper bound larger than this tolerance is resolved. The grey markers are the same upper-bound relaxation evaluated on the displayed source-task grid.

\subsection{Finite-dimensional see-saw lower bounds}

The see-saw is used only to construct feasible points. For fixed states, $M_y$ is the Helstrom projector of the RAC contrast. For fixed projectors, the state update is an SDP over $\rho_x$ and the task certificates with an upper bound on the chosen source-task value. After convergence, the attained task value is recomputed by a separate discrimination or exclusion SDP for the final ensemble. The plotted $d=3$ triangles therefore use the optimized operational task value of the realized ensemble, rather than the trace of a nonoptimal certificate. They are feasible lower bounds and are not used to certify the grey upper envelope.

\section{Security model and post-measurement state}
\label{app:matched-security-state}

This appendix spells out the operator model behind
Sec.~\ref{sec:sdiqkd-operational-source}. The purpose is to make explicit
which objects are trusted, which objects are optimized over, and where the
retained-key state enters the entropy programs. The source assumption is
always imposed on the emitted ensemble after it has been lifted to a
Bob--Eve extension; no Hilbert-space dimension, purity, or internal
description of Bob's measurements is assumed.

\subsection{General prepare-and-measure realization}

For each value of $x$ \(x\in\{00,01,10,11\}\), the adversarial realization is
described by a normalized positive operator
\begin{equation}
\label{eq:app-security-states}
\varrho_x^{BE}\succeq0,
\qquad
\Tr\varrho_x^{BE}=1.
\end{equation}
This is the minimal object needed for the security proof: system \(B\) is
the system measured by Bob, and system \(E\) is retained by Eve. The
Hilbert spaces are unrestricted. Bob has three binary settings. With
\(B_y\) denoting the outcome-zero effect, a common Naimark dilation
allows us to write
\begin{equation}
\label{eq:app-security-Bob-projectors}
B_y=B_y^\dagger=B_y^2,
\qquad y=0,1,2.
\end{equation}
Equation~\eqref{eq:app-security-Bob-projectors} is only a dilation of
binary POVMs; it is not a compatibility assumption. No relation is
imposed between distinct settings, so \(B_yB_{y'}\) and \(B_{y'}B_y\)
remain different words in the noncommutative relaxation. Bob's algebra
commutes with Eve's algebra, as in a tensor-product realization.

The observed behavior is the complete twelve-entry table
\begin{equation}
\label{eq:app-security-full-table}
p_{xy}
=
\Tr\!\left[\varrho_x^{BE}(B_y\otimes I_E)\right],
\qquad
x\in\mathcal X,\quad y\in\{0,1,2\}.
\end{equation}
The first eight entries are the RAC-test entries. The four entries with
\(y=2\) constrain the same key measurement that appears in the retained
round. Keeping all twelve entries is important: replacing the complete
table by only the RAC score would enlarge the feasible Bob--Eve set and
would give a weaker security statement.

\subsection{Key map and cq state}

The retained branch is the fixed map
\begin{equation}
\label{eq:app-zkey-map}
0\mapsto00,
\qquad
1\mapsto11.
\end{equation}
After Bob performs setting \(2\), Eve's subnormalized state for Alice's
key value \(k\) and Bob's output \(b\) is
\begin{equation}
\label{eq:app-security-post-Eve}
\omega_{kb}^{E}
=
\Tr_B\!\left[(B_{b|2}\otimes I_E)
\varrho_{x_Z(k)}^{BE}\right],
\qquad
B_{0|2}=B_2,\quad B_{1|2}=I_B-B_2 .
\end{equation}
This definition is forced by the measurement post-processing: Bob's
classical outcome is kept, and the post-measurement quantum system on
Eve's side is traced over \(B\). The retained-round state is therefore
\begin{equation}
\label{eq:app-security-KKBE}
\omega_{K K_B E}
=
\frac12\sum_{k,b=0}^{1}
\ketbra{k}{k}\otimes\ketbra{b}{b}\otimes\omega_{kb}^{E}.
\end{equation}
Taking the classical marginal gives
\begin{equation}
\label{eq:app-security-QZ}
Q_Z
=
\frac12\Tr\omega_{0,1}^{E}
+
\frac12\Tr\omega_{1,0}^{E}
=
\frac12(1-p_{00,2}+p_{11,2}).
\end{equation}
The first term is the event \(K=0,K_B=1\); the second is the event
\(K=1,K_B=0\). The last equality uses the outcome-zero convention in
Eq.~\eqref{eq:app-security-full-table}. Tracing out Bob's classical
register gives
\begin{equation}
\label{eq:app-security-omegaKE}
\omega_{KE}
=
\frac12\ketbra{0}{0}\otimes\rho_0^E
+
\frac12\ketbra{1}{1}\otimes\rho_1^E,
\end{equation}
where
\[
\rho_0^E=\Tr_B\varrho_{00}^{BE},
\qquad
\rho_1^E=\Tr_B\varrho_{11}^{BE}.
\]
Since the key register is classical, the conditional entropy has the
block form
\begin{equation}
\label{eq:app-security-HKE}
H(K|E)
=
1+\frac12H(\rho_0^E)+\frac12H(\rho_1^E)
-H\!\left(\frac{\rho_0^E+\rho_1^E}{2}\right).
\end{equation}
Equation~\eqref{eq:app-security-HKE} is exact. The min-entropy and
PM-BFF programs below lower-bound the minimum of this quantity over the
same admissible source and measurement operators.

\subsection{Operator form of the source assumptions}

The optimized source tasks are imposed through their dual covers. For
distinguishability, a single certificate \(\sigma_D\) must dominate each
weighted preparation:
\begin{equation}
\label{eq:app-security-D-covers}
\sigma_D-\frac14\varrho_x^{BE}\succeq0
\qquad\forall x.
\end{equation}
Taking the trace and minimizing \(\Tr\sigma_D\) is precisely the dual of
the four-state discrimination problem. For parity, the two effective
parity preparations are the even and odd mixtures, so the two covers are
\begin{equation}
\label{eq:app-security-P-covers}
\sigma_\Pi-\frac14(\varrho_{00}^{BE}+\varrho_{11}^{BE})\succeq0,
\qquad
\sigma_\Pi-\frac14(\varrho_{01}^{BE}+\varrho_{10}^{BE})\succeq0.
\end{equation}
For exclusion, the reported value \(z\) is successful on all
preparations except \(z\). The corresponding dual cover is
\begin{equation}
\label{eq:app-security-A-covers}
\sigma_A-\frac14\sum_{x\neq z}\varrho_x^{BE}\succeq0
\qquad\forall z.
\end{equation}
The scalar source restrictions are therefore
\begin{equation}
\label{eq:app-security-task-traces}
\begin{aligned}
\Tr\sigma_D&\le\tau_D,
&
\Tr\sigma_\Pi&\le\tau_\Pi,\\
\Tr\sigma_D+\Tr\sigma_A&\le2\tau_{D\oplus A},
&
\Tr\sigma_\Pi+\Tr\sigma_A&\le2\tau_{\Pi\oplus A}.
\end{aligned}
\end{equation}
The composite conditions are trace-sum conditions only. They do not
impose separate bounds on \(D\) and \(A\), or on \(\Pi\) and \(A\).
This is the operator form of the operational assumption used in the main
text.

At the parity-oblivious endpoint \(\Pi=1/2\), both positive slacks in
Eq.~\eqref{eq:app-security-P-covers} have zero trace. A positive
operator with zero trace is the zero operator, so the two parity covers
collapse to the exact identity
\begin{equation}
\label{eq:app-security-exact-parity}
\varrho_{00}^{BE}+\varrho_{11}^{BE}
=
\varrho_{01}^{BE}+\varrho_{10}^{BE}.
\end{equation}
In the numerical hierarchy, Eq.~\eqref{eq:app-security-exact-parity} is
imposed on every retained moment, including the independent
three-setting Bob moments. This avoids replacing an exact source
condition by a relaxation-dependent parity cover.

\section{Min-entropy noncommutative polynomial optimization}
\label{app:minentropy-sdp}

This appendix gives the exact operator problem behind the min-entropy
certificate and the finite noncommutative relaxation used to upper-bound
Eve's retained-key guessing probability. The direction is important:
the SDP maximizes Eve's guessing probability over an outer
approximation, so the result is an upper bound on
\(p_{\rm guess}\) and hence a lower bound on \(H_{\min}(K|E)\).

\subsection{Exact operator problem}

Let \(E\) denote Eve's effect for the guess \(K=0\). After a Naimark
dilation it is enough to take
\begin{equation}
\label{eq:app-min-E-relations}
E=E^\dagger=E^2,
\qquad
[E,B_y]=0
\quad(y=0,1,2).
\end{equation}
The commutation in Eq.~\eqref{eq:app-min-E-relations} is only the
Bob--Eve tensor-product commutation. It does not impose any commutation
between Bob's settings. For fixed feasible states, Eve's probability of
guessing the retained key is
\begin{equation}
\label{eq:app-min-guess-functional}
g(E)
:=
\frac12\Tr(\varrho_{00}^{BE}E)
+\frac12\Tr\!\left[\varrho_{11}^{BE}(I-E)\right].
\end{equation}
The first term is the probability of guessing \(0\) on the \(00\) branch;
the second is the probability of guessing \(1\) on the \(11\) branch.
The exact optimization is
\begin{equation}
\label{eq:app-min-exact-NCPOP}
\begin{aligned}
p_{{\rm guess},T}^{\star}(\mathbf p,\tau)
&=\sup\ g(E)
\\
\mathrm{s.t.}\quad&
\varrho_x^{BE}\succeq0,
\quad\Tr\varrho_x^{BE}=1,
\\
& B_y=B_y^\dagger=B_y^2,\\
& E=E^\dagger=E^2,
\qquad [E,B_y]=0,
\\
& \Tr(\varrho_x^{BE}B_y)=p_{xy}
\quad\forall x,y,
\\
& \text{the covers and trace bound defining }T\le\tau .
\end{aligned}
\end{equation}
No dimension, purity, or commutation relation among Bob's settings is
assumed. If the finite relaxation returns a validated upper bound
\(\overline p_{{\rm guess},T}\), the certified min-entropy rate is
\begin{equation}
\label{eq:app-min-rate}
r_T
\ge
r_T^{\min}
:=
-\log_2\overline p_{{\rm guess},T}-h_2(Q_Z).
\end{equation}

\subsection{Reduced word algebra}

Let \(\mathcal A_{\min}\) be the unital \(*\)-algebra generated by
\(B_0,B_1,B_2,E\), reduced only by
\begin{equation}
\label{eq:app-min-algebra-relations}
B_y^2=B_y,
\qquad
E^2=E,
\qquad
EB_y=B_yE.
\end{equation}
Different Bob settings are never reordered. The degree-three row set is
\begin{equation}
\label{eq:app-min-word-set}
\mathcal W_{\min}^{(3)}
=
\{\operatorname{red}(w):w\in\{B_0,B_1,B_2,E\}^{\ast},\ |w|\le3\}.
\end{equation}
The production implementation verifies that this set contains \(32\)
reduced words. Its product closure contains \(284\) reduced words,
grouped into \(164\) adjoint orbits. The Bob-only subset contains
\(22\) words. These counts are a structural check on the relaxation:
changing them changes the SDP.

For each positive family
\begin{equation}
\label{eq:app-min-families}
F\in
\{\bar\varrho,\varrho_{00},\varrho_{01},
\varrho_{10},\varrho_{11},\sigma_D,\sigma_\Pi,\sigma_A\},
\end{equation}
when present, define the state-weighted functional
\begin{equation}
\label{eq:app-min-functional}
L_F(w)=\Tr(Fw).
\end{equation}
Its moment matrix is
\begin{equation}
\label{eq:app-min-Gamma}
\Gamma^F_{u,v}=L_F(u^\dagger v),
\qquad u,v\in\mathcal W_{\min}^{(3)},
\end{equation}
and positivity of \(F\) gives
\begin{equation}
\label{eq:app-min-Gamma-psd}
\Gamma^F\succeq0.
\end{equation}
The state normalizations and the average-state identity are
\begin{equation}
\label{eq:app-min-normalization-average}
L_{\varrho_x}(1)=1,
\qquad
L_{\bar\varrho}(w)=\frac14\sum_xL_{\varrho_x}(w)
\quad\forall w\in\mathcal C_{\min}^{(3)},
\end{equation}
where \(\mathcal C_{\min}^{(3)}=\{u^\dagger v:u,v\in
\mathcal W_{\min}^{(3)}\}\). The functional \(L_{\bar\varrho}\) is a
state-weighted trace, not a tracial state. Consequently, no cyclic
relations such as \(L_{\bar\varrho}(uv)=L_{\bar\varrho}(vu)\) are
imposed.

\subsection{Lifted source covers and observed behavior}

The source covers from Appendix~\ref{app:matched-security-state} are
lifted by replacing each operator by its moment matrix. Thus,
\begin{equation}
\label{eq:app-min-lifted-covers}
\begin{aligned}
\Gamma^{\sigma_D}-\frac14\Gamma^{\varrho_x}&\succeq0,
\\
\Gamma^{\sigma_\Pi}
-\frac14(\Gamma^{\varrho_{00}}+\Gamma^{\varrho_{11}})&\succeq0,
\\
\Gamma^{\sigma_\Pi}
-\frac14(\Gamma^{\varrho_{01}}+\Gamma^{\varrho_{10}})&\succeq0,
\\
\Gamma^{\sigma_A}-\frac14\sum_{x\neq z}\Gamma^{\varrho_x}&\succeq0.
\end{aligned}
\end{equation}
Only the covers required by the chosen task are active. For a composite
task, the only scalar restriction is the trace-sum condition in
Eq.~\eqref{eq:app-security-task-traces}; no individual component cap is
added.

The complete behavior is imposed as first-order moments,
\begin{equation}
\label{eq:app-min-full-table}
L_{\varrho_x}(B_y)=p_{xy}
\qquad\forall x,\ y=0,1,2.
\end{equation}
The average-setting moments follow from
Eq.~\eqref{eq:app-min-normalization-average}. The relaxation objective is
\begin{equation}
\label{eq:app-min-moment-objective}
\overline p_{{\rm guess},T}
=
\max
\left
\{
\frac12L_{\varrho_{00}}(E)
+
\frac12[1-L_{\varrho_{11}}(E)]
\right\}.
\end{equation}
Since the truncated moment set is an outer approximation of the exact
operator set, Eq.~\eqref{eq:app-min-moment-objective} gives an upper
bound on Eve's physical guessing probability. Substitution into
Eq.~\eqref{eq:app-min-rate} gives the finite-level min-entropy
certificate.

\section{Prepare-and-measure Brown--Fawzi--Fawzi optimization}
\label{app:pm-bff}

This appendix gives the PM-BFF relaxation used for the conditional
von Neumann entropy curves. The implementation is task-generic: the same
64-word PM-BFF basis, the same complete three-setting Bob degree-three
block, the same sharp localizers, and the same Eve-only Sylvester
equations are used for \(D\), \(\Pi\), \(D\oplus A\), and
\(\Pi\oplus A\). The source task changes only the dual-cover families
and the trace bound. At the parity-oblivious endpoint \(\Pi=1/2\), the
parity covers are replaced by the exact moment equality of
Eq.~\eqref{eq:app-security-exact-parity}.

\subsection{Node functional, stationarity, and sharp norm}

For the retained-key cq state, set
\(\eta_k=\rho_k^E\) and \(\Omega=\eta_0+\eta_1\). At an active
quadrature node \(0<t<1\), the BFF variational objective becomes
\begin{equation}
\label{eq:app-bff-node-functional}
\begin{aligned}
\beta_t
=\inf_{Z_0,Z_1}\frac12\sum_{k=0}^1\Bigl\{
&\Tr[\eta_k(Z_k+Z_k^\dagger)]\\
&+(1-t)\Tr(\eta_kZ_k^\dagger Z_k)\\
&+t\Tr(\Omega Z_kZ_k^\dagger)
\Bigr\}.
\end{aligned}
\end{equation}
The two variables \(Z_0,Z_1\) act only on Eve's system. The first two
terms are branchwise; the last term contains the common marginal
\(\Omega\) and couples the two branches. Varying with respect to
\(Z_k^\dagger\) gives the Sylvester equation
\begin{equation}
\label{eq:app-bff-Sylvester}
\eta_k+(1-t)Z_k\eta_k+t\Omega Z_k=0,
\end{equation}
with the adjoint equation obtained by conjugation.

\begin{lemma}[Sharp auxiliary norm]
\label{lem:app-bff-sharp-norm}
Every solution of Eq.~\eqref{eq:app-bff-Sylvester} on the support of
\(\Omega\) obeys
\begin{equation}
\label{eq:app-bff-sharp-norm}
\|Z_k\|\le\frac{1}{2\sqrt{t(1-t)}}.
\end{equation}
\end{lemma}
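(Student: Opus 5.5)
The plan is to write the solution of Eq.~\eqref{eq:app-bff-Sylvester} explicitly as an integral and then bound its matrix elements with Cauchy--Schwarz, using only \(\eta_k\preceq\Omega\). This operator inequality holds because \(\Omega=\eta_0+\eta_1\) with both terms positive. Throughout, restrict to \(\mathrm{supp}\,\Omega\). There \(\Omega\succ0\), and \(\mathrm{supp}\,\eta_k\subseteq\mathrm{supp}\,\Omega\). The equation is \(t\Omega Z_k+(1-t)Z_k\eta_k=-\eta_k\), a Sylvester equation \(AZ+ZB=C\) with \(A=t\Omega\succ0\) and \(B=(1-t)\eta_k\succeq0\). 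The spectra of \(A\) and \(-B\) are disjoint, since one lies in \((0,\infty)\) and the other in \((-\infty,0]\). So the solution is unique and given by the standard formula
\[
Z_k=-\int_0^\infty e^{-st\Omega}\,\eta_k\,e^{-s(1-t)\eta_k}\,ds .
\]
The integral converges because \(\|e^{-st\Omega}\|\) decays exponentially on the support. To verify it solves the equation, differentiate the integrand in \(s\) and integrate; the boundary term at \(s=0\) gives \(-\eta_k\). In infinite dimensions I would first project onto a finite-dimensional truncation or note that the moment argument only needs finite-dimensional realizations; I do not expect this to be the hard part.

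Next I bound \(|\langle u|Z_k|v\rangle|\) for unit vectors \(u,v\). First split \(\eta_k=\eta_k^{1/2}\eta_k^{1/2}\) inside the integrand. Then apply Cauchy--Schwarz twice, first in the Hilbert space and then in \(L^2(ds)\):
\[
|\langle u|Z_k|v\rangle|\le\Bigl(\int_0^\infty\langle u|e^{-st\Omega}\eta_k e^{-st\Omega}|u\rangle ds\Bigr)^{1/2}\Bigl(\int_0^\infty\langle v|\eta_k e^{-2s(1-t)\eta_k}|v\rangle ds\Bigr)^{1/2}.
\]

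For the second factor, \(\eta_k\) commutes with its own exponential. Hence \(\int_0^\infty\eta_k e^{-2s(1-t)\eta_k}ds=\Pi_{\eta_k}/(2(1-t))\preceq I/(2(1-t))\), where \(\Pi_{\eta_k}\) is the support projector; the kernel of \(\eta_k\) is annihilated by the prefactor, so there is no divergence. For the first factor, the key step is to use \(\eta_k\preceq\Omega\) under the congruence by \(e^{-st\Omega}\). This gives \(e^{-st\Omega}\eta_k e^{-st\Omega}\preceq\Omega e^{-2st\Omega}\), whose integral is \(\Pi_\Omega/(2t)\preceq I/(2t)\). Multiplying the two factors yields \(|\langle u|Z_k|v\rangle|\le(4t(1-t))^{-1/2}\), and taking the supremum over \(u,v\) gives Eq.~\eqref{eq:app-bff-sharp-norm}.

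The main obstacle is choosing where to apply Cauchy--Schwarz. The split has to route \(\eta_k\) onto the \(\Omega\)-side, where only the domination \(\eta_k\preceq\Omega\) is available, and also onto the \(\eta_k\)-side, where it is tamed by its own exponential. The commuting scalar case \(\lambda/(t\omega+(1-t)\lambda)\) with AM--GM, \(t\omega+(1-t)\lambda\ge2\sqrt{t(1-t)\omega\lambda}\), shows this symmetric split is what produces the factor \(2\sqrt{t(1-t)}\). For sharpness, I would check that the bound is approached when \(\lambda/\omega=t/(1-t)\) in noncommuting examples; the commuting case alone only reaches \(1\). Sharpness is not needed for the inequality itself.
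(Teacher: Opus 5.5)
Your proof is correct, but it takes a different route from the paper's. The paper never solves the Sylvester equation. It picks a norm-attaining singular pair $Z_kv=su$, $Z_k^\dagger u=sv$ and takes the $(u,v)$ matrix element of Eq.~\eqref{eq:app-bff-Sylvester}. This gives the scalar identity $s\,[t\langle u|\Omega|u\rangle+(1-t)\langle v|\eta_k|v\rangle]=|\langle u|\eta_k|v\rangle|$. The right side is then bounded by $\sqrt{\langle u|\Omega|u\rangle\langle v|\eta_k|v\rangle}$, using Cauchy--Schwarz for $\eta_k$ together with $\eta_k\preceq\Omega$, and AM--GM finishes the bound. That argument is three lines long and applies to any solution directly, with no existence, uniqueness, or convergence questions. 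Its only costs are that the norm must be attained by singular vectors, which is a finite-dimensional fact. The degenerate case $\langle v|\eta_k|v\rangle=0$ also needs a remark; it forces $s=0$, since $\langle u|\Omega|u\rangle>0$ on the support. Your route has a different shape. It first gets uniqueness from the disjoint spectra of $t\Omega$ and $-(1-t)\eta_k$ on $\mathrm{supp}\,\Omega$. It then writes the solution in closed form and bounds every matrix element uniformly. In your version the factor $2\sqrt{t(1-t)}$ appears without any AM--GM step, as the geometric mean of the two exactly evaluated integrals $1/(2t)$ and $1/(2(1-t))$, and no degenerate case arises. The price is heavier machinery and reliance on $\Omega$ being bounded below on its support for norm convergence of the integral. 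So both arguments are effectively finite-dimensional. Your suggested truncation would not by itself preserve the Sylvester equation, so that remark would need more care. Your closing comments on sharpness are speculative, but they play no role in the inequality.
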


\begin{proof}
Let \(Z_kv=su\) and \(Z_k^\dagger u=sv\) be singular vectors for
\(s=\|Z_k\|\). Put
\(a=\langle u|\Omega|u\rangle\) and
\(b=\langle v|\eta_k|v\rangle\). Taking the \(u,v\) matrix element of
Eq.~\eqref{eq:app-bff-Sylvester} gives
\[
s[ta+(1-t)b]
=
|\langle u|\eta_k|v\rangle|.
\]
Because \(0\preceq\eta_k\preceq\Omega\), the right-hand side is at most
\(\sqrt{ab}\). The arithmetic-geometric mean inequality gives
\(ta+(1-t)b\ge2\sqrt{t(1-t)ab}\), hence
\(s\le[2\sqrt{t(1-t)}]^{-1}\) whenever \(ab>0\). The remaining support
cases follow by restriction to the support of \(\Omega\) and continuity.
\end{proof}

For every Eve-side test word \(A\), the weak form used in the moment
relaxation is
\begin{equation}
\label{eq:app-bff-weak-Sylvester}
\Tr(\eta_kA^\dagger)
+(1-t)\Tr(\eta_kA^\dagger Z_k)
+t\Tr(\Omega Z_kA^\dagger)=0.
\end{equation}
No Bob word is allowed in \(A\). Including Bob-conditioned tests would
impose a stronger equation than the Eve-side BFF stationarity condition
and is therefore not part of the certificate.

The right-endpoint Gauss--Radau contribution is absorbed analytically.
The active interior nodes have positive coefficients
\(\alpha_i=w_i/(t_i\ln2)\), and the finite-node lower bound has the form
\begin{equation}
\label{eq:app-bff-quadrature}
H(K|E)\ge\sum_i\alpha_i(1+\beta_{t_i}).
\end{equation}
Optimizing each node independently is conservative because
\begin{equation}
\label{eq:app-bff-independent-nodes}
\sum_i\alpha_i\inf_{\mathcal F_T}(1+\beta_{t_i})
\le
\inf_{\mathcal F_T}\sum_i\alpha_i(1+\beta_{t_i}).
\end{equation}
This is the certification direction used in the production curves.

\subsection{Noncommutative operator algebra and exact word set}

Write \(z_k=Z_k\) and \(d_k=Z_k^\dagger\). The algebra is generated by
\(B_0,B_1,B_2,z_0,d_0,z_1,d_1\), with
\begin{equation}
\label{eq:app-bff-algebra}
B_y^2=B_y,
\qquad [B_y,z_k]=[B_y,d_k]=0,
\qquad d_k=z_k^\dagger.
\end{equation}
Distinct Bob settings are never reordered, and no commutation is imposed
between the two Eve branches. The implemented PM-BFF word set is stated
explicitly. First define the two Bob-word sets
\[
\mathcal B_8=
\{1,B_0,B_1,B_2,B_0B_1,B_1B_0,B_0B_1B_0,B_1B_0B_1\},
\]
and
\[
\mathcal B_6=
\{1,B_0,B_1,B_2,B_0B_1,B_1B_0\}.
\]
The PM-BFF rows consist of the words in \(\mathcal B_8\); the same
words followed by one Eve symbol \(z_k\) or \(d_k\); and the words in
\(\mathcal B_6\) followed by the two-symbol suffix \(d_kz_k\) or
\(z_kd_k\). Equivalently,
\begin{equation}
\label{eq:app-bff-exact-basis}
\begin{aligned}
\mathcal W_{\rm BFF}
={}&
\mathcal B_8
\cup
\bigcup_{k=0}^{1}\bigcup_{s\in\{z_k,d_k\}}
\{ws:w\in\mathcal B_8\}
\cup{}
\bigcup_{k=0}^{1}\bigcup_{s\in\{d_kz_k,z_kd_k\}}
\{ws:w\in\mathcal B_6\}.
\end{aligned}
\end{equation}
Here \(ws\) denotes the word obtained by appending the suffix \(s\)
to the Bob word \(w\). Thus the row count is \(8+2\cdot2\cdot8+2\cdot2\cdot6=64\). The
product closure of this basis contains \(1458\) reduced words, grouped
into \(761\) adjoint orbits. These numbers specify the relaxation used
in the production run. For every positive family \(F\),
\begin{equation}
\label{eq:app-bff-main-moment}
\Gamma^F_{u,v}=L_F(u^\dagger v),
\qquad u,v\in\mathcal W_{\rm BFF},
\qquad \Gamma^F\succeq0.
\end{equation}

The five base positive families are
\(\bar\varrho,\varrho_{00},\varrho_{01},\varrho_{10},\varrho_{11}\).
They satisfy
\begin{equation}
\label{eq:app-bff-average-all-words}
L_{\varrho_x}(1)=L_{\bar\varrho}(1)=1,
\qquad
L_{\bar\varrho}(w)=\frac14\sum_xL_{\varrho_x}(w)
\end{equation}
for every retained word. As in the min-entropy relaxation,
\(L_{\bar\varrho}\) is not treated as a tracial state.

\subsection{Source covers and sharp localizers}

Every source cover is represented by a positive slack. For example,
\begin{equation}
\label{eq:app-bff-cover-slacks}
\Delta_{D,x}=\sigma_D-\frac14\varrho_x,
\qquad
\Delta_{A,z}=\sigma_A-\frac14\sum_{x\ne z}\varrho_x.
\end{equation}
The parity covers are defined analogously when the parity bound is not
the exact endpoint \(\Pi=1/2\). Moment positivity and norm localizers are
imposed on every active slack.

A separate PSD block for \(\sigma_D,\sigma_\Pi\), or \(\sigma_A\) is not
needed. Choose one right-hand side \(G_a\) appearing in a cover
\(\sigma\succeq G_a\). The certificate moments are reconstructed by
\begin{equation}
\label{eq:app-bff-certificate-reconstruction}
L_\sigma(w)=L_{\sigma-G_a}(w)+L_{G_a}(w).
\end{equation}
Both terms on the right are positive functionals, so the positivity of
\(\sigma\) follows. This saves a redundant block while keeping all
moments of the certificate available for trace bounds and shared covers.

Let \(c_t=[2\sqrt{t(1-t)}]^{-1}\). If a suffix \(s\) has length
\(\ell\), the sharp norm bound gives \(\|s\|\le c_t^\ell\). Equivalently,
the localizer can be written with the coefficient \([4t(1-t)]^\ell\):
\begin{equation}
\label{eq:app-bff-localizer}
L_F(q^\dagger q)-[4t(1-t)]^\ell L_F(q^\dagger s^\dagger sq)\ge0.
\end{equation}
The production basis has eight suffix sectors per positive family or
cover slack: four length-one sectors of size \(8\times8\), and four
length-two sectors of size \(6\times6\). These are the sharp-cq
localizers generated by Lemma~\ref{lem:app-bff-sharp-norm}.

\subsection{Complete physical hierarchy for Bob's settings}

The entropy basis is supplemented by the complete Bob degree-three word
set
\begin{equation}
\label{eq:app-bff-Bob3-words}
\mathcal W_{B,3}
=
\{\operatorname{red}(w):w\in\{B_0,B_1,B_2\}^{\ast},\ |w|\le3\}.
\end{equation}
This set has \(22\) rows. Its product closure contains \(190\) reduced
Bob words, grouped into \(106\) adjoint orbits. The \(89\) orbit
representatives not already present in the PM-BFF closure are introduced
as extra variables, giving \(164\) additional real scalar variables per
positive family. For every positive family or cover slack,
\begin{equation}
\label{eq:app-bff-Bob3-block}
\Gamma^{F,B3}_{u,v}=L_F(u^\dagger v),
\qquad u,v\in\mathcal W_{B,3},
\qquad \Gamma^{F,B3}\succeq0.
\end{equation}
This block forces all three Bob settings, including the key setting
\(B_2\), to belong to one common physical realization. The Bob-block
moments shared with Eq.~\eqref{eq:app-bff-main-moment} are identified,
so the entropy hierarchy and the Bob hierarchy cannot assign different
values to the same word.

\subsection{Weak Sylvester system and node objective}

The weak Sylvester equations are imposed only on Eve-side test words
visible in the product closure. For the \(K=0\) branch and the \(K=1\)
branch the implemented equations are
\begin{equation}
\label{eq:app-bff-Sylvester-functional}
\begin{aligned}
0={}&L_{\varrho_{00}}(A^\dagger)
 +(1-t)L_{\varrho_{00}}(A^\dagger z_0)
 +tL_{\varrho_{00}+\varrho_{11}}(z_0A^\dagger),
\\
0={}&L_{\varrho_{11}}(A^\dagger)
 +(1-t)L_{\varrho_{11}}(A^\dagger z_1)
 +tL_{\varrho_{00}+\varrho_{11}}(z_1A^\dagger).
\end{aligned}
\end{equation}
The production closure gives \(11\) complex equations per branch, hence
\(22\) complex equations, or \(44\) real scalar equalities, per active
node. The test directions include \(A=z_k\), which gives an independent
reconstruction of the corresponding branch contribution.

The node objective is
\begin{equation}
\label{eq:app-bff-node-objective}
\begin{aligned}
\beta_t={}&\frac12\Bigl[
L_{\varrho_{00}}(z_0)+L_{\varrho_{00}}(d_0)
 +(1-t)L_{\varrho_{00}}(d_0z_0)
\\
&\hspace{26mm}
+tL_{\varrho_{00}+\varrho_{11}}(z_0d_0)
\Bigr]
\\
&+\frac12\Bigl[
L_{\varrho_{11}}(z_1)+L_{\varrho_{11}}(d_1)
 +(1-t)L_{\varrho_{11}}(d_1z_1)
\\
&\hspace{26mm}
+tL_{\varrho_{00}+\varrho_{11}}(z_1d_1)
\Bigr].
\end{aligned}
\end{equation}
This is exactly Eq.~\eqref{eq:app-bff-node-functional} written in
moment form for the two retained-key branches \(00\) and \(11\).

\subsection{Block counts and numerical validation}

Each positive family or source-cover slack carries three kinds of
constraints: the \(64\times64\) PM-BFF moment block, the complete
\(22\times22\) Bob degree-three block, and the sharp localizers generated
by the retained auxiliary suffixes. The same 64-word PM-BFF basis, Bob
block, localizer sectors, and Eve-only Sylvester equations are used for
source depolarization, receiver-side depolarization, direct-sum leakage,
the fixed assumption \(D\leq1/2\), and the \(\nu_{\rm M}=1\) endpoints.
The source task changes only the cover slacks and the trace constraint.
At \(\Pi=1/2\), Eq.~\eqref{eq:app-security-exact-parity} replaces the
parity certificate and its two covers.

For a task with \(c\) source covers, the program contains \(5+c\)
positive block families: the five physical state families and the
\(c\) source-cover slacks. Each such family contributes one main PM-BFF
block, one Bob degree-three block, and eight sharp localizer blocks. The
resulting block counts are listed in
Table~\ref{tab:app-pmbff-block-counts}. The parity rows distinguish
the general parity-cover case from the exact endpoint \(\Pi=1/2\) used
in the parity-oblivious runs. These counts are generated from the same
task constructor that produces the numerical programs and are
independently checked before production runs.

\begin{table*}[t]
\begingroup
\footnotesize
\begin{ruledtabular}
\begin{tabular}{lrrrrrrrrr}
source assumption & dual fam. & covers & pos. fam. & main & local. & Bob & PSD & PSD scalars & moment scalars\\
\hline
$D$ & 1 & 4 & 9 & 9 & 72 & 9 & 90 & 90,918 & 9,732 \\
$\Pi$ cover, $\Pi>1/2$ & 1 & 2 & 7 & 7 & 56 & 7 & 70 & 70,714 & 9,732 \\
exact $\Pi=1/2$ & 0 & 0 & 5 & 5 & 40 & 5 & 50 & 50,510 & 8,110 \\
$D\oplus A$ & 2 & 8 & 13 & 13 & 104 & 13 & 130 & 131,326 & 11,354 \\
$\Pi\oplus A$ & 2 & 6 & 11 & 11 & 88 & 11 & 110 & 111,122 & 11,354 \\
\end{tabular}
\end{ruledtabular}
\caption{Task-dependent block counts for the PM-BFF certificate. ``dual fam.''
counts auxiliary dual certificate families; ``covers'' counts positive
source-cover slacks; and ``pos. fam.'' counts the base preparation and
average families together with the active cover slacks. Each positive
family carries one \(64\times64\) PM-BFF moment block, one \(22\times22\)
Bob degree-three block, and the listed sharp-localizer blocks. ``PSD
scalars'' gives the real scalarized size of all PSD cones; ``moment
scalars'' counts the independent real moment variables from the PM-BFF
moments and the extra Bob degree-three moments. At the exact
parity-oblivious endpoint, the equality
Eq.~\eqref{eq:app-security-exact-parity} replaces the parity certificate
and its two covers.}
\label{tab:app-pmbff-block-counts}
\endgroup
\end{table*}

At every node, the reported lower-bound contribution is the minimum of
the solver primal objective, solver dual objective, and reconstructed
objective, shifted downward by \(10^{-6}\). A node is accepted only when
the solver status is optimal and independent reconstruction verifies
state normalization, average-state identities, the complete observed
table, task trace bounds, source-cover positivity, localizer positivity,
Bob-block positivity, Sylvester residuals, and objective consistency to
tolerance \(5\times10^{-6}\). The guarded value is converted to
\(\Delta H_i=\alpha_i(1+\beta_i)\), and the \(34\) active node
contributions are summed. Equation~\eqref{eq:app-bff-independent-nodes}
is the reason this nodewise procedure gives a valid entropy lower bound.

\section{Numerical protocol and critical-point extraction}
\label{app:keyrate-numerics}

The operator programs are specified in the preceding appendices. This
section records the numerical parameters, validation criteria, and
critical-point convention used for the displayed finite-level bounds.
The security calculations use the complete three-setting behavior
\(\mathbf p\) of Eq.~\eqref{eq:security-behavior}. For the plotted
families, the honest RAC entries and operational task values are those
in Eqs.~\eqref{eq:bb84-visibility-task-values},
\eqref{eq:joint-source-receiver-rac-score}, and
\eqref{eq:bb84-leakage-task-values}, with the retained branch fixed by
Eq.~\eqref{eq:key-branch} and \(Q_Z\) evaluated by
Eq.~\eqref{eq:key-qber}. Every source assumption, including each leakage
task, is optimized independently.

For the source-depolarized family, the parameter \(\nu\) specifies
Alice's emitted ensemble and therefore the operational source bound. The
receiver-side parameter \(\nu_{\rm M}\) depolarizes the reference
decoders used to generate the observed table after the source family is
chosen. Hence receiver-side depolarization changes the behavior
\(\mathbf p\), but it does not change the source-task value used as the
bound. For direct-sum leakage, the leakage parameter specifies the
emitted ensemble and the key-setting branch is error free in the honest
table.

The min-entropy implementation uses the \(32\)-word reduced
degree-three basis in Eq.~\eqref{eq:app-min-word-set}, its \(284\)-word
product closure and \(164\) adjoint orbits, and the complete observed
table. Its Bob-only subset has \(22\) words. The same basis, reduction
rules, source-cover generator, guessing objective, and acceptance tests
are used for every min-entropy curve and every min-entropy critical
point in the manuscript.

The PM-BFF implementation uses the 64-word basis in
Eq.~\eqref{eq:app-bff-exact-basis}, its \(1458\)-word product closure,
\(761\) adjoint orbits, the complete \(22\)-row Bob degree-three block of
Eq.~\eqref{eq:app-bff-Bob3-words}, eight sharp localizers per positive
family or source-cover slack, and \(11\) complex Eve-only Sylvester
equations per key branch. The right-endpoint Gauss--Radau term is
absorbed analytically, leaving exactly \(34\) active interior nodes
\(0<t_i<1\). This identical word basis, Bob block, localizer construction,
Sylvester system, and \(34\)-node quadrature are used for every PM-BFF
curve and every PM-BFF critical point. The source task changes only its
cover constraints and trace bound, giving the task-dependent block
counts in Table~\ref{tab:app-pmbff-block-counts}. At \(\Pi=1/2\), the
exact retained-moment parity equality replaces the auxiliary parity
certificate and its two covers. Independent static verification
reproduces the word-basis, closure, orbit, localizer, Bob-block,
Sylvester, and task-dependent block counts before numerical production.

The semidefinite programs are solved with CVXPY~1.8.2 and MOSEK~11. A
point is retained only when the solver status is optimal and independent
reconstruction verifies state normalization, average-state identities,
the complete observed table, task trace bounds, source-cover
positivity, localizer positivity, Bob-block positivity, all Sylvester
equations where present, and objective consistency to tolerance
\(5\times10^{-6}\). The PM-BFF node contribution is the minimum of the
primal, dual, and reconstructed objectives, shifted downward by
\(10^{-6}\); the min-entropy guessing-probability bound is the dual
objective shifted upward by \(10^{-6}\). The final key-rate lower bound
is recomputed from the guarded entropy contribution and \(h_2(Q_Z)\).
Every displayed marker is therefore an accepted finite-level certificate
generated by the fixed implementation specified above for its certificate
type.

The certification is numerical at the stated tolerance; interval
arithmetic and rational reconstruction are not used. A critical
visibility quoted in the main text is a local root estimate constrained
by the nearest accepted samples with nonpositive and positive key-rate
lower bounds. For the source-only \(D^{(\nu)}\) curves, the PM-BFF
bracket is \([0.72,0.74]\) and the min-entropy bracket is
\([0.81,0.82]\). A linear root estimate from the nearest accepted
samples gives \(0.7202456\) for PM-BFF and \(0.8161951\) for
min-entropy; the displayed values are \(0.7202\) and \(0.8162\). The
receiver-side-depolarized points use the same nearest-bracket root-estimate convention. Every root summarizes a crossing and is not
treated as an additional SDP evaluation. The direct-sum leakage lower
bounds are positive at every sampled \(\ell<1\) and reproduce the
zero-rate endpoint at \(\ell=1\) within the same tolerance.

\section{Affine PM-BFF certificates as finite-key inputs}
\label{app:finite-key-geat}

This appendix records what part of the PM-BFF calculation can be used in
a future finite-key analysis. The manuscript reports asymptotic i.i.d.
rates; the point proved here is narrower. An accepted PM-BFF dual
certificate gives an affine one-round entropy bound of the form required
by generalized entropy accumulation. Numerical finite-\(n\) rates require
a specified parameter-estimation test, error-correction transcript,
privacy-amplification map, and smoothing parameters.

\begin{lemma}[Affine lower bounds from PM-BFF dual certificates]
\label{lem:finite-key-affine-pmbff}
Fix a source assumption \(T\), a source bound \(\tau\), and an active
BFF node \(t_i\). Any feasible dual solution of the finite PM-BFF node
relaxation defines an affine function
\(\ell_i(\mathbf p,\tau)\) such that, for every physical one-round
Bob--Eve realization compatible with the complete behavior \(\mathbf p\)
and the lifted source bound \(T\leq\tau\),
\begin{equation}
\label{eq:finite-key-node-affine}
\ell_i(\mathbf p,\tau)
\leq
\beta_{T,i}^{\star}(\mathbf p,\tau).
\end{equation}
Consequently,
\begin{equation}
\label{eq:finite-key-affine-entropy}
f_T(\mathbf p,\tau)
:=
\sum_i\alpha_i\bigl[1+\ell_i(\mathbf p,\tau)\bigr]
\leq
\mathsf H_T(\mathbf p,\tau).
\end{equation}
\end{lemma}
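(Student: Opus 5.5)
The plan is to use weak duality for the finite node relaxation, written in conic form. Fix \(T\), \(\tau\) and the active node \(t_i\). The relaxation defining the finite-level node value minimizes the linear objective \eqref{eq:app-bff-node-objective} over the moment variables \(y\). The constraints are of two kinds. The first kind are linear equalities whose right-hand sides are the data: the normalizations \(L_{\varrho_x}(1)=1\), the average-state identities, the twelve table entries \(L_{\varrho_x}(B_y)=p_{xy}\), and the Eve-only Sylvester equations \eqref{eq:app-bff-Sylvester-functional} (homogeneous). The second kind are conic constraints: positivity of the PM-BFF blocks, the Bob degree-three blocks and the localizers \eqref{eq:app-bff-localizer} for every base family and cover slack. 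The trace bound \eqref{eq:app-security-task-traces} is an inequality with right-hand side \(\tau\) or \(2\tau\). I will write this compactly as \(\min\{c^{\top}y:\mathcal A y=b(\mathbf p,\tau),\ \mathcal G y\succeq_{\mathcal K}0\}\), where \(b\) depends affinely on \((\mathbf p,\tau)\) and \(c\), \(\mathcal A\), \(\mathcal G\) do not depend on the data. This data-independence is the point to verify first.

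Take any dual-feasible multiplier pair \((\lambda,S)\) with \(S\in\mathcal K^{*}\), \(\mathcal A^{*}\lambda+\mathcal G^{*}S=c\), and the sign condition on the multiplier of the trace inequality. Define \(\ell_i(\mathbf p,\tau):=\lambda^{\top}b(\mathbf p,\tau)\); this is affine in \((\mathbf p,\tau)\). For every \(y\) that is feasible for the relaxation with data \((\mathbf p,\tau)\), the standard computation gives
\[
c^{\top}y=\lambda^{\top}\mathcal A y+\langle S,\mathcal G y\rangle\ge\lambda^{\top}b(\mathbf p,\tau).
\]
For the trace inequality, the sign constraint on its multiplier turns the inequality in the right direction. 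The key point is that the dual feasibility conditions do not involve \((\mathbf p,\tau)\). The same \((\lambda,S)\) therefore remains dual feasible for every data point, so \(\ell_i\) is a valid global affine minorant and not only a bound at the data where the dual was computed.

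Next I connect the relaxation to physical realizations. Take a physical Bob--Eve realization in \(\mathcal F_T(\mathbf p,\tau)\), together with optimal cover certificates, which exist by Proposition~\ref{prop:channel-lift-source-assumptions} and strong duality \eqref{eq:source-tasks-main}. Take also Sylvester-optimal auxiliaries \(Z_k\) for \(\beta_{t_i}\). These generate moments \(L_F(w)=\Tr(Fw)\) that satisfy every constraint. The Sylvester equations hold because the minimizer satisfies \eqref{eq:app-bff-Sylvester}, and the localizers hold by Lemma~\ref{lem:app-bff-sharp-norm}. The objective evaluated on these moments equals \(\beta_{t_i}(\rho_0^E,\rho_1^E)\).

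This step is the main obstacle. The infimum defining \(\beta_{t_i}\) might not be attained, or the minimizer might be unbounded off the support of \(\Omega\). I would handle this as the paper does: work with the regularized interior-node problem, or restrict to \(\operatorname{supp}\Omega\) and extend \(Z_k\) by zero there. The terms in the node functional vanish off the support, so this extension does not change the value. A near-optimal sequence then gives the moment point with objective arbitrarily close to \(\beta_{t_i}\). Hence \(\ell_i\le\beta_{t_i}(\rho_0^E,\rho_1^E)\) for every feasible realization. Taking the infimum over \(\mathcal F_T(\mathbf p,\tau)\) gives \eqref{eq:finite-key-node-affine}.

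Finally, multiply by \(\alpha_i>0\) and sum over the nodes. Using \eqref{eq:app-bff-independent-nodes} together with Proposition~\ref{prop:pmbff-entropy-certificate} gives
\[
f_T\le\sum_i\alpha_i\bigl[1+\beta_{T,i}^{\star}\bigr]=\mathsf H_{T,m}^{\rm BFF}\le\mathsf H_T,
\]
which is \eqref{eq:finite-key-affine-entropy}.
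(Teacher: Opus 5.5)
Your proposal is correct and follows essentially the same route as the paper. The finite node relaxation is an outer approximation of the physical one-round set, its dual feasibility conditions do not involve $(\mathbf p,\tau)$, so weak duality turns any dual-feasible point into the affine minorant $\lambda^{\top}b(\mathbf p,\tau)$, and multiplying by the positive weights $\alpha_i$ and invoking Proposition~\ref{prop:pmbff-entropy-certificate} gives the entropy bound.

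Your account of the embedding is more explicit than the paper's one-line assertion, with one point to tighten. The Sylvester relations are imposed as exact equalities, so a merely near-optimal sequence of auxiliaries is not feasible for the relaxation. You should therefore use an attained stationary minimizer rather than a limiting argument: for example, the one supported on $\operatorname{supp}\Omega$, which exists for finite-dimensional Eve or for the regularized node problem, and where off-support components of $Z_k$ only add nonnegative terms.
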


\begin{proof}
For fixed \(t_i\), the PM-BFF node program is a minimization of the
linear moment form in Eq.~\eqref{eq:app-bff-node-objective}, subject to
linear equalities fixing the complete behavior \(\mathbf p\), the source
trace bound \(T\leq\tau\), and positive-semidefinite moment and localizer
constraints. Its finite relaxation is an outer approximation of the
physical one-round feasible set. Therefore its optimum is no larger than
the exact physical infimum \(\beta_{T,i}^{\star}(\mathbf p,\tau)\).

A feasible dual solution has constraints independent of the numerical
values of \(\mathbf p\) and \(\tau\); these values enter only through the
dual objective, which is affine in \((\mathbf p,\tau)\). Weak duality gives
an affine lower bound on the relaxed node optimum for every admissible
\((\mathbf p,\tau)\), and hence Eq.~\eqref{eq:finite-key-node-affine}.
Multiplying by the positive quadrature weights \(\alpha_i\), summing over
the active nodes, and using Proposition~\ref{prop:pmbff-entropy-certificate}
gives Eq.~\eqref{eq:finite-key-affine-entropy}.
\end{proof}

The function \(f_T\) is the one-round entropy lower bound used as a
min-tradeoff function in entropy-accumulation arguments. The next
statement records the conditional reduction, without choosing the
finite-size constants.

\begin{proposition}[GEAT input from affine certificates]
\label{prop:finite-key-reduction}
Consider a specified sequential implementation satisfying the hypotheses
of a generalized entropy-accumulation theorem. In each round, Alice's
emitted ensemble obeys the same scalar source bound \(T\leq\tau\), Bob's
setting choice and public announcements define a complete one-round
behavior, and Eve may keep arbitrary quantum memory subject to the causal
ordering of the rounds. If the parameter-estimation test accepts only
behaviors on which the affine bound \(f_T\) of
Eq.~\eqref{eq:finite-key-affine-entropy} is valid, then \(f_T\) is an
admissible min-tradeoff function. The chosen entropy-accumulation theorem
therefore gives a bound of the form
\begin{equation}
\label{eq:finite-key-geat-form}
H_{\min}^{\varepsilon_s}(K^n|E\mathsf C)_{|\Omega}
\geq
n f_T(\bar{\mathbf p},\tau)
-
\Delta_{\rm GEAT}(n,\varepsilon_s,\varepsilon_{\rm PE},f_T),
\end{equation}
where \(\Omega\) is the acceptance event, \(\mathsf C\) denotes the public
classical transcript, \(\bar{\mathbf p}\) is the accepted empirical
behavior, and \(\Delta_{\rm GEAT}\) is the finite-size correction given
by the chosen generalized entropy-accumulation theorem.
After one-way error correction with leakage \(\mathrm{leak}_{\rm EC}\)
and two-universal privacy amplification, the corresponding finite-key
analysis certifies any key length satisfying
\begin{equation}
\label{eq:finite-key-length}
\ell
\leq
H_{\min}^{\varepsilon_s}(K^n|E\mathsf C)_{|\Omega}
-
\mathrm{leak}_{\rm EC}
-
2\log_2\frac{1}{\varepsilon_{\rm PA}}
\end{equation}
with the smoothing, correctness, and privacy-amplification errors of the
chosen finite-key analysis.
\end{proposition}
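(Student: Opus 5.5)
The plan is to verify the two hypotheses of a generalized entropy-accumulation theorem (GEAT), in the form of Metger--Fawzi--Sutter--Renner, and then chain it with the standard leftover-hash argument.

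\textbf{Step 1: the channel structure.} I first model the protocol as a sequence of GEAT channels $\mathcal M_i:R_{i-1}E_{i-1}\to K_iC_iR_iE_i$. Here $R_i$ is Bob's memory, $E_i$ is Eve's memory, $C_i$ is the public transcript of round $i$, and $K_i$ is the raw key register. The three things to check are as follows.

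\begin{itemize}
\item The non-signalling condition holds, because Alice's fresh inputs $x$ and Bob's settings $y$ are sampled independently in each round.
\item The announced data $C_i$ are a function of $K_i$ and the public test outcomes.
\item In every round, Alice's emitted state lies in the ensemble satisfying $T\le\tau$. Eve's per-round action is an arbitrary channel applied to that emitted system and her memory, and this channel is independent of the current $x$.
\end{itemize}

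Conditioned on the past, each round therefore has the form of Eq.~\eqref{eq:channel-lift-security}. Proposition~\ref{prop:channel-lift-source-assumptions} then places the one-round extension, with Eve's memory included in $E$, in the lifted feasible set $\mathcal F_T(\mathbf p,\tau)$ for whatever one-round behavior $\mathbf p$ is induced.

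\textbf{Step 2: the min-tradeoff function.} Lemma~\ref{lem:finite-key-affine-pmbff} gives $H(K_i|E_iC_i\ldots)\ge f_T(\mathbf p,\tau)$ for every such one-round state. The lemma is stated for $H(K|E)$ with $E$ arbitrary, so the extra conditioning on past registers and public data is absorbed into Eve's system. The one subtlety is that $C_i$ contains test-round data correlated with $K_i$ only through public information. I would handle this by conditioning on the public setting choices in the usual test/key split, so that key rounds contribute $H(K|E)$ and test rounds contribute zero entropy.

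Because $f_T$ is affine in $\mathbf p$, I rewrite it as a function of the frequency distribution of the test statistics. This is the standard construction of a min-tradeoff function from test-round frequencies: infrequent testing rescales the function, and the GEAT correction depends on $\|\nabla f\|$, $\max f$ and $\min f$. All three are finite because $f_T$ is affine with fixed dual coefficients. The acceptance hypothesis in the statement is exactly what ensures $f_T$ lower-bounds the entropy on every behavior consistent with passing.

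\textbf{Step 3: apply the GEAT and privacy amplification.} Invoking the GEAT on the event $\Omega$ yields Eq.~\eqref{eq:finite-key-geat-form}, with $\Delta_{\rm GEAT}$ as given by that theorem. For the key length, the one-way error-correction transcript reduces the smooth min-entropy by at most $\mathrm{leak}_{\rm EC}$, using the chain rule for smooth min-entropy with classical registers. The quantum leftover hash lemma with two-universal hashing then gives secrecy $\varepsilon_{\rm PA}$ whenever the bound in Eq.~\eqref{eq:finite-key-length} holds. Correctness comes from hash-based verification.

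\textbf{Main obstacle.} I expect the hardest part to be Step 1's claim that the per-round source bound survives conditioning on Eve's quantum memory and on earlier rounds. The lift needs the $x$-independence of Eve's channel relative to the fresh $x$ of that round. Stating this precisely within the GEAT non-signalling framework is the delicate point, so I would spell out the Stinespring form of each round's attack explicitly to make it rigorous.
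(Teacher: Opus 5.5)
Your proposal is correct and follows the same route as the paper. Proposition~\ref{prop:channel-lift-source-assumptions} carries the per-round source bound to every Bob--Eve extension, Lemma~\ref{lem:finite-key-affine-pmbff} supplies the affine min-tradeoff function, the GEAT gives Eq.~\eqref{eq:finite-key-geat-form}, and the leftover-hash step after subtracting $\mathrm{leak}_{\rm EC}$ gives Eq.~\eqref{eq:finite-key-length}; your test/key infrequent-sampling construction, chain-rule handling of the error-correction transcript, and flagging of the per-round lift under Eve's memory (discharged, as in the paper, by the hypothesis that the bound holds for the full emitted ensemble in each round) only spell out details the paper leaves to the cited GEAT references.
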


\begin{proof}
By Lemma~\ref{lem:finite-key-affine-pmbff}, \(f_T\) lower-bounds the
one-round conditional von Neumann entropy for every physical realization
compatible with the complete one-round behavior and the source bound.
The source bound is imposed on the full emitted ensemble in each round,
and Proposition~\ref{prop:channel-lift-source-assumptions} lifts it to
any Bob--Eve extension produced by an input-independent channel. Thus the
one-round maps of the sequential protocol satisfy the entropy lower-bound
condition required of a min-tradeoff function. The selected generalized
entropy-accumulation theorem then gives
Eq.~\eqref{eq:finite-key-geat-form}
\cite{MetgerFawziSutterRenner2024GEAT,MetgerRenner2023GEATQKDSecurity}.
The generalized framework applies directly to prepare-and-measure QKD
protocols, so no entanglement-based reformulation is needed
\cite{MetgerRenner2023GEATQKDSecurity}. Finally, Eq.~\eqref{eq:finite-key-length}
is the standard leftover-hash step after subtracting the public
error-correction leakage. This proves that the accepted PM-BFF dual certificates provide the
finite-key input once the protocol-level finite-size choices are fixed.
\end{proof}

Equations~\eqref{eq:finite-key-geat-form} and
\eqref{eq:finite-key-length} are not evaluated numerically in this work
and should not be read as finite-\(n\) rate curves for the present
implementation.
They identify the object that must be retained from the SDP solver: the
feasible dual coefficients defining the affine function \(f_T\), not only
the reported entropy value. The Brown--Fawzi--Fawzi variational
construction is compatible with this use because the finite-node PM-BFF
certificate is itself a lower bound on the conditional von Neumann
entropy
\cite{BrownFawziFawzi2021ConditionalEntropies,BrownFawziFawzi2024ConditionalVonNeumannEntropy}.
Rényi-entropy accumulation refinements and finite-size
prepare-and-measure analyses provide possible sharper instantiations,
but no finite-size optimization is performed here
\cite{ArqandHahnTan2025RenyiEAT,KaminArqandGeorgeLuetkenhausTan2025FiniteSizePM}.

\section{Operational origins of the source assumptions}
\label{app:operational-origins}

This appendix records three physical routes to the operational source assumptions: optical photon-number constraints, bounded Hilbert-space dimension, and Bell remote preparation.

\subsection{Photon-number constraints and phase-randomized weak coherent sources}
\label{app:wcs-origin}

We assume trusted complete randomization of the global optical phase, with the random phase and every corresponding record unavailable to the receiver and Eve. A bound on the mean value of a trusted observable provides a natural alternative to a Hilbert-space dimension assumption. In optical prepare-and-measure protocols, the relevant carrier space is the infinite-dimensional Fock space and the mean photon number is a directly motivated choice \cite{VanHimbeeckWoodheadCerfGarciaPatronPironio2017NaturalAssumptions}. Bounds on the vacuum or non-vacuum contribution give closely related source restrictions and can be converted into bounds on the one-shot information available from the preparation ensemble \cite{PauwelsPironioTavakoli2025InformationCapacity}. We now show that, after trusted randomization of the global optical phase, the same physical restriction also bounds state exclusion and hence the normalized composites used in the main text.

Consider four optical preparations with total-photon-number distributions $p_n(x)$. Complete randomization of the global optical phase removes coherence between distinct total-photon-number sectors. The emitted state can therefore be written as
\begin{equation}
\label{eq:app-phase-randomized-source}
\rho_x
=
\sum_{n=0}^{\infty}p_n(x)\rho_x^{(n)},
\qquad
\rho_x^{(0)}
=
\ket{\mathrm{vac}}\!\bra{\mathrm{vac}},
\end{equation}
where the vacuum state is independent of the value $x$. For a weak coherent source of intensity $\mu_x$,
\begin{equation}
\label{eq:app-wcs-poisson}
p_n(x)
=
e^{-\mu_x}\frac{\mu_x^n}{n!},
\end{equation}
so continuous randomization of the global optical phase makes the coherent pulse an incoherent Poisson mixture of Fock sectors \cite{CaoZhangLoMa2015DiscretePhaseWCS}.

The relevant consequence is a common $x$-independent component. Let
\begin{equation}
\label{eq:app-common-component}
\rho_x
=
q\sigma+(1-q)\widetilde\rho_x,
\end{equation}
where $\sigma$ is a normalized state independent of $x$. For four-state discrimination, the common component contributes a measurement-independent probability $1/4$:
\begin{align}
D_{\mathcal Q}(\mathcal E)
&=
\frac14
\max_{\{N_x\}}
\sum_x
\Tr\!\left[\bigl(q\sigma+(1-q)\widetilde\rho_x\bigr)N_x\right]
\nonumber\\
&=
\frac q4
+
(1-q)D_{\mathcal Q}(\widetilde{\mathcal E}).
\label{eq:app-common-D}
\end{align}
For parity discrimination, each parity class contains two values of $x$, and hence
\begin{equation}
\label{eq:app-common-Pi}
\Pi_{\mathcal Q}(\mathcal E)
=
\frac q2
+
(1-q)\Pi_{\mathcal Q}(\widetilde{\mathcal E}).
\end{equation}
For state exclusion, each reported value is correct for three of the four preparations. Therefore,
\begin{align}
A_{\mathcal Q}(\mathcal E)
&=
\frac14
\max_{\{G_z\}}
\sum_x\sum_{z\neq x}
\Tr\!\left[\bigl(q\sigma+(1-q)\widetilde\rho_x\bigr)G_z\right]
\nonumber\\
&=
\frac{3q}{4}
+
(1-q)A_{\mathcal Q}(\widetilde{\mathcal E}).
\label{eq:app-common-A}
\end{align}
Combining Eqs.~\eqref{eq:app-common-D} and \eqref{eq:app-common-A} gives the exact decomposition
\begin{align}
(D\oplus A)_{\mathcal Q}(\mathcal E)
&=
\frac q2
+
(1-q)(D\oplus A)_{\mathcal Q}(\widetilde{\mathcal E})
\nonumber\\
&\leq
1-\frac q2.
\label{eq:app-common-DA}
\end{align}
Likewise,
\begin{align}
(\Pi\oplus A)_{\mathcal Q}(\mathcal E)
&=
\frac{5q}{8}
+
(1-q)(\Pi\oplus A)_{\mathcal Q}(\widetilde{\mathcal E})
\nonumber\\
&\leq
1-\frac{3q}{8}.
\label{eq:app-common-PiA}
\end{align}
These are direct bounds on the normalized composites. Both bounds are tight given only the common-component weight $q$: the excess sector may encode the four values of $x$ into mutually orthogonal states, for which $D=\Pi=A=1$. The operational source assumption constrains only the corresponding normalized sum and does not impose the component bounds separately.

For a phase-randomized source whose non-vacuum contribution satisfies
\begin{equation}
\label{eq:app-nonvacuum-bound}
\Tr\!\left[
\bigl(I-\ket{\mathrm{vac}}\!\bra{\mathrm{vac}}\bigr)\rho_x
\right]
\leq
\omega
\qquad
\forall x,
\end{equation}
the common vacuum weight obeys $q\geq1-\omega$. Equation~\eqref{eq:app-common-DA} then yields
\begin{equation}
\label{eq:app-nonvacuum-DA}
(D\oplus A)_{\mathcal Q}(\mathcal E)
\leq
\frac{1+\omega}{2}.
\end{equation}
A mean-photon-number bound gives a slightly coarser version of the same conclusion. For a phase-randomized source,
\begin{equation}
1-p_0(x)
=
\sum_{n\geq1}p_n(x)
\leq
\sum_{n\geq1}n p_n(x)
=
\Tr(\hat n\rho_x).
\end{equation}
Thus the common vacuum weight obeys $q\geq\max\{0,1-\bar n_{\max}\}$. Consequently,
\begin{equation}
\label{eq:app-mean-photon-DA}
\begin{aligned}
(D\oplus A)_{\mathcal Q}(\mathcal E)
&\leq
1-\frac12\max\{0,1-\bar n_{\max}\},
\\
(\Pi\oplus A)_{\mathcal Q}(\mathcal E)
&\leq
1-\frac38\max\{0,1-\bar n_{\max}\}.
\end{aligned}
\end{equation}
For $\bar n_{\max}\leq1$, these reduce to $(D\oplus A)_{\mathcal Q}\leq(1+\bar n_{\max})/2$ and $(\Pi\oplus A)_{\mathcal Q}\leq5/8+3\bar n_{\max}/8$. This is the direct connection between the photon-number assumption and the exclusion-assisted source-task values.

For phase-randomized weak coherent preparations whose total emitted mean photon numbers, including every emitted degree of freedom capable of carrying the value of $x$, satisfy $\mu_x\leq\mu_{\max}$, Eq.~\eqref{eq:app-wcs-poisson} gives
\begin{equation}
p_0(x)=e^{-\mu_x}\geq e^{-\mu_{\max}}.
\end{equation}
Extracting the common vacuum weight $q=e^{-\mu_{\max}}$ in Eq.~\eqref{eq:app-common-component} gives
\begin{equation}
\label{eq:app-wcs-DA}
(D\oplus A)_{\mathcal Q}(\mathcal E)
\leq
1-\frac12e^{-\mu_{\max}}.
\end{equation}
The same argument bounds all optimized source task values considered in this work:
\begin{equation}
\label{eq:app-wcs-bundle}
\begin{aligned}
D_{\mathcal Q}(\mathcal E)
&\leq
1-\frac34e^{-\mu_{\max}},
\\
\Pi_{\mathcal Q}(\mathcal E)
&\leq
1-\frac12e^{-\mu_{\max}},
\\
A_{\mathcal Q}(\mathcal E)
&\leq
1-\frac14e^{-\mu_{\max}},
\\
(D\oplus A)_{\mathcal Q}(\mathcal E)
&\leq
1-\frac12e^{-\mu_{\max}},
\\
(\Pi\oplus A)_{\mathcal Q}(\mathcal E)
&\leq
1-\frac38e^{-\mu_{\max}}.
\end{aligned}
\end{equation}
Given trusted global-phase randomization, the calibrated common vacuum contribution bounds every source-task value used in the manuscript.

When the four intensities are equal, $\mu_x=\mu$, the photon-number weights are independent of the value of $x$. Since distinct photon-number sectors are orthogonal, the receiver may first resolve the total photon number and then use an optimal decoder in that sector. For every $T\in\{D,\Pi,A,D\oplus A,\Pi\oplus A\}$, consequently,
\begin{equation}
\label{eq:app-wcs-sector-decomposition}
T_{\mathcal Q}(\mathcal E_{\mu})
=
\sum_{n=0}^{\infty}
e^{-\mu}\frac{\mu^n}{n!}
T_{\mathcal Q}(\mathcal E_n).
\end{equation}
Equation~\eqref{eq:app-wcs-bundle} is the universal consequence obtained by retaining only the value of $x$-independent vacuum sector; additional information about the non-vacuum encodings can tighten these bounds.

The derivation requires the phase-randomized source description above or another trusted mechanism that removes coherence between distinct total-photon-number sectors. Known-phase or retained-phase-record implementations require a different source analysis \cite{LoPreskill2007NonrandomPhases}. A mean-photon-number expectation alone does not imply the common-component decomposition in Eq.~\eqref{eq:app-common-component}. The bounds concern Alice's emitted ensemble before any untrusted loss-conditioned postselection and must cover every emitted degree of freedom capable of carrying the value of $x$. These physical assumptions are used only to derive a scalar operational ceiling; the classical-frontier and security analyses impose that ceiling itself and do not otherwise assume a weak coherent source model.

\subsection{Dimension implies a guessing bound}

For four equiprobable states $\rho_x$ supported on a $d$-dimensional Hilbert space and any four-state discrimination POVM $\{N_x\}$,
\begin{equation}
\frac14\sum_x\Tr(\rho_xN_x)
\le
\frac14\sum_x\Tr N_x
=
\frac d4,
\end{equation}
because $0\preceq\rho_x\preceq I_d$ and $\sum_xN_x=I_d$. Optimizing over the POVM gives $D\le d/4$. This implication is one-way: the operational source assumption does not require a dimension-$d$ realization.

\subsection{Bell remote preparation implies parity obliviousness}

Consider a tripartite state $\rho^{ABE}$ and binary measurements $\{A_{a|s}\}_{a=0}^{1}$ on Alice's system. Her event $a|s$ remotely prepares the subnormalized Bob--Eve state
\begin{equation}
\sigma_{a|s}^{BE}
=
\Tr_A[(A_{a|s}\otimes I_{BE})\rho^{ABE}].
\end{equation}
No-signalling is the operator identity
\begin{equation}
\sum_a\sigma_{a|0}^{BE}
=
\sum_a\sigma_{a|1}^{BE}.
\end{equation}
For unbiased outcomes define $\varrho_{a|s}^{BE}=2\sigma_{a|s}^{BE}$ and relabel $x=(a,a\oplus s)$. The no-signalling identity becomes
\begin{equation}
\varrho_{00}^{BE}+\varrho_{11}^{BE}
=
\varrho_{01}^{BE}+\varrho_{10}^{BE},
\end{equation}
which is the exact joint Bob--Eve parity relation used at $\Pi=1/2$.

Let Bob use binary measurements $B_{b|y}$. The conditioned prepare-and-measure behavior is $q(b|x(a,s),y)=2p(a,b|s,y)$. Since $x_y=a\oplus sy$,
\begin{equation}
\begin{aligned}
R(q)
&=\frac18\sum_{a,s,y}q(b=a\oplus sy|x(a,s),y)\\
&=\frac14\sum_{s,y}p(a\oplus b=sy|s,y),
\end{aligned}
\end{equation}
so the RAC score equals the CHSH winning probability. For $s=0$, the states $00$ and $11$ correspond to Alice's outcomes $a=0,1$ and form the matched key branch. This establishes the analytical DI-to-SDI connection without making Bell structure an assumption of the operational protocol \cite{WrightFarkas2023BellContextualityMap}.

\bibliography{cite}

\end{document}